\newcommand{\gs}{{\mathrm{gs}}}
\newcommand{\hf}{{\mathrm{HF}}}
\newcommand{\vphi}{{\varphi}}           
\newcommand{\ol}{\overline}
\newcommand{\di}{\mathrm{d}}
\newcommand{\e}{\mathrm{e}}
\newcommand{\cA}{\mathcal{A}}
\newcommand{\cB}{\mathcal{B}}
\newcommand{\cC}{\mathcal{C}}
\newcommand{\cD}{\mathcal{D}}
\newcommand{\cE}{\mathcal{E}}
\newcommand{\cF}{\mathcal{F}}
\newcommand{\cK}{\mathcal{K}}
\newcommand{\cL}{\mathcal{L}}         
\newcommand{\cQ}{\mathcal{Q}}
\newcommand{\cR}{\mathcal{R}}
\newcommand{\cS}{\mathcal{S}}
\newcommand{\cW}{\mathcal{W}}
\newcommand{\cZ}{\mathcal{Z}}
\newcommand{\fH}{\mathfrak{H}}          
\newcommand{\fS}{\mathfrak{S}}
\newcommand{\fh}{\mathfrak{h}}
\newcommand{\rIm}{\mathop{\mathrm{Im}}}
\def\I{\text{i}}   
\renewcommand{\Re}{\operatorname{Re}}
\renewcommand{\Im}{\operatorname{Im}}
\newcommand{\Po}{\mathcal{P}}
\newcommand{\hgamma}{\widehat{\gamma}}
\newcommand{\hGamma}{\widehat{\Gamma}}
\newcommand{\tgamma}{\widetilde{\gamma}}
\newcommand{\Ex}{{\mathrm{Ex}}}
\newcommand{\dR}{\mathds{R}}
\newcommand{\1}{\mathds{1}}
\newcommand{\hh}{\mathds{h}}
\newcommand{\NN}{\mathds{N}}
\newcommand{\hNN}{\widehat{\mathds{N}}}
\newcommand{\VV}{\mathds{V}}
\newcommand{\dC}{\mathds{C}}
\newcommand{\Hf}{\mathds{H}}
\newcommand{\Laplace}{\it{\Delta}}
\newcommand{\vac}{\Omega}
\def\fc{{c}^{*}}
\def\fa{{c}}
\def\bc{{a}^{*}}
\def\ba{{a}}
\newcommand{\ra}{\rightarrow}
\newcommand{\tr}{\mathrm{tr}}
\def\abs#1{\left| #1 \right|}
\def\sn#1{\left\| #1 \right\|}
\def\bra#1{ \left< #1 \right|}
\def\ket#1{ \left| #1 \right>}
\newcommand{\eg}{e.g.}
\newcommand{\ie}{i.e.}
\newcommand\relphantom[1]{\mathrel{\phantom{#1}}}
\def\TSkip{\bigskip}
\def\Abstract{\begingroup\narrower
    \parskip=\medskipamount\parindent=0pt{\sc{Abstract}. }}
\def\EndAbstract{\par\endgroup\TSkip}
\long\def\MSC#1\EndMSC{\def\arg{#1}\ifx\arg\empty\relax\else
     {\par\narrower\noindent%
     2010 Mathematics Subject Classification: #1\par}\fi}
\long\def\KEY#1\EndKEY{\def\arg{#1}\ifx\arg\empty\relax\else
	{\par\narrower\noindent Keywords and Phrases: #1\par}\fi\TSkip}
\begin{document}
\theoremstyle{plain}
\newtheorem{theorem}{Theorem}[section]
\newtheorem{lemma}[theorem]{Lemma}
\newtheorem{corollary}[theorem]{Corollary}
\newtheorem{proposition}[theorem]{Proposition}
\theoremstyle{definition}
\newtheorem{definition}[theorem]{Definition}
\theoremstyle{remark}
\newtheorem{remark}[theorem]{Remark}

\hbadness=100


\title{Generalized One-Particle Density Matrices and Quasifree States}
\author{Volker Bach\footnote{Email: v.bach@tu-bs.de},
S\'ebastien Breteaux\footnote{Email: sebastien.breteaux@ens-cachan.org}\\
{\small{\em{Technische Universit\"at Braunschweig, Institut f\"ur Analysis und Algebra,}}}\\
{\small{\em{Rebenring 31, 38106 Braunschweig, Germany}}}\\
Hans Konrad Kn\"orr\footnote{Email: hanskonrad.knoerr@fernuni-hagen.de}\\
{\small{\em{FernUniversit\"at in Hagen, Fakult\"at f\"ur Mathematik und Informatik,}}}\\
{\small{\em{Lehrgebiet Angewandte Stochastik, 58084 Hagen, Germany}}}\\
Edmund Menge\footnote{Email: e.menge@tu-bs.de}\\
{\small{\em{Technische Universit\"at Braunschweig, Institut f\"ur Analysis und Algebra,}}}\\
{\small{\em{Rebenring 31, 38106 Braunschweig, Germany}}}}


\maketitle

\Abstract
In the spirit of the generalized one-particle density matrix for fermions, we introduce generalized one- and two-particle density matrices to state re\-pre\-sen\-ta\-bi\-li\-ty conditions up to second order for boson systems without assuming particle number-conservation. Furthermore, we show for both particle species that, for a semibounded Hamiltonian, the infimum of the variation of the energy functional w.r.t quasifree states coincides with the one of a variation over pure quasifree states. Moreover, it is proven for fermions that only pure quasifree states have a generalized 1-pdm that is a projection, and a similar statement for bosons.
\EndAbstract

\MSC
81V70
\EndMSC

\KEY
Bogoliubov--Hartree--Fock Theory, Hartree--Fock theory, re\-pre\-sen\-ta\-bi\-li\-ty, Quasifree States
\EndKEY

\tableofcontents


\section{Introduction}

The Rayleigh--Ritz variational principle for the ground state energy is
the starting point of many computations and approximations in quantum
chemistry. For a many-particle system whose dynamics is generated by a
Hamiltonian $\Hf$, it can be written as
\begin{align} \label{eq-vb-1}
E_{\gs} 
\ = \ 
\inf\Big\{ \tr_\cF( \rho \Hf ) \; \Big| 
\; \rho \geq 0, \ \tr_\cF(\rho) = 1 \Big\},
\end{align}
where $\rho$ varies over the density matrices on the Fock space $\cF^{\pm}
\equiv \cF^{\pm}[\fh]$ of the system ($E_{\gs}$ in \eqref{eq-vb-1} is
actually the \emph{total} ground state energy in the grand canonical
ensemble). A typical many-particle Hamiltonian is given as the sum
$\Hf = \hh + \VV$ of the second quantization $\hh$ of a one-particle
operator $h$ and the second quantization $\VV$ of a pair potential
$V$. Since $\hh$ is quadratic and $\VV$ is quartic in the field
operators, one can rewrite \eqref{eq-vb-1} in terms of the
one-particle density matrix $\gamma_\rho \in \cL_+^1(\fh)$ and
the two-particle density matrix $\Gamma_\rho \in \cL_+^1(\fh
\otimes \fh)$ of a given density matrix $\rho \in \cL_+^1(\cF)$ as
\begin{align*} 
E_{gs} 
\ = \ 
\inf\Big\{ \cE(\gamma_\rho, \Gamma_\rho) \; \Big| 
\; \rho \geq 0, \ \tr_\cF(\rho) = 1 \Big\},
\end{align*}
where the energy functional $\cE$ is defined by
\begin{align*} 
\cE(\gamma_{\rho}, \Gamma_{\rho})
\ := \ 
\tr_{\fh}( h \, \gamma_\rho ) \: + \:
\tfrac{1}{2} \tr_{\fh \otimes \fh}( h \, \Gamma_\rho ) .
\end{align*}
The computation of the ground state energy and the corresponding
ground state vector of a quantum mechanical many-particle system is a
complex, if not impossible, task and one resorts to approximation
methods. The Hartree--Fock approximation is one of the first
approximations that emerged from ground state computations in quantum
chemistry \cite{Ba1,BLS,BKM}. In
its original formulation, the Rayleigh--Ritz principle for the ground
state energy in terms of wave functions,
\begin{align*} 
E_{\gs} 
\ = \ 
\inf\Big\{ \big\langle \Psi, \Hf \Psi \big\rangle_{\cF} \Big|  
\ \Psi \in \cF , \ \|\Psi\|_{\cF}=1 \Big\}
\end{align*}
of a fermion system with Hamiltonian $\Hf$ is replaced by a variation
over Slater determinants,
\begin{align} \label{eq-vb-5}
E_{\hf} 
\ = \ 
\inf\Big\{ \big\langle \Phi, \Hf \Phi \big\rangle_{\cF} \Big| 
\ N \in \NN, \ \Phi = \vphi_1 \wedge \cdots \wedge \vphi_N , 
\ \langle \vphi_i, \vphi_j \rangle_{\fh} = \delta_{i,j} \Big\} ,
\end{align}
where the Hamiltonian $\Hf$ conserves the particle number, i.e., $[ \Hf ,
  \hNN ] = 0$, with $\hNN$ being the particle number operator. The
density matrix $\rho = |\Phi\rangle\langle\Phi|$ associated to a
Slater determinant is a pure, particle number-conserving, quasifree
state and \eqref{eq-vb-5} can be rewritten as
\begin{align*} 
E_{\hf} 
\ = \ 
\inf\Big\{ \tr_\cF( \rho \Hf ) \; \Big|
\ \text{$\rho$ is a pure, particle number-conserving, quasifree
  density matrix} \Big\}.
\end{align*}
Since the one-particle density matrix $\gamma$ of a fermion
Slater determinant $\Phi = \vphi_1 \wedge \cdots \wedge \vphi_N$ is
the rank-$N$ orthogonal projection onto $\mathrm{span}\{ \vphi_1,
\ldots, \vphi_N\}$ and its two-particle density matrix is
given as $\Gamma = (\1 - \Ex) (\gamma \otimes \gamma)$, the
Hartree--Fock energy can be written as
\begin{align*} 
E_{\hf} 
\ = \ 
\inf\Big\{ \cE \big( \gamma, (\1 - \Ex) (\gamma \otimes \gamma) \big) \Big| 
\; \gamma = \gamma^2 = \gamma^*, \ \tr_{\fh}(\gamma) < \infty \Big\}.
\end{align*}
In case of purely repulsive pair potentials $V$, Lieb's variational
principle \cite{Lie,Ba1,BLS} asserts that 
\begin{align*} 
E_{\hf} 
\ = \ 
\inf\Big\{ \cE \big(\gamma, (\1 - \Ex) (\gamma \otimes \gamma)\big) \Big| 
\; 0 \leq \gamma \leq \1, \ \tr_{\fh}(\gamma) = N \Big\}.
\end{align*}
Going back to a description on the Fock space, Lieb's variational
principle reads
\begin{align} \label{eq-vb-9}
E_{\hf} 
\ = \ &
\inf\Big\{ \tr_\cF( \rho \Hf ) \; \Big| 
\ \text{$\rho$ is a particle number-conserving, quasifree
  density matrix} \Big\},
\end{align}
i.e., it asserts that the pureness requirement of the quasifree
density matrix can be dropped. As was shown in
\cite{BLS}, the property $[\rho, \hNN ] = 0$ of
particle number conservation is also obsolete for repulsive pair
potentials $V$, and the Hartree--Fock energy $E_{\hf}$ agrees
with the Bogoliubov--Hartree--Fock energy $E_{\mathrm{BHF}}$ defined by
\begin{align} \label{eq-vb-10}
E_{\mathrm{BHF}} 
\ := \ 
\inf\Big\{ \tr_\cF( \rho \Hf ) \; \Big| 
\ \text{$\rho$ is a quasifree density matrix} \Big\}.
\end{align}
Our first main result is a generalization of Lieb's variational principle
\eqref{eq-vb-9} in several ways. Namely, we show that the infimum
in \eqref{eq-vb-10} is already obtained from a variation over 
\emph{pure} quasifree density matrices,
\begin{align*} 
E_{\mathrm{BHF}} 
\ = \ 
E_{\mathrm{BHF}}^{\mathrm{(pure)}} 
\ = \ 
\inf\Big\{ \tr_\cF( \rho \Hf ) \; \Big| 
\ \text{$\rho$ is a pure, quasifree density matrix} \Big\},
\end{align*}
under the mere assumption that $\Hf$ is bounded below. Neither
repulsiveness of the pair potential $V$ nor the form $\Hf = \hh + \VV$
or even the conservation of the particle number by $\Hf$ is
assumed. Furthermore, we show that $E_{\mathrm{BHF}} =
E_{\mathrm{BHF}}^{\mathrm{(pure)}}$ for both fermion and boson
systems. The precise formulation of this first result and its proof is
given in Theorem~\ref{thm: Variation}. Note that, especially for boson systems, it
is crucial that our result does not require the Hamiltonian to
conserve the particle number because for most physically interesting
models such an assumption would not be fulfilled.

The above result, i.e., Theorem~\ref{thm: Variation} brings pure, quasifree
density matrices $\rho$ into focus. These are fully characterized by
their generalized one-particle density matrix $\tgamma_\rho$ defined
in terms of their two-point correlation functions as
\begin{align*} 
\Big\langle f_1 \oplus f_2, 
\; \tgamma_\rho \, (g_1 \oplus g_2) \Big\rangle_{\fh \oplus \fh}
\ := \ 
\tr_\cF\Big( \rho \, \big[ a^*(g_1) + a(\ol{g}_2) \big] 
\, \big[ a(f_1) + a^*(\ol{f}_2) \big] \Big) ,
\end{align*}
where $\{ a^*(f), a(f) | \, f \in \fh \}$ are the usual boson or fermion
creation operators on $\cF^{\pm}$ fulfilling the canonical commutation or
anticommutation relations, respectively, on $\cF^{\pm}$, with $a(f)$
annihilating the vacuum and $f \mapsto J(f) =: \bar{f}$ being (a fixed
antilinear involution on $\fh$ which we refer to as) the complex
conjugation. Here, we implicitly assume $\tr_\cF ( \rho \, a(f) ) = 0$,
for all $f \in \fh$, i.e., that $\rho$ is \emph{centered}. This
assumption is irrelevant for fermion systems and made without loss of
generality for boson systems, as is explained below. The higher
correlation of the (centered) quasifree density matrix $\rho$ can be
computed from sums over products of the two-point correlation
function, i.e., in terms of $\tgamma_\rho$, using Wick's theorem. It
is well-known \cite{BLS,So1} that, as a $2
\times 2$ matrix with operator-valued entries, $\tgamma_\rho$ can be
written as
\begin{align} \label{eq-vb-13}
\tgamma_\rho 
\ = \ 
\begin{pmatrix} \gamma_\rho & \alpha_\rho 
\\ \ol{\alpha}_\rho & \1 \pm \ol{\gamma}_\rho
\end{pmatrix}, 
\qquad
\gamma_\rho = \gamma_\rho^* , \quad  \alpha_\rho = \pm \alpha_\rho^T ,
\end{align}
where ``$+$'' holds for boson and ``$-$'' for fermion systems, and
$\ol{A} := J A J$ denote the complex conjugate and $A^T := \ol{A}^*$
the transpose of a bounded operator $A \in \cB(\fh)$. 
It is easy to check that 
\begin{align} \label{eq-vb-14}
\tgamma_\rho \ \geq \ 0, \ \ \ \; & \qquad \text{for boson systems,}  
\\ \label{eq-vb-15}
0 \ \leq \ \tgamma_\rho \ \leq \ \1, & \qquad \text{for fermion systems.}  
\end{align}
We restrict our attention to density matrices with finite particle
number expectation, for which
\begin{align*} 
\tr_{\fh}(\gamma_\rho) 
\ = \ 
\tr_\cF( \rho \, \hNN ) 
\ < \ \infty.
\end{align*}
In this case, it is well-known that the converse of \eqref{eq-vb-14} and
\eqref{eq-vb-15} holds true in the sense that, given
$\tgamma$ as in \eqref{eq-vb-13}, with
$\gamma = \gamma^* \in \cL^1(\fh)$, $\gamma \geq 0$, $\alpha = \pm \alpha^T$ 
and obeying
\begin{align*} 
\tgamma \ \geq \ 0, \ \ \ \; & \qquad \text{for boson systems,}  
\\ 
0 \ \leq \ \tgamma \ \leq \ \1, & \qquad \text{for fermion systems,}  
\end{align*}
there exists a centered quasifree density matrix $\rho \in
\cL_+^1(\cF)$ such that
\begin{align*} 
\tgamma \ = \ \tgamma_\rho . 
\end{align*}
It is furthermore well-known \cite{BLS,So1} that, 
if $\rho$ is a pure, quasifree density matrix, then 
\begin{align} \label{eq-vb-20}
\tgamma_\rho \ = \ \tgamma_\rho^2 , \qquad &
\quad \text{for fermion systems and}
\\ \label{eq-vb-21} 
\tgamma_\rho \ = \ - \tgamma_\rho \: \cS \: \tgamma_\rho, &
\quad \text{for boson systems,}
\end{align}
where
\begin{align*} 
\cS \ = \ 
\begin{pmatrix} \1 & 0 \\ 0 & -\1 \\ \end{pmatrix}.
\end{align*}
Our second main result is the converse statement: If the generalized
one-particle density matrix $\tgamma_\rho$ of a density matrix $\rho$
fulfills \eqref{eq-vb-20}, in the fermion case, or \eqref{eq-vb-21}, in
the boson case, then the density matrix is a pure, quasifree
state. Note that the quasifreeness of $\rho$ is asserted, not assumed.
The precise formulation of this result is given in Theorem~\ref{thm: relation}.

As our third result, we derive representability conditions on the
two-particle density matrix $\Gamma_\rho$ of a boson density matrix
$\rho$. Similar to G-, P-, and Q-Conditions for fermion reduced density matrices,
these conditions follow from the positivity 
\begin{align} \label{eq-vb-23}
\tr_\cF\big( \rho \, P_2^*(a^*,a) \, P_2(a^*,a) \big) \ \geq \ 0 
\end{align}
of the density matrix $\rho$ on positive observables of the form
$P_2^*(a^*,a) \, P_2(a^*,a)$, where $P_2(a^*,a)$ is a polynomial of
degree 2 or smaller in the creation and annihilation operators.
A crucial difference, however, is that $\rho$ is not assumed
to be particle number-conserving, as this would not be a fair
assumption for boson systems. Hence, the reduction of the general
condition \eqref{eq-vb-23} to simpler conditions like G, P, and Q
is not as straightforward as in the fermion case and is, in fact,
not carried out in this paper, but is subject to future work.


\section{Second Quantization and Bogoliubov--Hartree--Fock Theory}\label{sec: basics}

Let $\left( \fh , \left< \cdot , \cdot \right>_{\fh} \right)$ be a complex separable Hilbert space with the inner product $\left< \cdot , \cdot \right>_{\fh}  : \, \fh \times \fh \ra \dC$. For any $N \in \NN$, the $N$-particle Hilbert space representing a physical system of $N$ indistinguishable particles is given as the $N$-fold tensor product of copies of $\fh$, \ie, 
\begin{align*}
\fh^{\otimes N} := \bigotimes_{k=1}^N \fh.
\end{align*} 
The inner product $\left< \cdot, \cdot \right>_{\fh^{\otimes N}}: \, \fh^{\otimes N} \times \fh^{\otimes N} \ra \dC$ is given by $\left< f^{(N)}, g^{(N)} \right>_{\fh^{\otimes N}} := \prod\limits_{k=1}^N \left< f_k,g_k \right>_{\fh}$ for any $f^{(N)} \equiv f_1 \otimes \cdots \otimes f_N, \ g^{(N)} \equiv g_1 \otimes \cdots \otimes g_N \in \fh^{\otimes N}$ and extension by linearity.

The Fock space $\cF$ is defined as the direct sum of all $N$-particle Hilbert spaces, 
\begin{align*}
\cF\equiv \cF[\fh] := \bigoplus_{N=0}^{\infty} \fh^{\otimes N}.
\end{align*} 
Here, by convention $\fh^{\otimes 0} := \dC$ and $\left< f^{(0)}, g^{(0)} \right>_{\fh^{(0)}} := \ol{f}^{(0)} g^{(0)}$ for $ f^{(0)}, g^{(0)} \in \fh^{(0)}$. Any vector $\Psi \in \cF$ can be written as a sequence of $N$-particle wave functions $f^{(N)} \in \fh^{\otimes N}$:
\begin{align*}
\Psi = \left( f^{(N)} \right)_{N=0}^{\infty}.
\end{align*}
The vacuum vector $\vac := \left( 1,0,0, \dots \right) \in \cF$, is considered as the basis vector of $\fh^{\otimes 0}$. With the inner product $\left< \cdot , \cdot \right>_{\cF} : \, \cF\times \cF\ra \dC$ defined by
\begin{align*}
\left< \Psi , \Phi \right>_{\cF} := \sum\limits_{N=0}^{\infty} \left< f^{(N)} , g^{(N)} \right>_{\fh^{\otimes N}}
\end{align*}
for any $\Psi \equiv \left( f^{(N)} \right)_{N=0}^{\infty}, \Phi \equiv \left( g^{(N)} \right)_{N=0}^{\infty} \in \cF$, the Fock space is a Hilbert space.

The particle number operator is defined as
\begin{align*}
\hNN := \bigoplus_{N=0}^{\infty} N \1_{\fh^{\otimes N}}.
\end{align*}
%

For any bounded operator $B \in \cB (\fh)$, $\Gamma (B) := \bigoplus_{N=0}^{\infty} B^{\otimes N}$ is an operator on $\cF$. In particular, $\Gamma(B)$ is trace class, $\Gamma (B) \in \cL^1 (\cF)$, if $B \in \cL^1 (\fh)$ and, for bosons, additionally $\sn{B}_{\cB (\fh)} \leq 1$.

A detailed description of the Fock representation can be found in \cite{BR1,BR3,BR2,So2,Th1}.


\subsection{Bosons}

The boson Fock space is the symmetric subspace of the Fock space $\cF$, \ie,
\begin{align*}
 \cF^+ \equiv \cF^+ [ \fh ] :=  S \bigoplus_{N=0}^{\infty} \fh^{\otimes N}.
\end{align*}
Here, the symmetrization operator $S \in \cB (\cF)$ is defined by
\begin{align*}
S \left( f^{(N)} \right)_{N=0}^{\infty} := \left( \frac{1}{N!} \sum\limits_{\pi \in \fS_N} \bigotimes_{k=1}^N f^{(N)}_{\pi \left( k \right)} \right)_{N=0}^{\infty}
\end{align*}
with $f^{(N)} = f^{(N)}_1 \otimes \cdots \otimes f^{(N)}_N \in \fh^{\otimes N}, \ N \in \NN$, where $\fS_N$ denotes the symmetric group with permutations $\pi$ of $N$ elements.

\begin{definition}
For any $f \in \fh$, the boson creation and annihilation operators are denoted by $\bc ( f )$ and $\ba ( f )$, respectively. Their domain, which lies dense in $\cF^+$, is $\cD (\hNN^{\frac{1}{2}}) \cap \cF^+ = \left\{ \Psi \equiv \left( f^{(N)} \right)_{N=0}^{\infty} \in \cF^+ \Big| \, \sum_{N=0}^{\infty} \left( N+1 \right) \sn{f^{(N)}}^2 < \infty \right\}$. A complete characterization of $\bc$ and $\ba$ is given by the properties
\begin{align*}
\ba ( f ) \vac = 0 \ , \quad \bc ( f ) \vac = f,
\end{align*}
and the canonical commutation relations (CCR)
\begin{subequations}
\begin{align*}
&\left[ \bc ( f  ), \bc ( g ) \right]= 0 \ , \quad \left[ \ba ( f  ), \ba ( g ) \right] = 0, \ \text{and}\\
&\left[ \ba ( f ), \bc ( g ) \right] = \left< f, g \right> \1_{\cF}
\end{align*}
\end{subequations}
for any $f,g \in \fh$, where $\1_{\cF} \in \cB(\cF)$ is the identity operator on the Fock space and $\left[ A,B \right] := AB - BA$ the commutator. 
\end{definition}

The creation operator $\bc (f)$ is linear in $f$, while the annihilation operator $\ba (f)$ is antilinear. Furthermore, the creation and annihilation operators are adjoints of each other, $\bc ( f ) = \left( \ba ( f ) \right)^*$.  Henceforth, we use the abbreviations $\bc_k \equiv \bc ( \varphi_k )$ and $\ba_k \equiv \ba ( \varphi_k )$ for a fixed, but arbitrary orthonormal basis (ONB) $\left\{ \varphi_k \right\}_{k=1}^{\infty}$ of $\fh$. 



Unlike the fermion case, the space generated by all boson creation and annihilation operators cannot be used to define a $C^*$-algebra. To this end, we introduce the Weyl operators and construct the CCR algebra. For a detailed survey, see, \eg, \cite{BR2}.

For every $f \in \fh$, we define the field operator $\Phi (f) : \, \cD (\Phi(f)) \subseteq \cF^+ \ra \cF^+$ by
\begin{align*}
\Phi (f) := \frac{1}{\sqrt{2}} \left( \bc (f) + \ba (f) \right).
\end{align*}
The field operator is essentially selfadjoint on $\cF^+$. Therefore, its closure is selfadjoint and we denote it by $\Phi (f)$, as well. 
\begin{definition}
For every $f \in \fh$, the unitary transformation $\mathds{W} (f) : \, \cF^+ \ra \cF^+$, called Weyl operator, is defined by
\begin{align*}
\mathds{W} (f) := \exp \left( \I \Phi (f) \right).
\end{align*}
\end{definition}

The Weyl operators satisfy $\mathds{W}(f)^* = \mathds{W} (-f)$ and the Weyl commutation relations
\begin{align*}
\mathds{W} (f) \: \mathds{W} (g) = \e^{- \frac{\I}{2} \rIm \left< f , g \right>_{\fh}} \mathds{W} (f+g)
\end{align*}
for any $f,g \in \fh$. The commutator of two Weyl operators is completely determined by the Weyl commutation relations. Furthermore, we have $\mathds{W} (0) = \1_{\cF^+}$.

A given field operator $\Phi (f)$ with $f \in \fh$ is transformed by the Weyl operator $\mathds{W} (g), \ g \in \fh$, as
\begin{align*}
\mathds{W} (g) \: \Phi (f) \: \mathds{W}(g)^* = \Phi (f) - \rIm \left< g,f \right>_{\fh} \1_{\cF}.
\end{align*}
Hence, $\mathds{W}_g \equiv \mathds{W} ( \I \sqrt{2} g )$ defines a unitary transformation, called Weyl transformation, for any $g \in \fh$. For any $f \in \fh$, this transformation yields
\begin{align*}
\mathds{W}_g \: \bc (f) \: \mathds{W}_g^* = \bc (f) + \left< g,f \right>_{\fh} \qquad \text{and} \qquad \mathds{W}_g \: \ba (f) \: \mathds{W}_g^* = \ba (f) + \left< f,g \right>_{\fh}.
\end{align*}

\begin{definition}
The $C^*$-algebra $\cW$ generated by $\left\{ \mathds{W} (f) \Big| \, f \in \fh \right\}$ is called Weyl algebra or CCR algebra.
\end{definition} 
This algebra is unique up to $*$-automorphisms (Cf. Theorem~5.2.8. of \cite{BR2}). 


\subsubsection*{Boson Bogoliubov Transformation}

\begin{remark}\label{rem: def Complex Conjugates}
The following definition of the Bogoliubov transformation depends on the choice of the ONB $\left\{ \varphi_k \right\}_{k=1}^{\infty}$ of $\fh$, since the complex conjugate of a function $f \in \fh$, that is given by $f = \sum\limits_{k=1}^{\infty} \mu_k \varphi_k$ with some $\mu_k \in \dC, k \in \NN$, is defined by 
\begin{align}
\overline{f} := \sum\limits_{k=1}^{\infty} \overline{\mu}_k \varphi_k.\label{eq: bos complex conjugate vector}
\end{align} 
Furthermore, we define for any operator $A$ the complex conjugate operator $\overline{A}$ by
\begin{align}
\left< f, \overline{A} \, g \right> := \overline{\left< \ol{f}, A \, \ol{g} \right>}.\label{eq: bos complex conjugate operator}
\end{align}
\end{remark}

We emphasize that there is also a formulation to obtain a basis independent definition of the Bogoliubov transformation, \eg, in \cite{So2,Na1}. There, the underlying space is $\fh \oplus \fh^*$ instead of $\fh \oplus \fh$ and an antilinear map $J : \, \fh \ra \fh^*$, defined by $J g (f) := \left< g , f \right>_{\fh}$ for any $f,g \in \fh$, and its inverse $J^* : \, \fh^* \ra \fh$ are required. Then, the second component of a vector $f \oplus g \in \fh \oplus \fh$ is replaced by $Jg \in \fh^*$ such that $f \oplus Jg \in \fh \oplus \fh^*$. The new vector is antilinear in the second component. This supersedes the definition of the complex conjugate of a function. Furthermore, some operators map from $\fh^*$ to $\fh$ or vice versa, \eg, $v : \, \fh \ra \fh^*$. Then, for instance, $\ol{v}$ and $\ol{u}$ of the following definition are replaced by the maps $v : \, \fh \ra \fh^*$ and $J \, u \, J^* : \, \fh^* \ra \fh^*$, respectively.

For any linear operator $A$, the transpose is defined as $A^T := \ol{A}^* = \ol{A^*}$. Now, we are prepared to define a boson Bogoliubov transformation.

\begin{definition}\label{def: bosBogoliubov}
A linear map $U = \left( \begin{smallmatrix} u & v \\ \ol{v} & \ol{u} \end{smallmatrix} \right) : \, \fh \oplus \fh \ra \fh \oplus \fh$ is called boson Bogoliubov transformation if the linear operators $u : \, \fh \ra \fh$ and $v : \, \fh \ra \fh$ fulfill
\begin{subequations}\label{eq: def bosBogoliubov}
\begin{align}
u u^* - v v^* = \1_{\fh}, \quad u^* u - v^T \ol{v} = \1_{\fh},\label{eq: def bosBogoliubov 1}\\
u^* v - v^T \ol{u} = 0, \quad u v^T - v u^T = 0.\label{eq: def bosBogoliubov 2}
\end{align}
\end{subequations}
\end{definition}

\begin{remark}
Any boson Bogoliubov transformation $U$ is invertible. The inverse is given by the Bogoliubov transformation
\begin{align*}
U^{-1} = \cS \, U^* \cS
\end{align*} 
with
\begin{align*}
\cS := \begin{pmatrix} \1_{\fh} & 0 \\ 0 & -\1_{\fh} \end{pmatrix}. 
\end{align*}
\end{remark}

\begin{remark}
Eqs.~(\ref{eq: def bosBogoliubov}a,b) on $u$ and $v$ are equivalent to stating
\begin{align*}
U^* \cS \, U = \cS, \quad
U \, \cS \, U^* = \cS.
\end{align*}
\end{remark}

\begin{lemma}\label{lem: bosBogUnitaryImplementation}
Let $U = \left( \begin{smallmatrix} u & v \\ \ol{v} & \ol{u} \end{smallmatrix} \right) : \, \fh \oplus \fh \ra \fh \oplus \fh$ be a boson Bogoliubov transformation. There is a unitary transformation $\mathds{U}_{U}: \cF^+ \ra \cF^+$ such that
\begin{align*}
\mathds{U}_{U} \left[ \bc ( f ) + \ba ( \ol{g} ) \right] \mathds{U}_{U}^* = \bc ( u f + v g ) + \ba ( v \ol{f} + u \ol{g} )
\end{align*}
for all $f,g \in \fh$ if and only if $v$ is Hilbert--Schmidt. We call $\mathds{U}_{U}$ unitary representation or implementation of $U$ on $\cF^+$.
\end{lemma}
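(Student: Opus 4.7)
The plan is to prove the two implications separately.

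For the necessity of the Hilbert--Schmidt condition, I assume existence of a unitary $\mathds{U}_U$ implementing the intertwining and compute the expected particle number in the transformed vacuum $\Psi_0 := \mathds{U}_U^* \vac$. Expanding $\hNN = \sum_k \bc(\varphi_k)\ba(\varphi_k)$ and transforming each factor by the intertwining relation (with $f = \varphi_k$, $g = 0$, and its adjoint), only the commutator term $[\ba(v\ol{\varphi_k}), \bc(v\ol{\varphi_k})] = \sn{v\ol{\varphi_k}}_\fh^2$ contributes to the vacuum expectation, since all remaining terms contain an $\ba(\cdot)$ acting directly on $\vac$ or produce a $2$-particle vector orthogonal to $\vac$. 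Summing gives
\begin{align*}
\langle \Psi_0, \hNN\, \Psi_0 \rangle \;=\; \sum_k \sn{v\ol{\varphi_k}}_\fh^2 \;=\; \sn{v}_{\cL^2(\fh)}^2.
\end{align*}
Since $\Psi_0$ must lie in the form domain of $\hNN$ for the intertwining to make sense on a dense set containing the vacuum, this quantity is finite, forcing $v \in \cL^2(\fh)$.

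For the sufficiency direction, given $v \in \cL^2(\fh)$, I would construct $\mathds{U}_U$ via a Bloch--Messiah--Zumino factorization
\begin{align*}
U \;=\; \begin{pmatrix} u_1 & 0 \\ 0 & \ol{u_1} \end{pmatrix}
\begin{pmatrix} \cosh K & \sinh K \\ \sinh K & \cosh K \end{pmatrix}
\begin{pmatrix} u_2 & 0 \\ 0 & \ol{u_2} \end{pmatrix},
\end{align*}
with unitaries $u_1, u_2 : \fh \to \fh$ and a positive self-adjoint operator $K$ on $\fh$ for which $\sinh K \in \cL^2(\fh)$, equivalent to $v$ being Hilbert--Schmidt under the constraints~(\ref{eq: def bosBogoliubov}). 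Each factor admits an explicit implementer: the outer ones by the second-quantizations $\Gamma(u_1), \Gamma(u_2)$, and the central squeezing factor, in an eigenbasis $\{\psi_j\}$ of $K$ with eigenvalues $k_j \geq 0$, as the strong limit of a product of one-mode squeezers
\begin{align*}
\prod_j \exp\!\Bigl(\tfrac{1}{2} k_j \bigl(\bc(\psi_j)^2 - \ba(\psi_j)^2\bigr)\Bigr).
\end{align*}
Composing these three implementers yields $\mathds{U}_U$; the intertwining relation is then checked first on the eigenbasis of $K$ and extended by linearity to all $f, g \in \fh$.

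The main obstacle is the sufficiency direction, specifically the strong convergence of the infinite product of one-mode squeezers on $\cF^+$. The Hilbert--Schmidt hypothesis $\sum_j k_j^2 = \sn{\sinh K}_{\cL^2(\fh)}^2 < \infty$ is exactly what is required: the resulting vector $\mathds{U}_U \vac$ is proportional to the Gaussian-type state $\exp\bigl(-\tfrac{1}{2}\sum_j (\tanh k_j)\, \bc(\psi_j)^2\bigr)\vac$, whose squared norm is the convergent infinite product $\prod_j (\cosh k_j)^{-1}$, finite if and only if $\sum_j k_j^2 < \infty$. The existence of the Bloch--Messiah--Zumino factorization from the constraints~(\ref{eq: def bosBogoliubov}), and the verification of the intertwining on a basis, are routine once this convergence is in hand.
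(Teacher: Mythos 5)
The paper itself does not prove this lemma: it only names the Hilbert--Schmidt condition after Shale and Stinespring and defers to \cite{SS1}, so your argument has to stand on its own rather than be matched against an in-text proof. It does stand: what you outline is the standard argument. For necessity, the computation $\left< \mathds{U}_U^*\vac, \hNN\, \mathds{U}_U^*\vac\right> = \sum_k\sn{v\ol{\varphi}_k}^2 = \sn{v}_{\cL^2(\fh)}^2$ is correct; the cleanest way to close the domain point you flag is to observe that $\sum_k \sn{\ba(\varphi_k)\,\mathds{U}_U^*\vac}^2$ is a sum of non-negative terms, each equal to $\sn{v\ol{\varphi}_k}^2$ by the adjoint of the intertwining relation with $g=0$, and that the sum is finite because $\mathds{U}_U^*$ must preserve $\cD(\hNN^{1/2})$, the common domain on which the intertwining is asserted, so $\mathds{U}_U^*\vac$ lies in the form domain of $\hNN$. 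For sufficiency, the Bloch--Messiah--Zumino route works, with two small caveats. First, the middle factor has the required Bogoliubov form $\left(\begin{smallmatrix} u & v\\ \ol v & \ol u\end{smallmatrix}\right)$ only because the decomposition can be arranged so that $K$ commutes with the complex conjugation, giving $\ol{\cosh K}=\cosh K$ and $\ol{\sinh K}=\sinh K$; this is part of what must be extracted from the constraints~(\ref{eq: def bosBogoliubov}). Second, the product $\prod_j(\cosh k_j)^{-1}$ is always finite, since each factor is at most $1$; the relevant statement is that it is \emph{nonzero}, equivalently that $\prod_j\cosh k_j=\exp\bigl(\sum_j\log\cosh k_j\bigr)<\infty$, and it is here (via $\log\cosh k\sim k^2/2$) that $\sum_j k_j^2<\infty$, i.e.\ $\sinh K\in\cL^2(\fh)$, enters to guarantee the nontriviality of the infinite tensor product of one-mode squeezers. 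With these repairs your argument is complete and coincides with the classical proofs the paper implicitly invokes.
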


The condition that $v$ is Hilbert--Schmidt is named after Shale and Stinespring \cite{SS1}.


\subsubsection*{States and Density Matrices}

Next, we introduce the notion of states and, afterwards, of density matrices.

\begin{definition}
A continuous linear functional $\omega \in \cW^*$ on the CCR algebra $\cW$ is called a state if it is normalized and positive, \ie, $\omega (\1_{\cF}) = 1$ and $\omega (A) \geq 0$ for all positive semi-definite operators $A \in \cW$. 
\end{definition}

Since the boson creation and annihilation operators are not in $\cW$, their expectation values are not well-defined for all states. In order to find well-defined expressions for these expectation values, first, we restrict ourselves to specific states and, then, extend the domain for these states appropriately.

Let $\mathds{W} (f)$ denote a Weyl operator for any $f \in \fh$ and let $\omega$ be a state. We assume that the map $T_f : \, \dR \ra \dC , \ t \mapsto \omega \big( \mathds{W} (tf) \big)$ is four times continuously differentiable for all $f \in \fh$, shortly $T_f \in \cC^4 (\dR;\dC)$. This assumption provides the definition of the expectation value of a single creation or annihilation operator and of the particle number operator. E.g., we have
\begin{align*}
\omega \big( \Phi (f) \big) := \frac{\di}{\di t} \, \omega \big( \mathds{W} (tf) \big) \Big|_{t=0} < \infty 
\end{align*}
for any $f \in \fh$ and, hence, by linearity of $\omega$
\begin{align*}
\omega ( \ba (f) ) = \frac{1}{\sqrt{2}} \left[ \omega \big( \Phi (f) \big) + \I \omega \big( \Phi (\I f) \big) \right].
\end{align*}
Analogously, we give a meaning to
\begin{align*}
\omega \big( \bc (f) \big), \quad \omega \big( e (f) \, e (g) \big),\quad \text{and} \quad \omega \big( e (f_1) \, e (f_2) \, e (g_2) \, e (g_1) \big)
\end{align*}
for $f,g,f_1,f_2,g_1,g_2 \in \fh$, due to $T_f \in \cC^4 (\dR;\dC)$. Here, $e$ denotes either the creation operator $\bc$ or the annihilation operator $\ba$. Thus, the expectation value in this state $\omega$ can be defined not only for elements of the CCR-algebra, but also for polynomials of degree 4 in creation and annihilation operators. In order to exemplify such polynomials, we note that, for instance, a general polynomial of degree 2 can be written as 
\begin{align*}
\Po_2 := \sum\limits_{k,l=1}^N \left[ \alpha_{kl} \bc_k \ba_l + \beta_{kl} \bc_k \bc_l + \epsilon_{kl} \ba_k \ba_l \right] + \sum\limits_{k=1}^N \left[ \zeta_k \bc_k + \xi_k \ba_k \right] + \mu
\end{align*} 
with an ONB $\left\{ \varphi_k \right\}_{k=1}^{\infty}$ of $\fh$, some $N \in \NN$, and coefficients $\alpha_{kl}, \beta_{kl}, \epsilon_{kl}, \zeta_k, \xi_k, \mu \in \dC$. 

\begin{definition}
We denote the closure of the indicated extension to the polynomials of degree 4 in creation and annihilation operators by $\cA^+$.
\end{definition} 

For any ONB $\left\{ \varphi_k \right\}_{k=1}^{\infty}$ of $\fh$, the monotonously increasing sequence of the polynomials $\hNN_N := \sum\limits_{k=1}^N \bc_k \ba_k$, $N \in \NN$, converges strongly to the particle number operator $\hNN$ on $\cD (\hNN) \cap \cF^+ = \left\{ \Psi \equiv \left( f^{(N)} \right)_{N=0}^{\infty} \in \cF^+ \Big| \, \sum\limits_{N=0}^{\infty} \left( N+1 \right)^2 \sn{f^{(N)}}^2 < \infty \right\}$.

\begin{definition}
Let $\omega$ be a state. If $T_f \in \cC^4 ( \dR;\dC )$ for any $f \in \fh$ and $\omega \big( \hNN^2 \big) := \lim\limits_{N \ra \infty} \omega \big( \hNN_N^2 \big) < \infty$, we write $\omega \in \cZ^+$. 
\end{definition}

\begin{remark}
For any $\omega \in \cZ^+$, the Cauchy--Schwarz inequality yields
\begin{align*}
\omega \big( \hNN \big) \leq \sqrt{\omega \big( \hNN^2 \big)} < \infty.
\end{align*}
\end{remark}

\begin{definition}
A state $\omega \in \cZ^+$ is called pure if there is a $\Psi \in \cF^+$, such that, for any $A \in \cA^+$, 
\begin{align*}
\omega ( A ) = \left< \Psi, A \Psi \right>_{\cF}.
\end{align*}
\end{definition} 

\begin{definition}\label{def: bos centered state}
A centered state is a state $\omega \in \cZ^+$ with 
\begin{align}
\omega \big( \bc (f) \big) = 0\label{eq: def bos centered state}
\end{align}
for any $f \in \fh$. We denote the set of all centered states by $\cZ_{\mathrm{cen}}^+$.
\end{definition}
As follows from \eqref{eq: def bos centered state}, we also have $\omega \big( \ba (f) \big) = 0$ for $\omega \in \cZ_{\mathrm{cen}}^+$. 

\begin{definition}\label{def: bos quasifree}
A state $\omega \in \cZ^+$ is called quasifree, shortly $\omega \in \cZ_{\mathrm{qf}}^+$, if there is a positive semi-definite operator $h_{\omega}$ on $\fh$ and $f_{\omega} \in \fh$ such that, for every $f \in \fh$,
\begin{align*}
\omega \big( \mathds{W}_f \big) = \exp \left( - 2 \I \left< f_{\omega} , f \right>_{\fh} - \left< f, \left( \1_{\fh} + h_{\omega} \right) f \right>_{\fh} \right).
\end{align*}
The subset of pure quasifree states is denoted by $\cZ_{\mathrm{pqf}}^+$.
\end{definition}

\begin{remark}
$\cZ_{\mathrm{qf}}^+$ and $\cZ_{\mathrm{pqf}}^+$ are invariant under Bogoliubov and Weyl transformations, \ie, the transform of a (pure) quasifree state is (pure) quasifree, as well.
\end{remark}


\begin{definition}
We say that a state $\omega \in \cZ^+$ is coherent, $\omega \in \cZ_{\mathrm{coh}}^+$, if there is an $f \in \fh$ such that for all $A \in \cA^+$
\begin{align*}
\omega (A) = \left< \vac, \mathds{W}_f  A \, \mathds{W}_f^* \, \vac \right>_{\cF}.
\end{align*}
\end{definition} 

\begin{remark}
In particular, we have
\begin{align*}
\cZ_{\mathrm{cqf}}^+ \subsetneq \cZ_{\mathrm{qf}}^+\ \text{and} \
\cZ_{\mathrm{coh}}^+ \subsetneq \cZ_{\mathrm{pqf}}^+ \subsetneq \cZ_{\mathrm{qf}}^+.
\end{align*}
\end{remark}

\begin{definition}
We call a positive semi-definite operator $\rho \in \cL^1 ( \cF^+)$ with $\tr_{\cF^+} \left( \rho \right) = 1$ a density matrix. 
\end{definition}
For a density matrix $\rho \in \cL^1 ( \cF^+)$, the map $ \cA^+ \ra \dC, A \mapsto \tr_{\cF^+} \left( \rho^{\frac{1}{2}} A \rho^{\frac{1}{2}} \right)$ defines a state. In particular, for every state $\omega \in \cZ^+$, there is a density matrix $\rho$ with $\tr_{\cF^+} \left( \rho^{\frac{1}{2}} A \rho^{\frac{1}{2}} \right) = \omega (A)$ for all $A \in \cA^+$. Therefore, the notions of pureness, quasifreeness etc can be transferred to the corresponding density matrix.


\subsubsection*{One- and Two-Particle Density Matrices and Representability}\label{sec: Boson-pdm-repr}

For systems with pair-interactions, the formulation of the variational problem can be reduced by the notion of one- and two-particle density matrices.

\begin{definition}
For any state $\omega \in \cZ^+$, the corresponding (boson) one-particle density matrix (1-pdm) $\gamma_{\omega} : \, \fh \ra \fh$ is defined by its matrix elements
\begin{align}\label{eq: bos 1pdm}
\left< f, \gamma_{\omega} \, g \right>_{\fh} := \omega \big( \bc (g) \, \ba (f) \big)
\end{align}
for every $f,g \in \fh$. 
\end{definition}
Since any state is positive, we have
\begin{align*}
\left< f, \gamma_{\omega} \, f \right>_{\fh} = \omega \big( \bc (f) \, \ba (f) \big) \geq 0
\end{align*}
for any $f \in \fh$. Hence, the 1-pdm is a selfadjoint and positive semi-definite operator. Moreover, $\gamma_{\omega} \in \cL^1 (\fh)$ due to
\begin{align*}
\tr_{\fh} \left( \gamma_{\omega} \right) &= \sum\limits_{k=1}^{\infty} \left< \varphi_k, \gamma_{\omega} \, \varphi_k \right>_{\fh} = \omega \Big( \sum\limits_{k=1}^{\infty} \bc_k \ba_k \Big) = \omega \big( \hNN \big) < \infty.
\end{align*} 

\begin{definition}
The (boson) two-particle density matrix (2-pdm) $\Gamma_{\omega} : \, \fh \otimes \fh \ra \fh \otimes \fh$ of a state $\omega \in \cZ^+$ is defined by
\begin{align*}
\left< f_1 \otimes f_2, \Gamma_{\omega} \left( g_1 \otimes g_2 \right) \right> := \omega \big( \bc ( g_2 ) \, \bc ( g_1 ) \, \ba ( f_1 ) \, \ba ( f_2 ) \big)
\end{align*}
for any $f_1,f_2,g_1,g_2 \in \fh$. 
\end{definition}
The 2-pdm is a selfadjoint and positive semi-definite trace class operator since
\begin{align*}
\tr_{\fh \otimes \fh} \left( \Gamma_{\omega} \right) = \omega \big( \hNN^2 - \hNN \big) < \infty
\end{align*}
and, for any $\psi \equiv \sum_{k,l=1}^{\infty} \mu_{kl} \varphi_k \otimes \varphi_l \in \fh \otimes \fh$, $\mu_{kl} \in \dC$, 
\begin{align*}
\left< \psi , \Gamma_{\omega} \, \psi \right>_{\fh \otimes \fh} = \sum\limits_{i,j,k,l=1}^{\infty} \mu_{kl} \ol{\mu}_{ij} \omega \big( \bc_k \bc_l \ba_j \ba_i \big) = \omega \big(P P^*\big) \geq 0, 
\end{align*}
where $P := \sum_{k,l=1}^{\infty} \mu_{kl} \bc_k \bc_l$.
Furthermore, the 2-pdm is symmetric, \ie, $\Gamma_{\omega} \, \Ex = \Ex \, \Gamma_{\omega} = \Gamma_{\omega}$ for all $f,g \in \fh$. Here, the exchange operator $\Ex : \, \fh \otimes \fh \ra \fh \otimes \fh$ is the linear map defined by
\begin{align*}
\Ex \left( f \otimes g \right) := g \otimes f 
\end{align*}
for any $f,g \in \fh$. Summarizing the basic properties of the 1- and 2-pdm, we introduce the notions of admissibility and representability.

\begin{definition}
We call a pair $\left( \gamma, \Gamma \right)$ of operators on $\fh \times \left( \fh \otimes \fh \right)$ admissible if 
\begin{itemize}
\item[(i)] $\Gamma \in \cL^1 \left( \fh \otimes \fh \right)$ is symmetric, i.e., $\Ex \, \Gamma = \Gamma \, \Ex = \Gamma$, and selfadjoint, and
\item[(ii)] $\gamma \in \cL^1 \left( \fh \right)$ with $\tr_{\fh} \left( \gamma \right) = \omega \big( \hNN \big)$ is selfadjoint and positive semi-definite.
\end{itemize}
\end{definition}

\begin{definition}
We say that the pair $\left( \gamma, \Gamma \right)$ of operators on $\fh \times \left( \fh \otimes \fh \right)$ is representable if there is a state $\omega \in \cZ^+$ with $\gamma_{\omega} = \gamma$ and $\Gamma_{\omega} = \Gamma$.\\
Necessary conditions on the pair $\left( \gamma, \Gamma \right)$ to be representable are called re\-pre\-sen\-ta\-bi\-li\-ty conditions.
\end{definition}

In particular, every representable pair $\left( \gamma, \Gamma \right)$ is admissible. 


\subsubsection*{Generalized One- and Two-Particle Density Matrices for Bosons}\label{sec: Boson-pdm}

In \cite{BLS}, a generalized 1-pdm is defined for fermions on the space $\fh \oplus \fh$. We provide a definition of the generalized 1-pdm for bosons and, then, further generalize the one- and the two-particle density matrices. Here, again, the definitions depend on the choice of the ONB of $\fh$, since we use complex conjugates of functions, as well as operators as explained in Remark~\ref{rem: def Complex Conjugates}. We refer the reader to \cite{So2} for a basis independent formulation.

\begin{definition}\label{def: bos gen 1-pdm}
For any state $\omega \in \cZ^+$, the generalized 1-pdm $\tgamma_{\omega}$ is an operator on $\fh \oplus \fh$ defined by 
\begin{align}\label{eq: bos def gen 1-pdm}
\left< \left( f_1 \oplus f_2 \right), \tgamma_{\omega} \left( g_1 \oplus g_2 \right) \right> := \omega \big( \left[ \bc ( g_1 ) + \ba ( \ol{g}_2 ) \right] \left[ \ba ( f_1 ) + \bc ( \ol{f}_2 ) \right] \big)
\end{align}
for $f_1, f_2, g_1, g_2 \in \fh$.
\end{definition} 

\begin{remark}\label{rem: bos gen 1pdm matrix}
Defining
\begin{align}\label{eq: bos alpha}
&\alpha_{\omega}^* : \, \fh \ra \fh \ , \ \left< f, \alpha_{\omega}^* \, g \right> := \omega \big( \bc ( g ) \, \bc ( \ol{f} ) \big),
\end{align}
we are able to write the generalized 1-pdm as 
\begin{align*}
\tgamma_{\omega} = \begin{pmatrix} \gamma_{\omega} & \alpha_{\omega} \\ \alpha_{\omega}^* & \1_{\fh}+\overline{\gamma}_{\omega} \end{pmatrix},
\end{align*}
a matrix with operator-valued entries. The 1-pdm $\gamma_{\omega}$ is selfadjoint and $\alpha_{\omega}$ symmetric, \ie, $\alpha_{\omega}^T = \alpha_{\omega}$ for its transpose $\alpha_{\omega}^T := \ol{\alpha_{\omega}^*}$.
\end{remark}

\begin{lemma}
For any $\omega \in \cZ^+$, the generalized 1-pdm $\tgamma_{\omega}$ as defined in \eqref{eq: bos def gen 1-pdm} is a positive semi-definite operator on $\fh \oplus \fh$. In particular, it is selfadjoint.
\end{lemma}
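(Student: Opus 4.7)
The plan is to rewrite the sesquilinear form defining $\tgamma_\omega$ as $\omega$ evaluated on a product of the form $X X^*$ whenever its two arguments coincide, so that positivity becomes immediate from the positivity of $\omega$. For $F = f_1 \oplus f_2 \in \fh \oplus \fh$ introduce the shorthand
\[ B(F) \ := \ \bc(f_1) + \ba(\ol{f}_2), \]
whose adjoint is $B(F)^* = \ba(f_1) + \bc(\ol{f}_2)$. The defining identity of $\tgamma_\omega$ then becomes
\[ \langle F, \tgamma_\omega \, G \rangle \ = \ \omega\bigl( B(G) \, B(F)^* \bigr). \]

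Positivity follows by setting $F = G$: the element $B(F) B(F)^*$ is a degree-$2$ polynomial in the creation and annihilation operators, hence lies in $\cA^+$, and it is positive as an operator of the form $XX^*$. Since every $\omega \in \cZ^+$ is induced by a density matrix $\rho$ via $\omega(A) = \tr_{\cF^+}(\rho^{1/2} A \rho^{1/2})$ on all of $\cA^+$, one obtains
\[ \langle F, \tgamma_\omega \, F \rangle \ = \ \omega\bigl( B(F) B(F)^* \bigr) \ = \ \bigl\| B(F)^* \rho^{1/2} \bigr\|_{\mathrm{HS}}^2 \ \geq \ 0. \]

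To promote this positive semi-definite sesquilinear form to a bounded self-adjoint operator, I would then verify that $(F,G) \mapsto \omega(B(G) B(F)^*)$ is bounded on $(\fh \oplus \fh) \times (\fh \oplus \fh)$. Expanding the product yields four terms; the diagonal pieces $\omega(\bc(g_1)\ba(f_1))$ and $\omega(\ba(\ol{g}_2)\bc(\ol{f}_2))$ are controlled, via the CCR and the boundedness of $\gamma_\omega \in \cL^1(\fh)$, by constants times $\|f_1\|\|g_1\|$ and $\|f_2\|\|g_2\|$ respectively, while the remaining ``anomalous'' mixed terms are handled by the Cauchy--Schwarz inequality for positive states, $|\omega(XY)|^2 \leq \omega(XX^*)\, \omega(Y^*Y)$, which reduces them to the diagonal estimates already obtained. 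The resulting bounded positive semi-definite form defines a bounded, positive, and hence self-adjoint operator $\tgamma_\omega$ on $\fh \oplus \fh$.

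The main (and really only) nontrivial step I anticipate is this boundedness verification, since it depends on knowing that all second-order expectations in $\bc, \ba$ are controlled once $\omega(\hNN) < \infty$. The positivity itself is essentially tautological after the rewriting $\langle F, \tgamma_\omega F\rangle = \omega(B(F) B(F)^*)$, and self-adjointness comes for free once one has a bounded positive form.
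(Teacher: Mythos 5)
Your proof is correct and uses the same core argument as the paper: setting $F=G$ turns the defining sesquilinear form into $\omega\bigl(B(F)B(F)^*\bigr)\geq 0$, which is exactly the paper's one-line proof via positivity of the state. The additional boundedness verification you sketch is a reasonable extra layer of rigor that the paper simply omits, but it does not change the approach.
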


\begin{proof}
By setting $g_1 = f_1$ and $g_2 = f_2$, the assertion is a consequence of \eqref{eq: bos def gen 1-pdm} and the positivity of the corresponding state.
\end{proof}

Therefore, the boson 1-pdm $\gamma$ is positive semi-definite, too. Unlike the fermion case, the boson 1-pdm is not bounded above by $\1_{\fh}$.

So far, the definitions and statements are well established and can be found for example in \cite{Na1}, in \cite{So2} for both particle types and, in a version for fermions, in \cite{BLS}. 

\begin{lemma}\label{lem: bos transformed 1pdm}
Let $\omega \in \cZ^+$ be a state and $\,\tgamma : \, \fh \oplus \fh \ra \fh \oplus \fh$ its generalized 1-pdm. For a boson Bogoliubov transformation $U : \, \fh \oplus \fh \ra \fh \oplus \fh$ with unitary representation $\mathds{U}_{U} : \, \cF^+ \ra \cF^+$, define $\omega_{U}$ by $\omega_{U} ( A ) := \omega \big( \mathds{U}_{U} A \, \mathds{U}_{U}^* \big)$ for any $A \in \cA^+$. Then, the generalized 1-pdm $\tgamma_{U}$ of the state $\omega_{U}$ is given by
\begin{align}
\tgamma_{U} = U^* \, \tgamma \, U.\label{eq: bos transformed 1pdm}
\end{align}
Furthermore, $\tgamma \, \cS \, \tgamma = - \tgamma$ implies $\tgamma_{U} \, \cS \, \tgamma_{U} = - \tgamma_{U}$.
\end{lemma}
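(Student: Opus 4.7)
The plan is to verify the identity $\tgamma_U = U^*\tgamma U$ by comparing matrix elements in the decomposition $\fh \oplus \fh$, and then to derive the second assertion by pure algebra using the defining identity $U\cS U^* = \cS$ of a Bogoliubov transformation.

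First I would expand the defining matrix element of $\tgamma_U$: by Definition~\ref{def: bos gen 1-pdm} and the definition of $\omega_U$,
\begin{align*}
\bigl\langle f_1\oplus f_2,\,\tgamma_U(g_1\oplus g_2)\bigr\rangle
=\omega\Bigl(\mathds{U}_U\bigl[\bc(g_1)+\ba(\ol{g}_2)\bigr]\mathds{U}_U^*\,
\mathds{U}_U\bigl[\ba(f_1)+\bc(\ol{f}_2)\bigr]\mathds{U}_U^*\Bigr).
\end{align*}
Applying Lemma~\ref{lem: bosBogUnitaryImplementation} to the first factor yields
$\mathds{U}_U[\bc(g_1)+\ba(\ol{g}_2)]\mathds{U}_U^* = \bc(ug_1+vg_2)+\ba(v\ol{g}_1+u\ol{g}_2)$.
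For the second factor I would read off the lemma with the roles appropriate to $\bc(\ol{f}_2)+\ba(f_1)$ (so setting the arguments $\ol{f}_2,\ol{f}_1$ in the formula), obtaining
$\mathds{U}_U[\ba(f_1)+\bc(\ol{f}_2)]\mathds{U}_U^* = \ba(uf_1+vf_2)+\bc(\ol{v}\,f_1+\ol{u}f_2\text{ conj.\ twisted})$,
i.e.\ the creation argument is exactly $\overline{\ol{v}f_1+\ol{u}f_2}$.

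Now I would recognize that the transformed pairs read $(ug_1+vg_2,\,\ol{v}g_1+\ol{u}g_2)=U(g_1\oplus g_2)$ and $(uf_1+vf_2,\,\ol{v}f_1+\ol{u}f_2)=U(f_1\oplus f_2)$. Hence the right-hand side equals
\begin{align*}
\bigl\langle U(f_1\oplus f_2),\,\tgamma\,U(g_1\oplus g_2)\bigr\rangle
=\bigl\langle f_1\oplus f_2,\,U^*\tgamma\,U(g_1\oplus g_2)\bigr\rangle,
\end{align*}
which establishes \eqref{eq: bos transformed 1pdm} since $f_1,f_2,g_1,g_2\in\fh$ are arbitrary. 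The only genuinely tricky step is bookkeeping of complex conjugates, which matters because the generalized 1-pdm is defined with an antilinear swap in the second component; I would write the identification $U(f_1\oplus f_2)=(uf_1+vf_2)\oplus(\ol{v}f_1+\ol{u}f_2)$ once, carefully, and then the rest is mechanical.

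For the second statement I would simply compute
\begin{align*}
\tgamma_U\,\cS\,\tgamma_U
= U^*\tgamma\,U\,\cS\,U^*\tgamma\,U
= U^*\tgamma\,\cS\,\tgamma\,U
= U^*(-\tgamma)\,U
= -\tgamma_U,
\end{align*}
where in the second equality I use the Bogoliubov identity $U\cS U^*=\cS$ recorded after Definition~\ref{def: bosBogoliubov}, and in the third the hypothesis $\tgamma\,\cS\,\tgamma=-\tgamma$. No further input is needed. The main obstacle in the whole proof is therefore the correct handling of the complex conjugation conventions from Remark~\ref{rem: def Complex Conjugates}; once these are tracked, the proof reduces to one line of algebra.
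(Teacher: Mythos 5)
Your proposal is correct and follows essentially the same route as the paper: conjugate each bracket via the implementation lemma, identify the transformed arguments with $U(f_1\oplus f_2)$ and $U(g_1\oplus g_2)$ (the conjugate bookkeeping you flag, $\overline{\ol{v}f_1+\ol{u}f_2}=v\ol{f}_1+u\ol{f}_2$, matches the paper's computation exactly), and then deduce the second assertion by the same one-line algebra using $U\cS U^*=\cS$.
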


\begin{proof}
We consider the matrix elements of $\tgamma_{U}$. For the first assertion, we obtain 
\begin{align*}
\left< \left( f_1 \oplus f_2 \right), \tgamma_{U} \left( g_1 \oplus g_2 \right) \right> &= \omega \left( \mathds{U}_{U} \left[ \bc ( g_1 ) + \ba ( \ol{g}_2 ) \right] \mathds{U}_{U}^* \mathds{U}_{U} \left[ \ba ( f_1 ) + \bc ( \ol{f}_2 ) \right] \mathds{U}_{U}^* \right) \notag \\
&= \omega \left( \left[ \bc ( u g_1 + v g_2 ) + \ba ( u \ol{g}_2 + v \ol{g}_1 ) \right] \right. \notag \\
&\relphantom{= \omega....................} \times \left. \left[ \ba ( u f_1 + v f_2 ) + \bc ( u \ol{f}_2 + v \ol{f}_1 ) \right] \right) \notag \\
& = \left< U \left( f_1 \oplus f_2 \right), \tgamma \, U \left( g_1 \oplus g_2 \right) \right> 
\end{align*}
for any $f_1, f_2, g_1, g_2 \in \fh$. Thus, \eqref{eq: bos transformed 1pdm} holds. 

The second assertion follows from \eqref{eq: bos transformed 1pdm} and $U \cS U^* = \cS$:
\begin{align*}
\tgamma_{U} \, \cS \, \tgamma_{U} = U^* \, \tgamma \, U  \cS \, U^* \, \tgamma \, U = U^* \, \tgamma \, \cS \, \tgamma \, U = - U^* \, \tgamma \, U = - \tgamma_{U},
\end{align*}
which completes the proof.
\end{proof}

In the following, we give a further generalization of the 1-pdm on the space $\fH_{\mathrm{gen}} := \fh \oplus \fh \oplus \dC$. 

\begin{definition}\label{def: bos further gen 1-pdm}
For any state $\omega \in \cZ^+$, the further generalized 1-pdm $\hgamma_{\omega} : \, \fH_{\mathrm{gen}} \ra \fH_{\mathrm{gen}}$ is defined by
\begin{align}\label{eq: def bos further gen 1-pdm}
\left< G, \hgamma_{\omega} F \right> := \omega \big( \big[ \bc ( f_1 ) + \ba ( \ol{f}_2 ) + \mu \big] \big[ \ba ( g_1 ) + \bc ( \ol{g}_2 ) + \ol{\nu} \big] \big)
\end{align}
for $F \equiv f_1 \oplus f_2 \oplus \mu$ and $G \equiv g_1 \oplus g_2 \oplus \nu \in \fH_{\mathrm{gen}}$. 
\end{definition}

\begin{remark}\label{rem: bos further gen 1pdm matrix}
We rewrite the further generalized 1-pdm $\hgamma_{\omega}$ as a $3 \times 3$-matrix:
\begin{align}
\hgamma_{\omega} = \begin{pmatrix} \gamma_{\omega} & \alpha_{\omega} & b_{\omega} \\ \alpha_{\omega}^* & \1_{\fh} + \ol{\gamma}_{\omega} & \ol{b}_{\omega} \\ b_{\omega}^* & \ol{b}_{\omega}^* & 1 \end{pmatrix}.\label{eq: bos further gen 1pdm matrix}
\end{align}
Here, the first moment $b_{\omega} \in \fh$ and its dual element $b_{\omega}^* \in \fh^*$ are given by
\begin{align}\label{eq: bos first moment}
\left< g , b_{\omega} \right>_{\fh} := \omega \big( \ba (g) \big) \qquad \text{and} \qquad b_{\omega}^* \cdot g \equiv \left< b_{\omega} , g \right>_{\fh} = \omega \big( \bc (g) \big)
\end{align}
for every $g \in \fh$.
For the complex conjugate $\ol{b}_{\omega}$ of the wave function $b_{\omega} \in \fh$, we have $\left< g , \ol{b}_{\omega} \right>_{\fh} = \ol{\left< \ol{g} , b_{\omega} \right>}_{\fh} = \ol{\omega \big( \ba (\ol{g}) \big)} = \omega \big( \bc (\ol{g}) \big)$.
\end{remark}

\begin{proposition}
The further generalized 1-pdm is positive semi-definite and selfadjoint.
\end{proposition}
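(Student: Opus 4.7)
The key observation is that the two factors inside $\omega(\cdot)$ in Definition~\ref{def: bos further gen 1-pdm} are hermitian adjoints of each other once the identifications $F=G$ are made. Explicitly, writing $A_F := \bc(f_1) + \ba(\ol{f}_2) + \mu$ for $F = f_1 \oplus f_2 \oplus \mu \in \fH_{\mathrm{gen}}$, one checks from the (anti)linearity conventions for $\bc$ and $\ba$ that $A_F^* = \ba(f_1) + \bc(\ol{f}_2) + \ol{\mu}$. Hence the defining relation \eqref{eq: def bos further gen 1-pdm} can be rewritten compactly as
\begin{align*}
\langle G, \hgamma_{\omega} F \rangle = \omega\big( A_G \, A_F^* \big).
\end{align*}
Because $F, G$ enter $A_F^*$ and $A_G$ linearly and respectively, this encapsulates the block form \eqref{eq: bos further gen 1pdm matrix}; in particular $\hgamma_\omega$ defines a bounded operator on $\fH_{\mathrm{gen}}$, since the upper $2\times 2$ block is the usual generalized 1-pdm $\tgamma_\omega$ (bounded with trace-class pieces by earlier discussion), the off-diagonal vectors are $b_\omega, \ol{b}_\omega \in \fh$, and the scalar entry is $1$.

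For selfadjointness (equivalently, symmetry of the sesquilinear form) I would use the $*$-property $\omega(B^*) = \overline{\omega(B)}$ that any state on the CCR algebra (and its extension to $\cA^+$) satisfies. This yields
\begin{align*}
\overline{\langle F, \hgamma_\omega G \rangle} = \overline{\omega(A_F A_G^*)} = \omega\big((A_F A_G^*)^*\big) = \omega(A_G A_F^*) = \langle G, \hgamma_\omega F \rangle,
\end{align*}
so $\hgamma_\omega$ is symmetric, and being bounded it is selfadjoint.

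For positive semi-definiteness I would specialize $G = F$, obtaining
\begin{align*}
\langle F, \hgamma_\omega F\rangle = \omega\big( A_F A_F^* \big) = \omega\big( (A_F^*)^* A_F^* \big) \geq 0,
\end{align*}
where the last inequality is the defining positivity of the state $\omega$ applied to the observable $B^*B$ with $B := A_F^*$. The admissibility of this step rests on the fact that $A_F A_F^*$ is a polynomial of degree at most $2$ in creation and annihilation operators (plus a scalar), so it lies in $\cA^+$ and is therefore in the domain on which $\omega \in \cZ^+$ is defined.

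The only potential subtlety, and hence the main thing to be careful about, is this last domain issue: one must verify that the expectation values appearing in the definition of $\hgamma_\omega$ and in the positivity argument are legitimate for every $\omega \in \cZ^+$. This is immediate from the construction of $\cA^+$ as the closure of the extension of the CCR algebra to polynomials of degree up to $4$, together with the condition $\omega(\hNN^2) < \infty$; otherwise the argument is a direct copy of the proof given for the analogous statement about $\tgamma_\omega$.
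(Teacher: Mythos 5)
Your proof is correct and follows essentially the same route as the paper: positivity is obtained by setting $F=G$ and invoking the positivity of the state, and selfadjointness comes from the Hermitian symmetry of the defining sesquilinear form (the paper simply reads this off the explicit $3\times3$ block form, which encodes the same property $\omega(B^*)=\overline{\omega(B)}$ that you invoke). The only blemish is a labelling slip: with the paper's convention the defining relation is $\left< G,\hgamma_\omega F\right>=\omega\big(A_F\,A_G^*\big)$ rather than $\omega\big(A_G\,A_F^*\big)$, but since the form is Hermitian and your computation is symmetric in $F$ and $G$, this does not affect either argument.
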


\begin{proof}
The selfadjointness is a direct consequence of \eqref{eq: bos further gen 1pdm matrix}. By setting $F = G$ in \eqref{eq: def bos further gen 1-pdm}, $\hgamma \geq 0$ follows from the positivity of the state $\omega$.
\end{proof}

\begin{lemma} 
Let $\hgamma = \left( \begin{smallmatrix} \gamma & \alpha & b \\ \alpha^* & \1_{\fh} + \ol{\gamma} & \ol{b} \\ b^* & \ol{b}^* & 1 \end{smallmatrix} \right)  : \, \fh \oplus \fh \oplus \dC \ra \fh \oplus \fh \oplus \dC$ be a positive semi-definite trace class operator with $b \in \fh, \ \gamma \in \cL^1 (\fh)$, and $\alpha \in \cL^2 (\fh)$. Then, there is a unique quasifree state $\omega$ that has $\hgamma$ as its further generalized 1-pdm.\\
In particular, for any positive semi-definite trace class operator $\tgamma = \left( \begin{smallmatrix} \gamma & \alpha \\ \alpha^* & \1_{\fh} + \ol{\gamma} \end{smallmatrix} \right) : \, \fh \oplus \fh \ra \fh \oplus \fh$ with $\gamma \in \cL^1 (\fh)$ and $\alpha \in \cL^2 (\fh)$, there is an $\omega \in \cZ_{\mathrm{cqf}}^+$ with $\tgamma = \tgamma_{\omega}$.
\end{lemma}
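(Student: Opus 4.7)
The plan is to handle this in three stages: reduce to the centered case $b=0$ by a Weyl translation, diagonalise the remaining $2\times 2$ block $\tgamma$ by a boson Bogoliubov transformation to a form where the quasifree state is available directly from Definition~\ref{def: bos quasifree}, and finally undo both transformations. Uniqueness will follow from the fact that a quasifree state is uniquely determined by the pair $(f_{\omega}, h_{\omega})$ entering its characteristic function.

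\textbf{Reduction to $b=0$.} Given a centered quasifree state $\omega_{0}$ with $\tgamma_{\omega_{0}} = \tgamma$, define $\omega(A) := \omega_{0}(\mathds{W}_{b}^{*}\,A\,\mathds{W}_{b})$. The conjugation identities $\mathds{W}_{b}\,\bc(f)\,\mathds{W}_{b}^{*} = \bc(f) + \left<b,f\right>_{\fh}$ and $\mathds{W}_{b}\,\ba(f)\,\mathds{W}_{b}^{*} = \ba(f) + \left<f,b\right>_{\fh}$ produce the first moment $b_{\omega}=b$ while leaving the connected two-point functions (hence the $(\gamma,\alpha)$-block) unchanged, so $\hgamma_{\omega} = \hgamma$. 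Moreover, $\cZ_{\mathrm{qf}}^{+}$ is invariant under Weyl transformations, so $\omega$ is quasifree. Thus the main assertion reduces to the ``in particular'' statement, which is what I now construct.

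\textbf{Bogoliubov diagonalisation.} The key step, and the main obstacle, is a bosonic Williamson-type diagonalisation: for $\tgamma \geq 0$ with $\gamma \in \cL^{1}(\fh)$ and $\alpha = \alpha^{T} \in \cL^{2}(\fh)$, there exists a boson Bogoliubov transformation $U = \bigl(\begin{smallmatrix}u & v\\ \ol v & \ol u\end{smallmatrix}\bigr)$ with $v$ Hilbert--Schmidt such that
\begin{align*}
U^{*}\,\tgamma\,U \ = \ \begin{pmatrix} h & 0 \\ 0 & \1_{\fh} + \ol h \end{pmatrix}, \qquad h \geq 0, \ h \in \cL^{1}(\fh).
\end{align*}
The construction proceeds by analysing $\cS\,\tgamma$ as a self-adjoint operator on $\fh \oplus \fh$ equipped with the indefinite inner product $\left<\cdot, \cS\,\cdot\right>$, producing a spectral resolution compatible with the $\cS$-structure so that the diagonalising map satisfies $U^{*}\cS U = \cS$. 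The Shale--Stinespring condition $v \in \cL^{2}(\fh)$ is where the hypothesis $\alpha \in \cL^{2}(\fh)$ is used: since the diagonalised block has vanishing off-diagonal entries, the entire off-diagonal content $\alpha$ must be transported into $v$, and a direct Hilbert--Schmidt-norm estimate bounds $\|v\|_{\cL^{2}}$ in terms of $\|\alpha\|_{\cL^{2}}$ and $\|\gamma\|_{\cL^{1}}$.

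\textbf{Assembling the state and uniqueness.} For the diagonal data $(0,h)$, Definition~\ref{def: bos quasifree} directly furnishes a quasifree state $\omega_{d}$ with $f_{\omega_{d}}=0$ and $h_{\omega_{d}}=h$. Differentiating $t \mapsto \omega_{d}(\mathds{W}_{tf})$ at $t=0$ one reads off $\omega_{d}(\bc(g)\ba(f))$ and $\omega_{d}(\bc(g)\bc(\ol f))$ and verifies that $\tgamma_{\omega_{d}} = \bigl(\begin{smallmatrix}h & 0\\ 0 & \1+\ol h\end{smallmatrix}\bigr)$ and $b_{\omega_{d}} = 0$. Lemma~\ref{lem: bosBogUnitaryImplementation} provides a unitary implementation $\mathds{U}_{U^{-1}}$ of $U^{-1} = \cS U^{*} \cS$ (whose off-diagonal block is Hilbert--Schmidt iff $v$ is), and I set $\omega_{0}(A) := \omega_{d}(\mathds{U}_{U^{-1}} A\,\mathds{U}_{U^{-1}}^{*})$. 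By the Bogoliubov-invariance of $\cZ_{\mathrm{qf}}^{+}$ and by Lemma~\ref{lem: bos transformed 1pdm}, $\omega_{0} \in \cZ_{\mathrm{cqf}}^{+}$ with $\tgamma_{\omega_{0}} = (U^{-1})^{*}\,\tgamma_{\omega_{d}}\,U^{-1} = \tgamma$, as needed for the ``in particular'' statement. Composing with the Weyl shift of the first step gives the desired quasifree state with $\hgamma_{\omega} = \hgamma$. Uniqueness then holds because any quasifree state is determined on $\cA^{+}$ by its values on all Weyl operators, hence by $(f_{\omega},h_{\omega})$, and this pair is in turn determined by $\hgamma$ via the same diagonalisation procedure.
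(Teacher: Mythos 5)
Your reduction of the non-centered case to the centered case via the Weyl translation $\omega(A):=\omega_0(\mathds{W}_b^*A\,\mathds{W}_b)$ is exactly the argument the paper gives for the first assertion. The difference is in how the centered case is handled: the paper simply cites Theorem~11.4 of \cite{So2} (equivalently Theorem~1.6~(i) of \cite{Na1}) for the existence of a centered quasifree state with prescribed $\tgamma$, whereas you attempt to reprove that theorem by Bogoliubov diagonalisation. Your outline is indeed the standard proof strategy of the cited result, but as written it asserts rather than establishes the two steps that carry essentially all of the difficulty.

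First, the existence of a Bogoliubov transformation $U$ with $U^*\tgamma\,U=\bigl(\begin{smallmatrix}h&0\\0&\1+\ol h\end{smallmatrix}\bigr)$ is itself a theorem of the same depth as the statement being proven. The phrase ``analysing $\cS\tgamma$ \dots\ producing a spectral resolution compatible with the $\cS$-structure'' hides the real work: $\cS\tgamma$ is self-adjoint only with respect to an indefinite inner product, it need not be diagonalisable in the naive sense when $\tgamma$ is degenerate (Jordan-block behaviour on $\ker\tgamma$), and the known proofs work instead with $\tgamma^{1/2}\cS\tgamma^{1/2}$ or with the polar decomposition of an associated antilinear operator. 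Without this, the central step is a citation in disguise. (Also, the Shale--Stinespring bound does not really come from $\|\alpha\|_{\cL^2}$: from $\gamma=uhu^*+v(\1+\ol h)v^*$ one gets $vv^*\le\gamma$, so $\tr(v^*v)\le\tr(\gamma)<\infty$; the hypothesis $\alpha\in\cL^2$ is in fact already implied by $\tgamma\ge0$ and $\gamma\in\cL^1$.) Second, ``Definition~\ref{def: bos quasifree} directly furnishes a quasifree state $\omega_d$'' is a gap: a definition does not furnish existence. One must show that $f\mapsto\exp(-\langle f,(\1+h)f\rangle)$ is the characteristic function of an actual (positive, normalised) state, which is normally done by exhibiting the Gibbs density matrix $\Gamma(C)/\tr_{\cF^+}(\Gamma(C))$ with $C=h(\1+h)^{-1}$ and computing its Weyl expectation values. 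Your uniqueness argument (determination of the state by its values on Weyl operators, hence by $(f_\omega,h_\omega)$, hence by $\hgamma$) is fine. In short: same reduction as the paper, but where the paper leans on an external theorem you have sketched its proof, and the sketch leaves the two load-bearing steps unjustified.
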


\begin{proof}
The second part is a consequence of Theorem~11.4 in \cite{So2} or Theorem~1.6 (i) in \cite{Na1}. The first part follows from the second part due to the fact that a non-centered state with first moment $b \in \fh$ is completely characterized by a Weyl operator $\mathds{W}_b$ and the centered state $\omega_0$ defined by $\omega_0 (A) := \omega (\mathds{W}_b^* A \mathds{W}_b)$ for any $A \in \cA^-$. 
\end{proof}


Analogously, we define a generalized 2-particle density matrix $\hGamma$ on 
\begin{align*}
\fH_{\mathrm{sim}} := \left( \bigoplus\limits_{n=1}^4 \fh \otimes \fh \right) \oplus \left( \bigoplus\limits_{n=1}^2 \fh \right) \oplus \dC.
\end{align*}
Technically, the generalized 2-pdm should be defined on
\begin{align*}
\fH_{\mathrm{gen}} \otimes \fH_{\mathrm{gen}} &\cong \left( \bigoplus\limits_{n=1}^4 \fh \otimes \fh \right) \oplus \left( \bigoplus\limits_{n=1}^4 \fh \right) \oplus \dC. 
\end{align*}
It suffices, however, to consider $\fH_{\mathrm{sim}}$, since for any polynomial of degree 1 in annihilation and creation operators there are an ONB $\left\{ \varphi_k \right\}_{k=1}^{\infty}$ of $\fh$, an $N \in \NN$, and coefficients $\mu_k, \nu_k, \sigma_k \tau_k, \tilde{\mu}_k, \tilde{\nu}_k \in \dC, k = 1,\dots,N,$ such that 
\begin{align*}
\sum\limits_{k=1}^N \left( \mu_k \bc_k + \nu_k \ba_k + \overline{\sigma}_k \ba_k + \overline{\tau}_k \bc_k \right)  = \sum\limits_{k=1}^N \left( \tilde{\mu}_k \bc_k + \tilde{\nu}_k \ba_k \right).
\end{align*} 

For any $M \in \NN$ and a given ONB $\left\{ \varphi_k \right\}_{k=1}^{\infty}$ of $\fh$, we set $F:=\left(F_1,F_2,F_3,F_4\right)^T,\ G:=\left(G_1,G_2,G_3,G_4\right)^T,\ f:=\left(f_1,f_2\right)^T,$ and $g:=\left(g_1,g_2\right)^T$ with $F_i := \sum_{k,l = 1}^{M} \mu_{kl}^{(i)} \varphi_k \otimes \varphi_l, \ G_i := \sum_{k,l = 1}^{M} \nu_{kl}^{(i)} \varphi_k \otimes \varphi_l \in \fh \otimes \fh,\ f_j := \sum_{k = 1}^{M} \mu_{k}^{(j)} \varphi_k$, and $g_j := \sum_{k = 1}^{M} \nu_{k}^{(j)} \varphi_k \in \fh$, where the coefficients $\mu_{kl}^{(i)},\ \nu_{kl}^{(i)},\ \mu_{k}^{(j)},\ \nu_{k}^{(j)} \in \dC$ with $k,l \in \left\{ 1,\dots,M \right\},\ i \in \left\{ 1,2,3,4 \right\},\ j \in \left\{ 1,2 \right\}$. Then, we define the polynomials $\Po_1$ and $\Po_2$ by
\begin{align*}
\Po_1 \left( f \right) &:= \sum\limits_{k=1}^{M} \left( \mu_{k}^{(1)} \bc_k + \overline{\mu}_{k}^{(2)} \ba_k \right), \\
\Po_2 \left( F \right) &:= \sum\limits_{k,l=1}^{M} \left( \mu_{kl}^{(1)} \bc_k \bc_l + \mu_{kl}^{(2)} \bc_k \ba_l + \mu_{kl}^{(3)} \ba_k \bc_l + \mu_{kl}^{(4)} \ba_k \ba_l \right).
\end{align*}

\begin{definition}\label{def: bos gen 2pdm}
The generalized 2-pdm $\hGamma_{\omega}$ is defined by
\begin{align*}
\left< \begin{pmatrix} G \\ g \\ \mu \end{pmatrix},\  \hGamma_{\omega} \begin{pmatrix} F \\ f \\ \nu \end{pmatrix} \right>  := \omega \big( \left( \Po_2 \left( F \right) + \Po_1 \left( f \right) + \nu \right) \left( \Po_2^* \left( G \right) + \Po_1^* \left( g \right) + \overline{\mu} \right) \big)
\end{align*}
for any $F,G \in \bigoplus^4 \left( \fh \otimes \fh \right)$ with $\sum_{k=1}^{\infty} \mu_{kk}^{(3)},\sum_{k=1}^{\infty} \nu_{kk}^{(3)} < \infty,\ f,g \in \fh \oplus \fh$, and $\mu,\nu \in \dC$ as an operator on $\fH_{\mathrm{sim}}$. The polynomials $\Po_1$ and $\Po_2$ are of the form specified above. 
\end{definition}

As for the generalized 1-pdm, an easy consequence of the definition are the following properties.

\begin{proposition}\label{prop: gen 2-pdm is selfadjoint and nonnegative}
The generalized 2-pdm is selfadjoint and positive semi-definite.
\end{proposition}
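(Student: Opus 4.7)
My plan mirrors the proof of positivity of the generalized 1-pdm $\hgamma_\omega$ given just above in the paper, only with degree-2 polynomials in place of degree-1 ones. The key is to rewrite the defining bilinear form in the style $\omega(B B^*)$ for a single operator $B$ built linearly from the input data.

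For the positivity claim, given $F\in\bigoplus_{n=1}^4(\fh\otimes\fh)$ (with the trace-class condition on the third block), $f\in\fh\oplus\fh$ and $\nu\in\dC$, I would set the ``output'' argument $(G,g,\mu)$ in Definition~\ref{def: bos gen 2pdm} equal to the ``input'' $(F,f,\nu)$. Reading the definition of $\Po_1^*$ and $\Po_2^*$ as the formal adjoints of $\Po_1(f)$ and $\Po_2(F)$ (with the same coefficients), the operator $B:=\Po_2(F)+\Po_1(f)+\nu$ then satisfies $B^*=\Po_2^*(F)+\Po_1^*(f)+\ol\nu$, so the quadratic form becomes
\begin{align*}
\left\langle\begin{pmatrix}F\\ f\\ \nu\end{pmatrix},\hGamma_\omega\begin{pmatrix}F\\ f\\ \nu\end{pmatrix}\right\rangle
\ =\ \omega\bigl(B\,B^*\bigr).
\end{align*}
Since $B$ has degree at most $2$ in the creation/annihilation operators, $BB^*$ has degree at most $4$ and hence lies in the extension $\cA^+$ on which $\omega\in\cZ^+$ is defined; as $BB^*=(B^*)^*B^*\geq 0$ as an operator, the positivity of the state yields $\omega(BB^*)\geq 0$. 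This gives $\hGamma_\omega\geq 0$.

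For selfadjointness I would use the general identity $\overline{\omega(A)}=\omega(A^*)$ for every state $\omega$ and every $A\in\cA^+$. Denoting $B_F:=\Po_2(F)+\Po_1(f)+\nu$ and $B_G:=\Po_2(G)+\Po_1(g)+\mu$, the defining bilinear form equals $\omega(B_F B_G^*)$, and the Hermiticity condition $\langle(G,g,\mu),\hGamma_\omega(F,f,\nu)\rangle=\overline{\langle(F,f,\nu),\hGamma_\omega(G,g,\mu)\rangle}$ amounts to
\begin{align*}
\omega(B_F B_G^*)\ =\ \overline{\omega(B_G B_F^*)}\ =\ \omega\bigl((B_G B_F^*)^*\bigr)\ =\ \omega(B_F B_G^*),
\end{align*}
which is tautological. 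Equivalently, inspecting the matrix form of $\hGamma_\omega$ analogous to \eqref{eq: bos further gen 1pdm matrix}, each off-diagonal entry is the adjoint of its transpose partner by the same computation, so $\hGamma_\omega=\hGamma_\omega^*$.

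The only mildly delicate point is the admissibility of applying $\omega$ to $BB^*$: one must confirm that degree-$4$ monomials in $\bc,\ba$ lie in $\cA^+$ (guaranteed by the hypothesis $T_f\in\cC^4(\dR;\dC)$ that enters $\cZ^+$), and that the trace-class assumption $\sum_k\mu_{kk}^{(3)}<\infty$ imposed in Definition~\ref{def: bos gen 2pdm} ensures the sum $\Po_2(F)$ converges strongly on a dense domain and commutes with $\omega$-expectation. Both follow directly from the setup of Section~\ref{sec: basics}, so no new analytic ingredient is needed beyond what was already used for $\hgamma_\omega$.
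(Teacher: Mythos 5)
Your argument is correct and is essentially the paper's own: the paper simply notes that, as for the generalized 1-pdm, positivity follows by setting $(G,g,\mu)=(F,f,\nu)$ in Definition~\ref{def: bos gen 2pdm} so that the quadratic form becomes $\omega(BB^*)\geq 0$, and selfadjointness follows from the Hermitian symmetry of the defining expression (equivalently, from its explicit matrix form). Your additional remarks on the domain of $\omega$ and the trace-class condition are consistent with the setup of Section~\ref{sec: basics} and do not change the argument.
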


An explicit form of the generalized 2-pdm as a $7 \times 7$-matrix is given in Appendix~\ref{sec: appendix}. 


\subsection{Fermions}

The fermion Fock space $\cF^- \equiv \cF^- [\fh]$ is defined to be the orthogonal sum
\begin{align*}
 \cF^-[\fh]:=\bigoplus_{N=0}^{\infty} \fh^{\wedge N},
\end{align*}
where, for $N \in \NN$,
\begin{align*}
\fh^{\wedge N} := A_N \fh^{\otimes N}
\end{align*}
is the antisymmetric tensor product of $N$ copies of $\fh$ and $\fh^{\wedge 0} := \dC$. Here, the antisymmetrization operator $A \in \cB (\cF), \ A := \bigoplus_{N=0}^{\infty} A_N$ with $A_{N} : \, \fh^{\otimes N} \ra \fh^{\otimes N}$, is uniquely defined by
\begin{align*}
A_N \left( f_1 \otimes \cdots \otimes f_N \right) := \frac{1}{N!} \sum\limits_{\pi \in \fS_N} \left( -1 \right)^{\pi} f_1 \otimes \cdots \otimes f_N =: \frac{1}{\sqrt{N!}}f_1 \wedge \cdots \wedge f_N,
\end{align*}
for $f_1,\dots,f_N \in \fh$, where $\left( -1 \right)^{\pi}$ denotes the sign of the permutation $\pi \in \fS_N$. 

\begin{definition}
For any $f \in \fh$, the fermion creation and annihilation operators are denoted by $\fc ( f )$ and $\fa ( f )$, respectively. They are bounded operators on $\cF^-$. Introducing the anti-commutator $\left\{ A,B \right\} := AB + BA$, they are completely characterized by the properties
\begin{align*}
\fa ( f ) \vac = 0 \ , \quad \fc ( f ) \vac = f,
\end{align*}
and the canonical anti-commutation relations (CAR)
\begin{align*}
&\left\{ \fc ( f  ), \fc ( g ) \right\}= 0, \quad \left\{ \fa ( f  ), \fa ( g ) \right\} = 0, \ \text{and} \\
&\left\{ \fa ( f ), \fc ( g ) \right\} = \left< f, g \right>_{\fh} \1_{\cF}
\end{align*}
for any $ f,g  \in \fh$.  
\end{definition}

\begin{definition}
The $C^*$-algebra $\cA^-$ generated by $\left\{ 1, \fc (f), \fa (f) \big| \, f \in \fh \right\}$ is called CAR algebra.
\end{definition}

Let $\left\{ \varphi_k \right\}_{k=1}^{\infty}$ be a given ONB of $\fh$ and, for this basis, $\fc_k \equiv \fc(\varphi_k)$ and $\fa_k \equiv \fa(\varphi_k)$. For any $N\in\NN$, an ONB of $N$-particle Hilbert space $\fh^{\wedge N}$ is given by
\begin{align*}
\left\{  \fc_{k_1} \cdots \fc_{k_N} \vac \Big| \, 1 \leq k_1 < \cdots < k_N \right\}. 
\end{align*}
Moreover,
\begin{align*}
 \left\{  \fc_{k_1} \cdots \fc_{k_N} \vac \Big| \, N \in \NN \cup \{0\},\ 1\leq k_1< \dots < k_N \right\}
\end{align*}
is an ONB of the fermion Fock space $\cF^-$. 


\subsubsection*{Fermion Bogoliubov Transformation}

In this section, we fix an (arbitrary) orthonormal basis $\left\{ \varphi_k \right\}_{k=1}^{\infty}$ of $\fh$. As the definitions of complex conjugates of both a wave function and an operator depend on the choice of the ONB of $\fh$, so do the transforms defined in the following. We refer the reader to \cite{So2} for a basis independent formulation.

\begin{definition}\label{def: ferm Bogoliubov}
A linear map $U = \left( \begin{smallmatrix} u & v \\ \ol{v} & \ol{u} \end{smallmatrix} \right) : \, \fh \oplus \fh \ra \fh \oplus \fh$ is called fermion Bogoliubov transformation if $u : \, \fh \ra \fh$ and $v : \, \fh \ra \fh$ are two linear maps fulfilling 
\begin{subequations}\label{eq: def ferm Bogoliubov}
\begin{align}
u u^* + v v^* = \1_{\fh}, \quad u^* u + v^T \ol{v} = \1_{\fh},\\
u^* v + v^T \ol{u} = 0, \quad u v^T + v u^T = 0.
\end{align}
\end{subequations}
\end{definition}

\begin{remark}
Eqs.~(\ref{eq: def ferm Bogoliubov}a,b) on $u$ and $v$ are equivalent to the condition that $U$ is unitary, \ie,
\begin{align*}
U^*\, U = \1_{\fh \oplus \fh}, \quad U \, U^* = \1_{\fh \oplus \fh}.
\end{align*}
Therefore, the inverse of a fermion Bogoliubov transformation $U$ exists and is the fermion Bogoliubov transformation $U^*$.
\end{remark}

\begin{lemma}
Let $U = \left( \begin{smallmatrix} u & v \\ \ol{v} & \ol{u} \end{smallmatrix} \right) : \, \fh \oplus \fh \ra \fh \oplus \fh$ be a fermion Bogoliubov transformation. There is a unitary transformation $\mathds{U}_{U}: \, \cF^- \ra \cF^-$ such that
\begin{align*}
\mathds{U}_{U} \left[ \fc ( f ) + \fa ( \ol{g} ) \right] \mathds{U}_{U}^* = \fc ( u f + v g ) + \fa ( v \ol{f} + u \ol{g} )
\end{align*}
for all $f \oplus g \in \fh \oplus \fh$ if and only if $v$ is Hilbert--Schmidt. 
\end{lemma}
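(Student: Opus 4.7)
The plan is to prove the two implications separately. Necessity reduces to a particle-number computation in the transformed vacuum, while sufficiency requires the explicit construction of $\mathds{U}_U$ via an exponential ansatz for the transformed vacuum, which is where the actual work lies.

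For necessity, set $g = 0$ in the defining rule to obtain $\mathds{U}_U\,\fc(f)\,\mathds{U}_U^* = \fc(uf)+\fa(v\ol f)$, and take the adjoint to get $\mathds{U}_U\,\fa(f)\,\mathds{U}_U^* = \fa(uf)+\fc(v\ol f)$. For a fixed ONB $\{\varphi_k\}$ of $\fh$, use $\fa_k\mathds{U}_U^*\vac = \mathds{U}_U^*[\fa(u\varphi_k)+\fc(v\ol{\varphi_k})]\vac = \mathds{U}_U^*\fc(v\ol{\varphi_k})\vac$ to compute
\begin{align*}
\langle \mathds{U}_U^*\vac,\,\hNN\,\mathds{U}_U^*\vac\rangle_{\cF} \ = \ \sum_{k=1}^\infty \|\fa_k\,\mathds{U}_U^*\vac\|_{\cF}^2 \ = \ \sum_{k=1}^\infty \|v\ol{\varphi_k}\|_{\fh}^2 \ = \ \|v\|_{\cL^2(\fh)}^2.
\end{align*}
Finiteness of the particle-number expectation in the Fock vector $\mathds{U}_U^*\vac$ of the rotated representation is the Shale--Stinespring condition and forces $v\in\cL^2(\fh)$.

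For sufficiency, assume $v\in\cL^2(\fh)$ and first suppose that $u:\fh\to\fh$ is invertible. Set $K := \ol{u^{-1}v}$, which is Hilbert--Schmidt and, by the Bogoliubov identity $uv^T+vu^T = 0$, antisymmetric ($K^T = -K$). Define the candidate transformed vacuum
\begin{align*}
\Omega_U \ := \ \det(\1_{\fh}+K^*K)^{-1/4}\,\exp\!\Big(\tfrac12\sum_{i,j}K_{ij}\,\fc_i\fc_j\Big)\vac,
\end{align*}
where the Fredholm determinant is well-defined because $K^*K\in\cL^1(\fh)$. Denoting the rotated operators by $B^*(f) := \fc(uf)+\fa(v\ol f)$ and $B(f) := \fa(uf)+\fc(v\ol f)$, a short CAR computation gives $B(f)\,\Omega_U = 0$ for every $f\in\fh$. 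Define $\mathds{U}_U$ on the dense linear span of $n$-particle vectors by
\begin{align*}
\mathds{U}_U\,\fc(f_1)\cdots\fc(f_n)\,\vac \ := \ B^*(f_1)\cdots B^*(f_n)\,\Omega_U.
\end{align*}
Isometry follows from the Wick-type identity $\langle B^*(f_1)\cdots B^*(f_n)\Omega_U,\,B^*(g_1)\cdots B^*(g_n)\Omega_U\rangle = \det\big(\langle f_i,g_j\rangle_{\fh}\big)_{i,j=1}^n$, which itself is a consequence of the CAR for $B^*,B$ (equivalent to the Bogoliubov relations on $u,v$) together with $B(f)\Omega_U=0$, the normalization of $\Omega_U$ being precisely what absorbs the Wick contractions; surjectivity then follows by performing the same construction with $U^{-1} = U^*$.

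The main obstacle is the case in which $u$ is not invertible. Because $v$ is compact, $u^*u = \1_{\fh}-v^T\ol v$ fails to be invertible precisely on the (possibly infinite-dimensional) eigenspace of $v^T\ol v$ for eigenvalue $1$, and the exponential formula for $\Omega_U$ breaks down. I would address this by factoring $U = U_2 U_1$, where $U_1$ is a partial particle--hole transformation supported on $\ker u \oplus \ker u^*$ (implementable directly as a finite or countable product of rank-one Bogoliubov swaps together with $\Gamma$ of a suitable unitary on $\fh$) and $U_2$ is a Bogoliubov transformation whose upper-left block is invertible, so that the exponential ansatz applies. The composition $\mathds{U}_U := \mathds{U}_{U_2}\mathds{U}_{U_1}$ is then the desired implementer.
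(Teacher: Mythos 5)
The paper does not prove this lemma at all --- it is quoted with a reference to Araki --- so your sketch has to be measured against the standard proof in the literature, which is indeed the route you take: exponential ansatz for the transformed vacuum, Wick/determinant identity for isometry, and a particle--hole reduction when $u$ has a kernel. The sufficiency half is a reasonable outline of that argument. One remark there: since you are assuming $v \in \cL^2(\fh)$, the operator $v^T \ol{v} = \1_\fh - u^*u$ is trace class, so its eigenspace for the eigenvalue $1$, i.e.\ $\ker u$, is automatically \emph{finite}-dimensional; your worry about a ``possibly infinite-dimensional'' kernel does not arise, and the partial particle--hole swap is correspondingly harmless. The exact formula for $K$ (whether $\ol{u^{-1}v}$ or $v\,\ol{u}^{-1}$ etc.) and its antisymmetry need to be checked against the relations \eqref{eq: def ferm Bogoliubov}, but that is routine.

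The necessity half, however, has a genuine gap. Your computation correctly yields
\begin{align*}
\big\langle \mathds{U}_U^*\vac,\ \hNN\, \mathds{U}_U^*\vac \big\rangle_{\cF} \ = \ \sum_{k=1}^\infty \big\| v\,\ol{\varphi}_k \big\|_{\fh}^2 ,
\end{align*}
but this is an identity in $[0,\infty]$: both sides may simultaneously be infinite. A generic unit vector $\Psi = (\psi^{(N)})_{N}$ of $\cF^-$ need \emph{not} lie in $\cD(\hNN^{1/2})$ (take $\sum_N \|\psi^{(N)}\|^2 = 1$ with $\sum_N N \|\psi^{(N)}\|^2 = \infty$), and the mere existence of the unitary $\mathds{U}_U$ does not place $\mathds{U}_U^*\vac$ in that domain. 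So the sentence ``finiteness of the particle-number expectation \dots forces $v \in \cL^2$'' assumes exactly what has to be proved. The standard repair is structural: from the implementation relation one deduces the vacuum condition $\big[\fa(u^T\ol{g}) + \fc(v^T g)\big]\,\mathds{U}_U^*\vac = 0$ for all $g \in \fh$, solves the resulting recursion for the $N$-particle components of $\mathds{U}_U^*\vac$ (lowest nonvanishing component times an exponential of a pair-creation operator with antisymmetric kernel $K$), and then observes that square-summability of these components forces $K$, and hence $v$, to be Hilbert--Schmidt. Without some version of this analysis the ``only if'' direction is not established.
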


This condition on $v$ is called Shale-Stinespring condition \cite{SS1}. The proof of this lemma can be found for instance in \cite{Ar1}. 


\subsubsection*{States and Density Matrices}

\begin{definition}
For fermions, states are continuous linear functionals $\omega \in (\cA^-)^*$ on the CAR algebra which are normalized, $ \omega (\1_{\cF}) = 1$, and positive, $\omega (A) \geq 0$ for all positive semi-definite operators $A \in \cA^-$. 
\end{definition}
Since the fermion systems considered in this work --- like atoms and molecules --- are particle number-conserving, we only deal with even states, \ie, for every odd $n \in \NN$, we have
\begin{align*}
\omega ( e (f_1) \cdots e (f_n) ) = 0,
\end{align*}
where $e$ denotes either a creation operator $\fc$ or an annihilation operator $\fa$. Furthermore, we want the particle number expectation value and variance to be finite. Thus, we restrict ourselves to the following subset:

\begin{definition}
We denote the set of all even states with finite particle number variance by
\begin{align*}
\cZ^- = \left\{ \omega \in (\cA^-)^*  \Big| \, \omega \ \textrm{is a state with} \ \omega \big( \hNN^2 \big) < \infty\ \textrm{and} \ \omega \big( e (f_1) \cdots e (f_n) \big) = 0 \ \forall \ \text{odd} \ n \in \NN \right\}.
\end{align*}
Again, $e$ denotes either a creation operator $\fc$ or an annihilation operator $\fa$. 
\end{definition}

\begin{remark}
By the Cauchy--Schwarz inequality, we have for any state $\omega \in \cZ^-$
\begin{align*}
\omega \big( \hNN \big) \leq \sqrt{\omega \big( \hNN^2 \big)} < \infty.
\end{align*}
\end{remark}

\begin{definition}
A state $\omega \in \cZ^-$ is called pure if there is a $\Psi \in \cF^-$, such that 
\begin{align*}
\omega ( A ) = \left< \Psi, A \Psi \right>_{\cF}
\end{align*}
for any $A \in \cA^-$.
\end{definition} 

\begin{definition}
A state $\omega \in \cZ^-$ is called quasifree, shortly $\omega \in \cZ_{\mathrm{qf}}^-$, if it fulfills Wick's Theorem, \ie,
\begin{align}
&\omega ( e_1 \, e_2 \cdots e_{2N-1} ) = 0 \quad \text{and} \notag \\
&\omega ( e_1 \, e_2 \cdots e_{2N} ) = \sum\limits_{\pi} ( -1 )^{\pi} \omega ( e_{\pi (1)} \, e_{\pi (2)} ) \cdots \omega ( e_{\pi (2N-1)} \, e_{\pi (2N)} ) \label{eq: Pfaffian}
\end{align}
for every $N \in \NN$, where $ e_i $ denotes either a creation or an annihilation operator for every $i \in \left\{ 1,2,\dots,2N \right\}$. The sum is taken over all permutations $\pi \in \fS_{2N}$ satisfying 
\begin{align*}
\pi (1) < \pi (3) < \cdots < \pi (2N-1)\
\text{and} \
\pi (2k-1) < \pi (2k)
\end{align*}
for every $k \in \left\{ 1,2,\dots,N \right\}$. The right hand side of \eqref{eq: Pfaffian} is called Pfaffian.

The subset of the pure quasifree states is denoted by $\cZ_{\mathrm{pqf}}^-$. For any $N \in \NN$ and any orthonormal vectors $\varphi_1,\dots,\varphi_N \in \fh$, the vector $\varphi_1 \wedge \dots \wedge \varphi_N \in \cF^-$ is called Slater determinant and defines a pure quasifree state.
\end{definition}

\begin{remark}
The sets $\cZ_{\mathrm{qf}}^-$ and $\cZ_{\mathrm{pqf}}^-$ are invariant under Bogoliubov transformations.
\end{remark}

There is a characterization of pure quasifree states using the Bogoliubov transformation.

\begin{remark}\label{rem: ferm pqf Bogoliubov}
A state $\omega \in \cZ^-$ is pure quasifree if and only if there is a fermion Bogoliubov transformation $U : \fh \oplus \fh \ra \fh \oplus \fh$ with unitary representation $\mathds{U} : \cF^- \ra \cF^-$, such that for any $A \in \cA^-$
\begin{align*}
\omega ( A ) = \left< \mathds{U} \vac, A \mathds{U} \vac \right>_{\cF}.
\end{align*}
\end{remark}

\begin{remark}
Since we assume that the fermion states are even, they are, in particular, centered (see Definition~\ref{def: bos centered state} for bosons). Moreover, for bosons, the set of centered quasifree states is a proper subset of the set of quasifree states (Definition~\ref{def: bos quasifree}), $\cZ_{\mathrm{cqf}}^+ \subsetneq \cZ_{\mathrm{qf}}^+$, while for fermions all quasifree states are centered.
\end{remark}

\begin{definition}
A selfadjoint, positive semi-definite trace class operator $\rho \in \cL^{1} (\cF^-)$ of unit trace, $\tr_{\cF^-} \left( \rho \right) = 1$, is called density matrix.
\end{definition}

The map $ \cA^- \ra \dC, A \mapsto \tr_{\cF^-} \left( \rho^{\frac{1}{2}} A \rho^{\frac{1}{2}} \right)$ defines a state. Since we only study fermion systems that preserve the particle number, we restrict our attention to density matrices which commute with the particle number operator and have a finite squared particle number expectation value,
\begin{align}
 \rho= \bigoplus_{N=0}^{\infty} \rho^{(N)} \qquad \text{and} \qquad \tr_{\cF^-} \left( \rho^{\frac{1}{2}} \, \hNN^2 \, \rho^{\frac{1}{2}} \right) < \infty \label{eq: ferm particle conserving dm}.
\end{align}
Note that, if $m,n\geq0, m \neq n$, then 
\begin{align*}
\tr_{\cF^-} \left\{\rho^{\frac{1}{2}} \, \fc(f_1) \cdots \fc(f_m) \, \fa(g_1) \cdots \fa(g_n) \,\rho^{\frac{1}{2}}\right\} = 0
\end{align*} 
for any choice of $f_1,\dots,f_m,g_1,\dots,g_n\in\fh$, due to \eqref{eq: ferm particle conserving dm}. 

\begin{remark}
In particular, for every state $\omega \in \cZ^-$, there is a density matrix $\rho$ fulfilling \eqref{eq: ferm particle conserving dm} and $\tr_{\cF^-} \left( \rho^{\frac{1}{2}} \, A \, \rho^{\frac{1}{2}} \right) = \omega (A)$ for all $A \in \cA^-$.
\end{remark}


\subsubsection*{One- and Two-Particle Density Matrices}

We now introduce the notion of fermion one- and two-particle density matrices.

\begin{definition}
For any $\omega \in \cZ^-$, the one-particle density matrix (1-pdm) $\gamma_{\omega} \in \cB(\fh)$ of $\omega$ is defined by
\begin{align*}
\left< f , \gamma_{\omega} \, g \right>_{\fh} := \omega \big( \fc(g) \, \fa(f) \big)
\end{align*}
for $f,g \in \fh$.
\end{definition}

\begin{definition}
The two-particle density matrix (2-pdm) $\Gamma_{\omega} : \, \fh \otimes \fh \ra \fh \otimes \fh$ of a state $\omega \in \cZ^-$ is the bounded operator given by
\begin{align*}
\left< f_1 \otimes f_2 , \Gamma_{\omega} \left( g_1\otimes g_2 \right) \right>_{\fh \otimes \fh} := \omega \big( \fc(g_2) \, \fc(g_1) \, \fa(f_1) \, \fa(f_2) \big)
\end{align*}
for $f_1,f_2,g_1,g_2 \in \fh$.
\end{definition}

An outline of basic properties of the fermion 1- and 2-pdm can be found in Lemma~2.1 of \cite{BKM}.


\subsubsection*{Generalized One-Particle Density Matrix for Fermions}

Analogously to the boson case, we define a generalization of the 1-pdm for fermions as in \cite{BLS}. The complex conjugates of a function or of an operator are defined as for bosons in \eqref{eq: bos complex conjugate vector} and \eqref{eq: bos complex conjugate operator}, respectively.

\begin{definition}
Let $\omega \in \cZ^-$ and fix an ONB $\left\{ \varphi_k \right\}_{k=1}^{\infty}$ of $\fh$. Then, the generalized 1-pdm $\tgamma_{\omega}$ of $\omega$ is an operator on $\fh \oplus \fh$ defined by 
\begin{align*}
\left< \left( f_1 \oplus f_2 \right), \tgamma_{\omega} \left( g_1 \oplus g_2 \right) \right> := \omega \big( \big[ \fc ( g_1 ) + \fa ( \ol{g}_2 ) \big] \big[ \fa ( f_1 ) + \fc ( \ol{f}_2 ) \big] \big)
\end{align*}
for $f_1, f_2, g_1, g_2 \in \fh$.
\end{definition} 

\begin{remark}
Again, we define the operator $\alpha_{\omega}^* : \, \fh \ra \fh$ for every $f,g \in \fh$ by
\begin{align*}
\left< f, \alpha_{\omega}^* \, g \right> := \omega \big( \fc ( g ) \, \fc ( \ol{f} ) \big).
\end{align*}
Then, the generalized 1-pdm is expressed as the matrix 
\begin{align*}
\tgamma_{\omega} = \begin{pmatrix} \gamma_{\omega} & \alpha_{\omega} \\ \alpha_{\omega}^* & \1_{\fh} - \overline{\gamma}_{\omega} \end{pmatrix}.
\end{align*}
As for bosons, the fermion 1-pdm $\gamma$ is selfadjoint, but $\alpha$ is anti-symmetric, \ie, $\alpha^T = - \alpha$, as follows from CAR.
\end{remark}

\begin{lemma}\label{lem: ferm bounds on gen 1pdm}
For any $\omega \in \cZ^+$, the generalized 1-pdm $\tgamma_{\omega}$ is a positive semi-definite operator on $\fh \oplus \fh$. In particular, it is selfadjoint. Furthermore, it is bounded above by $\1_{\fh} \oplus \1_{\fh}$.
\end{lemma}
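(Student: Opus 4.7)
The plan is to mimic the proof of the analogous boson lemma for positivity and selfadjointness, and then exploit the \emph{anti}-commutation relations (as opposed to commutation) to obtain the sharper upper bound $\tgamma_\omega \leq \1_\fh \oplus \1_\fh$, which has no counterpart in the boson setting.

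For fixed $f_1, f_2 \in \fh$, introduce the shorthand $B := \fa(f_1) + \fc(\ol{f}_2)$, so that $B^* = \fc(f_1) + \fa(\ol{f}_2)$. Setting $g_1 = f_1$ and $g_2 = f_2$ in the definition of $\tgamma_\omega$ yields
\begin{align*}
\left\langle f_1 \oplus f_2, \; \tgamma_\omega (f_1 \oplus f_2) \right\rangle_{\fh \oplus \fh}
\ = \ \omega\big( B^* B \big) \ \geq \ 0,
\end{align*}
by positivity of $\omega$, and since this holds on the diagonal the sesquilinear form associated to $\tgamma_\omega$ is positive, hence in particular selfadjoint. (Alternatively, selfadjointness can be read off directly from the matrix form of $\tgamma_\omega$ given in the preceding remark, using that $\gamma_\omega = \gamma_\omega^*$ by definition and that the off-diagonal block $\alpha_\omega^*$ is, by construction, the adjoint of $\alpha_\omega$.)

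For the upper bound, the key input is the CAR. A direct computation gives
\begin{align*}
\{ B, B^* \}
\ &= \ \{ \fa(f_1), \fc(f_1) \} + \{ \fa(f_1), \fa(\ol{f}_2) \} + \{ \fc(\ol{f}_2), \fc(f_1) \} + \{ \fc(\ol{f}_2), \fa(\ol{f}_2) \} \\
\ &= \ \|f_1\|_\fh^2 \, \1_\cF + \|f_2\|_\fh^2 \, \1_\cF,
\end{align*}
where the two mixed anticommutators vanish by the CAR. Hence $B^* B = (\|f_1\|_\fh^2 + \|f_2\|_\fh^2)\,\1_\cF - B B^*$, and applying $\omega$ gives
\begin{align*}
\left\langle f_1 \oplus f_2, \; \tgamma_\omega (f_1 \oplus f_2) \right\rangle_{\fh \oplus \fh}
\ = \ \|f_1 \oplus f_2\|_{\fh \oplus \fh}^2 \, - \, \omega(B B^*) \ \leq \ \|f_1 \oplus f_2\|_{\fh \oplus \fh}^2,
\end{align*}
again by positivity of $\omega$. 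Polarization then extends this quadratic-form inequality to the operator bound $\tgamma_\omega \leq \1_\fh \oplus \1_\fh$, completing the proof.

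There is no real obstacle here: the proof is almost a one-line consequence of the CAR, and in particular of the fact that, for fermions, $B$ and $B^*$ anticommute into a scalar rather than commute. The only point requiring a tiny bit of care is bookkeeping the complex conjugation $f \mapsto \bar f$ appearing in the second component of $\fh \oplus \fh$ when computing $\{B,B^*\}$, but since $\|\bar f_2\|_\fh = \|f_2\|_\fh$ this has no effect on the final bound.
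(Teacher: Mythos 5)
Your proof is correct and is essentially the standard argument: the paper itself gives no proof but defers to \cite{BLS}, where the same reasoning appears --- positivity of $\omega$ on $B^*B$ gives $\tgamma_\omega \geq 0$ (and selfadjointness), and the CAR identity $B^*B + BB^* = \|f_1\oplus f_2\|^2\,\1_\cF$ together with $\omega(BB^*)\geq 0$ gives the upper bound. No gaps; your handling of the complex conjugation in the second component is also correct since $\|\ol{f}_2\| = \|f_2\|$.
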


We refer the reader to \cite{BLS} for a proof. From Lemma~\ref{lem: ferm bounds on gen 1pdm} we deduce $0 \leq \gamma \leq \1_{\fh}$ for the 1-pdm.

A consequence of Wick's theorem is the following lemma.
\begin{lemma}
A quasifree state $\omega \in \cZ_{\mathrm{qf}}^-$ is uniquely determined by its generalized 1-pdm $\tgamma_{\omega}$.
\end{lemma}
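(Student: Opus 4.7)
The strategy is to read off all two-point correlation functions from $\tgamma_\omega$, invoke Wick's theorem to extend to higher correlations, and then use density plus continuity to conclude.

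First I would observe that, unpacking the defining formula
$\langle (f_1\oplus f_2),\tgamma_\omega(g_1\oplus g_2)\rangle
=\omega\bigl([\fc(g_1)+\fa(\ol g_2)][\fa(f_1)+\fc(\ol f_2)]\bigr)$
by bilinearity and choosing the four pairs of one-sided inputs $(g_1,0),(0,g_2)$ against $(f_1,0),(0,f_2)$, the four blocks of $\tgamma_\omega$ encode, respectively, $\omega(\fc(g_1)\fa(f_1))$, $\omega(\fc(g_1)\fc(\ol f_2))$, $\omega(\fa(\ol g_2)\fa(f_1))$, and $\omega(\fa(\ol g_2)\fc(\ol f_2))$. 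Thus every two-point correlation $\omega(e_i e_j)$ with $e_i,e_j\in\{\fc(f),\fa(f):f\in\fh\}$ is explicitly recovered from $\tgamma_\omega$.

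Next, I would use the very definition of $\cZ_{\mathrm{qf}}^-$: for any quasifree state and any $N\in\NN$,
\begin{align*}
\omega(e_1\cdots e_{2N-1})&=0,\\
\omega(e_1\cdots e_{2N})&=\sum_\pi(-1)^\pi\,\omega(e_{\pi(1)}e_{\pi(2)})\cdots\omega(e_{\pi(2N-1)}e_{\pi(2N)}),
\end{align*}
with the sum over the admissible permutations described in the Pfaffian formula \eqref{eq: Pfaffian}. Hence, once the two-point functions are fixed, \emph{all} mixed monomials in $\fc$'s and $\fa$'s have their expectations determined. Consequently, two quasifree states $\omega,\omega'\in\cZ_{\mathrm{qf}}^-$ with $\tgamma_\omega=\tgamma_{\omega'}$ agree on every polynomial in creation and annihilation operators.

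Finally, the $*$-algebra of polynomials in $\{\fc(f),\fa(f):f\in\fh\}$ is norm dense in the CAR algebra $\cA^-$ (this is standard; it is essentially the content of the definition of $\cA^-$ as the $C^*$-algebra generated by these operators). Since states are by definition continuous linear functionals on $\cA^-$, agreement on a dense subspace forces $\omega=\omega'$ on all of $\cA^-$, proving the lemma. The only mildly delicate point is the bookkeeping in the first step to check that the four block entries of $\tgamma_\omega$ really suffice to encode \emph{every} ordering of two factors (not just those in which $\fc$ precedes $\fa$); the CAR relations let us rewrite any opposite ordering in terms of one of the four block entries plus a scalar, so this causes no trouble.
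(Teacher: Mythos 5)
Your proof is correct and follows exactly the route the paper intends: the paper states this lemma as "a consequence of Wick's theorem" without further detail, and your argument — recover all two-point functions from the four blocks of $\tgamma_\omega$ (using the CAR to handle reorderings), let the Pfaffian formula \eqref{eq: Pfaffian} determine all higher monomials, and conclude by norm-density of polynomials in $\cA^-$ together with continuity of states — is precisely the expansion of that remark.
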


Furthermore, the generalized 1-pdm transforms in a specific manner under the Bogoliubov transformation.
\begin{lemma}\label{lem: ferm transformed gen 1pdm}
Let $\omega \in \cZ^-$ be a state with generalized 1-pdm $\tgamma : \, \fh \oplus \fh \ra \fh \oplus \fh$. For a fermion Bogoliubov transformation $U : \, \fh \oplus \fh \ra \fh \oplus \fh$ with unitary representation $\mathds{U}_{U} : \, \cF^- \ra \cF^-$ define $\omega_{U}$ by $\omega_{U} ( A ) := \omega \big( \mathds{U}_{U} \, A \, \mathds{U}_{U}^* \big)$ for any $A \in \cA^-$. The generalized 1-pdm $\tgamma_{U}$ corresponding to the state $\omega_{U}$ is given by
\begin{align}
\tgamma_{U} = U^* \, \tgamma \, U.\label{eq: ferm transformed gen 1pdm}
\end{align}
In particular, $\tgamma^2 = \tgamma$ implies $\tgamma_{U}^2 = \tgamma_{U}$.
\end{lemma}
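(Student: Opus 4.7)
The plan is to mimic the computation carried out for bosons in Lemma~\ref{lem: bos transformed 1pdm}, replacing the boson creation/annihilation operators with their fermion counterparts and exploiting the fact that the fermion Bogoliubov transformation $U$ is genuinely unitary (Remark following Definition~\ref{def: ferm Bogoliubov}), rather than merely $\cS$-unitary as in the boson case.

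First, I would start from the defining identity
\[
\big\langle (f_1 \oplus f_2),\, \tgamma_{U} (g_1 \oplus g_2)\big\rangle
\ = \ \omega_{U}\!\big(\big[\fc(g_1)+\fa(\ol{g}_2)\big]\big[\fa(f_1)+\fc(\ol{f}_2)\big]\big),
\]
and insert $\mathds{U}_{U}^{*}\mathds{U}_{U} = \1_{\cF^-}$ between the two bracketed factors and on either side after using $\omega_U(A) = \omega(\mathds{U}_U A \mathds{U}_U^*)$. Then I would apply the fermion analogue of Lemma~\ref{lem: bosBogUnitaryImplementation} (stated above Remark following Definition~\ref{def: ferm Bogoliubov}) to compute
\[
\mathds{U}_{U}\big[\fc(g_1)+\fa(\ol{g}_2)\big]\mathds{U}_{U}^{*}
\ = \ \fc(u g_1 + v g_2)+\fa(u\ol{g}_2+v\ol{g}_1),
\]
and the analogous formula for the $f$-bracket. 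Recognizing $u g_1 + v g_2$ and $v\ol{g}_1+u\ol{g}_2$ as the two components of $U(g_1\oplus g_2)$, and similarly for $f$, the resulting expression is exactly $\langle U(f_1\oplus f_2),\,\tgamma\, U(g_1\oplus g_2)\rangle$. Since this holds for all $f_1,f_2,g_1,g_2 \in \fh$, the identity \eqref{eq: ferm transformed gen 1pdm} follows.

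For the second assertion, unitarity of $U$ makes this immediate: assuming $\tgamma^{2}=\tgamma$,
\[
\tgamma_{U}^{\,2} \ =\ U^{*}\tgamma\,U\,U^{*}\tgamma\,U \ =\ U^{*}\tgamma^{2}U \ =\ U^{*}\tgamma\,U \ =\ \tgamma_{U},
\]
where the second equality uses $UU^{*}=\1_{\fh\oplus\fh}$, in contrast to the boson computation where $U\cS U^{*}=\cS$ absorbs the intermediate $\cS$.

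I do not expect a genuine obstacle here; the argument is a direct computation and the only point that requires care is the bookkeeping of complex conjugates in the second slot of $U(g_1\oplus g_2)$, which must match the convention in Definition~\ref{def: ferm Bogoliubov}. As in the boson case, one could equally well phrase the proof basis-independently using $\fh\oplus\fh^{*}$, but since the statement and earlier lemmas are formulated with a fixed ONB and an antilinear complex conjugation, I would stick with the direct matrix-element computation outlined above.
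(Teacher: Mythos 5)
Your proposal is correct and follows essentially the same route as the paper's own proof: insert $\mathds{U}_{U}^{*}\mathds{U}_{U}=\1_{\cF^-}$ between the two bracketed factors, apply the unitary implementation formula to identify the matrix element as $\langle U(f_1\oplus f_2),\tgamma\, U(g_1\oplus g_2)\rangle$, and then use the genuine unitarity $UU^{*}=\1_{\fh\oplus\fh}$ for the projection property. Nothing is missing.
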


\begin{proof}
For any $f_1, f_2, g_1, g_2 \in \fh$, we have
\begin{align*}
\left< \left( f_1 \oplus f_2 \right), \tgamma_{U} \left( g_1 \oplus g_2 \right) \right> &= \omega \big( \mathds{U}_{U} \left[ \fc ( g_1 ) + \fa ( \ol{g}_2 ) \right] \mathds{U}_{U}^* \mathds{U}_{U} \left[ \fa ( f_1 ) + \fc ( \ol{f}_2 ) \right] \mathds{U}_{U}^* \big) \notag\\
&= \omega \big( \big[ \fc ( u g_1 + v g_2 ) + \fa ( u \ol{g}_2 + v \ol{g}_1 ) \big] \big[ \fa ( u f_1 + v f_2 ) + \fc ( u \ol{f}_2 + v \ol{f}_1 ) \big] \big) \notag \\
&= \left< U \left( g_1 \oplus g_2 \right), \tgamma \, U \left( f_1 \oplus f_2 \right) \right> 
\end{align*}
for the matrix elements of $\tgamma_{U}$. Thus, \eqref{eq: ferm transformed gen 1pdm} holds. Furthermore, by the unitarity of $U$ and \eqref{eq: ferm transformed gen 1pdm}, we obtain $\tgamma_{U}^2 = U^* \, \tgamma \, U \, U^* \, \tgamma \, U = U^* \, \tgamma^2 \, U$ and $\tgamma^2 = \tgamma$ yields $\tgamma_{U}^2 = U^* \, \tgamma \, U = \tgamma_{U}$.
\end{proof}


\subsection{Bogoliubov--Hartree--Fock Theory}

\subsubsection*{Boson Bogoliubov--Hartree--Fock Theory}

For bosons, the number of particles in most physically relevant models is not fixed. As, for instance, in a system of photons interacting with an electron, photons can appear or disappear, depending on what is energetically favorable. Thus, the particle number should not be fixed in the variational process yielding the ground state energy $E_{\gs} := \inf \left\{ \sigma (\Hf) \right\}$. By the Rayleigh--Ritz principle, the ground state energy (as well as the ground state) is determined by
\begin{align*}
E_{\gs} = \inf \left\{ \omega (\Hf) \Big| \, \omega \in \cZ^+ \right\}.
\end{align*}
In the Bogoliubov--Hartree--Fock (BHF) theory, the variation is restricted to quasifree states:
\begin{align*}
E_{\mathrm{BHF}} := \inf \left\{ \omega (\Hf) \Big| \, \omega \in \cZ_{\mathrm{qf}}^+ \right\}.
\end{align*}
The BHF energy $E_{\mathrm{BHF}}$ is an upper bound to the ground state energy $E_{\gs}$. Note that, unlike the common definitions of quasifreeness, our quasifree states are not necessarily centered. Since a quasifree state is uniquely determined by its further generalized 1-pdm $\hgamma_{\omega}$, there is an energy functional $\cE_{\mathrm{BHF}} : \, \cD (\cE_{\mathrm{BHF}}) \ra \dC$, $ \cD (\cE_{\mathrm{BHF}}) \subseteq \cB (\fh \oplus \fh \oplus \dC)$, such that $\omega (\Hf) = \cE_{\mathrm{BHF}} (\hgamma_{\omega})$. Thus, the BHF energy is rewritten as
\begin{align*}
E_{\mathrm{BHF}} = \inf \left\{ \cE_{\mathrm{BHF}} (\hgamma_{\omega}) \Big| \, \omega \in \cZ_{\mathrm{qf}}^+ \right\} = \inf \left\{ \cE_{\mathrm{BHF}} (\hgamma) \Big| \, \hgamma \geq 0, \ \tr (\gamma)<\infty \right\}.
\end{align*}
The second equality is a consequence of two facts: On the one hand, any quasifree state $\omega$ with first moment $b$ is linked to a unique centered quasifree state via the Weyl transformation $\mathds{W}_{b}$. On the other hand, any positive semi-definite operator $\tgamma = \left( \begin{smallmatrix} \gamma & \alpha \\ \alpha^* & \1_{\fh} + \ol{\gamma} \end{smallmatrix} \right)$ on $\fh \oplus \fh$ fulfilling $\tr ( \gamma ) < \infty$ is the generalized 1-pdm of a centered quasifree state, cf. \cite{Na1}.

\subsubsection*{Fermion Bogoliubov--Hartree--Fock Theory}

For fermions, assume $U : \, \dR^3 \ra \dR$ to be an external potential and $V : \, \dR^3 \times \dR^3 \ra \dR_0^+$ a repulsive interaction between two particles. There are multiplication operators associated to these potentials which we also denote by $U$ and $V$, respectively. With the Laplace operator $\Laplace$, the Hamiltonian of the system is given by
\begin{align*}
H_N := \sum\limits_{i=1}^N \left[ - \Laplace_i - U (x_i) \right] + \frac{1}{2} \sum\limits_{1 \leq i,j \leq N} V (x_i,x_j),
\end{align*}
where $x_i \in \dR^3, \ 1 \leq i \leq N$. We only allow for potentials for which $H_N$ is defined as a selfadjoint operator on a dense domain $\cD_N$ and is bounded below. The second quantization of this Hamiltonian is
\begin{align*}
\Hf = \sum\limits_{i,j=1}^{\infty} h_{ij} \fc_i \fa_j + \frac{1}{2} \sum\limits_{i,j,k,l=1}^{\infty} V_{ij,kl} \fc_j \fc_i \fa_k \fa_l,
\end{align*}
where the one-particle operator $h$ and the interaction operator are given by
\begin{align*}
h_{ij} := \left< \varphi_i, \left( - \Laplace - U \right) \varphi_j \right>_{\fh},\\
V_{ij,kl} := \left< \varphi_i \otimes \varphi_j, V \left( \varphi_k \otimes \varphi_l \right) \right>_{\fh \otimes \fh}, 
\end{align*}
respectively, for any elements of a given ONB $\left\{ \varphi_i \right\}_{i=1}^{\infty}$ of $\fh$ with $\sn{ \nabla \varphi_i }_{\fh} < \infty$. The Hamiltonian $H_N$ is the restriction of $\Hf$ to the $N$-particle Fock space $\bigwedge^N \fh$. If we do not assume the dynamics to conserve the particle number, the ground state energy of the $N$-particle system is determined by the Rayleigh--Ritz principle:
\begin{align*}
E_{\gs} = \inf \left\{ \omega (\Hf) \Big| \, \omega \in \cZ^- \right\}.
\end{align*}
Using the energy functional
\begin{align*}
\cE (\gamma,\Gamma) := \tr ( h \gamma ) + \frac{1}{2} \tr ( V \Gamma ),
\end{align*}
this can be re-expressed as
\begin{align*}
E_{\gs} = \inf \left\{ \cE (\gamma,\Gamma) \Big| \, (\gamma,\Gamma) \ \text{is representable} \right\}.
\end{align*}
Here, the problem of representability arises, \ie, a classification of all representable pairs in $\fh \times \fh \otimes \fh$. In order to obtain an upper bound to $E_{\gs}$, the variation is restricted to quasifree states which yields the Bogoliubov--Hartree--Fock energy
\begin{align*}
E_{\mathrm{BHF}} := \inf \left\{ \omega (\Hf) \Big| \, \omega \in \cZ_{\mathrm{qf}}^- \right\} = \inf \left\{ \cE_{\mathrm{BHF}} (\tgamma) \Big| \, \tgamma \geq 0, \ \tr_{\fh} (\gamma) < \infty \right\}.
\end{align*}
For any quasifree state $\omega$, the Bogoliubov--Hartree--Fock functional $\cE_{\mathrm{BHF}}$ is given by $\cE_{\mathrm{BHF}} (\tgamma_{\omega}) := \omega (\Hf)$, where $\tgamma_{\omega}$ is the generalized 1-pdm of $\omega$. 


\section{Bosonic Representability Conditions and the Generalized Two-Particle Density Matrix}\label{sec: Boson representability}

\subsection{Particle Number-Conserving Systems}

To our knowledge, sets of representability conditions given in the literature are for particle number-conserving systems for fermions, as well as for bosons. I.e. only states, that fulfill
\begin{align*}
\omega \left( \left[ \prod\limits_{k=1}^n \bc ( f_k ) \right] \left[ \prod\limits_{l=1}^m \ba ( g_l ) \right] \right) = 0
\end{align*}
for any two sets $\left\{ f_k \right\}_{k=1}^n, \left\{ g_l \right\}_{l=1}^m \subseteq \fh$ with $m,n \in \NN \cup \left\{ 0 \right\}$ and $m \neq n$, are considered.

Since the dynamics of many realistic physical boson systems do not conserve the particle number, an alternative should be found. First, we restate some representability conditions for bosons. 

\begin{definition}
Let $\left( \gamma, \Gamma \right)$ be a pair of operators on $\fh \times \left( \fh \otimes \fh \right)$. We say that $\left( \gamma, \Gamma \right)$ satisfies the representability conditions up to second order with particle number-conservation if
\begin{enumerate}
\item $\left( \gamma, \Gamma \right)$ is admissible,
\item $\Gamma$ satisfies the P-condition, \ie, 
\begin{align*}
\Gamma \geq 0,
\end{align*}
and
\item the G-condition, \ie, for any $A \in \cB ( \fh )$ we have 
\begin{align*}
\tr \left( \left( A^* \otimes A \right) \left[ \Gamma + \Ex \left( \gamma \otimes \1_{\fh} \right) \right] \right) \geq \abs{\tr \left( A \, \gamma \right)}^2.
\end{align*}
\end{enumerate}
\end{definition}

These conditions can be found, \eg, in \cite{GP1,GM1}. Note that these conditions are only necessary conditions, but do not ensure that the considered operators are one- and two-particle density matrices. Furthermore, we omit here other known conditions like the $T_1$- and $T_2$-condition, cf. \cite{Er2}.

\begin{remark}
The Q-condition is omitted since it follows from the P-condition and the positivity of $\gamma$, see \cite{GM1}. Nevertheless, in the same manner, we can rephrase the Q-condition from \cite{GM1} as
\begin{align*}
\Gamma \geq - \left( \1_{\fh \otimes \fh} + \Ex \right) \left( \gamma \otimes \1_{\fh} + \1_{\fh} \otimes \gamma + \1_{\fh} \otimes \1_{\fh} \right).
\end{align*}
\end{remark}

The representability conditions for bosons up to second order are derived in the same spirit as it is done for fermions in \cite{BKM}. 

\begin{theorem}\label{thm: bos repr cond 2nd order w part cons}
Let $\omega$ be a linear continuous functional on $\cA^+$ such that $\omega (\1) = 1$, $\omega \big(\hNN^2\big) < \infty$, and $\omega \big(e_1 \dots e_{2N-1}\big) = 0$ for all $N \in \NN$, where $e_k$ denotes either a creation or annihilation operator. Furthermore, let $\Gamma_{\omega}$ and $\gamma_{\omega}$ be the corresponding 1- and 2-pdm and $\left\{ \varphi_k \right\}_{k=1}^{\infty}$ an ONB of $\fh$. Then the following statements are equivalent:
\begin{itemize}
\item[(i)] For any polynomial $\Po_r \in \cA^+$ in creation and annihilation operators of degree $r \leq 2$, we have
\begin{align*}
\omega ( \Po_r \Po_r^* ) \geq 0.
\end{align*}
\item[(ii)] $\gamma_{\omega} \geq 0$ and $\Gamma_{\omega}$ fulfills the G- and P-condition.
\end{itemize} 
\end{theorem}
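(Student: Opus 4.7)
The proof is a biconditional, and I would handle each direction separately.

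For the implication (i) $\Rightarrow$ (ii), the strategy is to plug three carefully chosen test polynomials $\Po_r$ of degree $r\leq 2$ into the hypothesis $\omega(\Po_r\Po_r^*)\geq 0$ and read off the three separate assertions of (ii). Taking $\Po_1 = \bc(f)$ directly gives $\omega(\Po_1\Po_1^*) = \omega(\bc(f)\,\ba(f)) = \langle f, \gamma_\omega f\rangle \geq 0$, hence $\gamma_\omega \geq 0$. Taking $\Po_2 = \sum_{k,l=1}^{M} \mu_{kl}\,\bc_k\bc_l$ with arbitrary $M\in\NN$ and symmetric $\mu_{kl} = \mu_{lk}$ yields $\omega(\Po_2\Po_2^*) = \langle \psi, \Gamma_\omega \psi\rangle$ with $\psi = \sum \mu_{kl}\,\varphi_k\otimes\varphi_l$; since $\Gamma_\omega$ is symmetric under $\Ex$ and so annihilates the antisymmetric subspace, letting $\psi$ range over all symmetric tensors gives the P-condition. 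For the G-condition, apply (i) to $\Po_2 = \sum_{ij} B_{ij}\,\bc_i\ba_j - \tr(B\gamma_\omega)\,\1$ for arbitrary finite-rank $B\in\cB(\fh)$; a single CCR commutation $\ba_j\bc_k = \bc_k\ba_j + \delta_{jk}$ combined with the definitions of $\gamma_\omega$ and $\Gamma_\omega$ yields
\begin{align*}
\omega(\Po_2\Po_2^*) \;=\; \tr\big((B\otimes B^*)[\Gamma_\omega + \Ex(\gamma_\omega\otimes\1)]\big) - |\tr(B\gamma_\omega)|^2 ,
\end{align*}
which, upon relabeling $A:=B^*$ and using the symmetry of $\Gamma_\omega$ under $\Ex$, is exactly the G-condition; a density argument extends this from finite-rank to all of $\cB(\fh)$.

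For the converse (ii) $\Rightarrow$ (i), I would decompose an arbitrary $\Po \in \cA^+$ of degree $\leq 2$ by its \emph{charge} $q := \#\bc - \#\ba$, writing $\Po = \sum_{q=-2}^{2} P_q$. In the particle-number-conserving setting of this section, $\omega$ vanishes on any monomial with unequal numbers of creation and annihilation operators, so the cross-terms $\omega(P_q P_{q'}^*)$ vanish unless $q = q'$; it therefore suffices to show $\omega(P_q P_q^*) \geq 0$ sector by sector. The $q=0$ sector $P_0 = C\,\1 + \sum A_{ij}\bc_i\ba_j$ reduces, upon completing the square in $C$, to the G-condition via the identity above. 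The $q=+1$ sector $P_{+1} = \sum \zeta_k\bc_k$ gives $\omega(P_{+1}P_{+1}^*) = \langle \zeta, \gamma_\omega \zeta\rangle\geq 0$, and the $q=-1$ sector $P_{-1} = \sum \xi_k\ba_k$, after one CCR commutation, gives $\|\xi\|^2 + \langle \bar\xi, \gamma_\omega \bar\xi\rangle \geq 0$; both use $\gamma_\omega \geq 0$. The $q=+2$ sector $P_{+2} = \sum \beta_{kl}\bc_k\bc_l$ (WLOG symmetric in $k,l$) yields $\langle \psi, \Gamma_\omega \psi\rangle \geq 0$ directly from P. The remaining $q=-2$ case requires normal-ordering $\ba_k\ba_l\bc_m\bc_n$ through successive CCR commutations, producing a positive constant, terms linear in $\gamma_\omega$, and a P-type bilinear form in $\Gamma_\omega$, each non-negative under (ii).

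The main technical task is the normal-ordering computation in the $q=-2$ sector: one must verify that every term emerging from the six commutator steps is either manifestly non-negative or combines with a companion to form a $\gamma_\omega$- or $\Gamma_\omega$-positive bilinear form in the coefficients. This parallels the fermion treatment in \cite{BKM}; the CCR signs work in our favor here, so no delicate sign cancellations are required, only a careful organization of the terms.
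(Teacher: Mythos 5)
The paper does not actually prove this theorem: it states that the proof is analogous to the fermion case in \cite{BKM} and omits the details, so there is no line-by-line comparison to make. Your argument is a correct realization of exactly that strategy --- testing positivity on $\bc(f)$, on $\sum\mu_{kl}\bc_k\bc_l$, and on $\sum B_{ij}\bc_i\ba_j - \tr(B\gamma_\omega)\1$ to extract $\gamma_\omega\geq 0$, the P-condition, and the G-condition, and then normal-ordering a general degree-$2$ polynomial charge sector by charge sector for the converse. Your key identity $\omega(\Po_2\Po_2^*) = \tr\big((B\otimes B^*)[\Gamma_\omega+\Ex(\gamma_\omega\otimes\1)]\big) - |\tr(B\gamma_\omega)|^2$ checks out against the paper's conventions (using $\Gamma_\omega\Ex=\Gamma_\omega$ and $\gamma_\omega=\gamma_\omega^*$ to match $A:=B^*$), and the sector-by-sector positivity in the converse is sound, including the $q=-2$ normal-ordering, whose contact terms are each individually non-negative under $\gamma_\omega\geq 0$ and P.

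One point must be made explicit, because it is where your proof uses more than the theorem literally assumes. The stated hypothesis is only that odd monomials vanish, $\omega(e_1\cdots e_{2N-1})=0$; this kills the odd-charge cross terms such as $\omega(P_{+2}P_{+1}^*)$, but it does \emph{not} kill the even-degree, nonzero-charge cross terms $\omega(P_{+2}P_0^*)$, $\omega(P_{+1}P_{-1}^*)$, and $\omega(P_{+2}P_{-2}^*)$, which involve expectations like $\omega(\bc_k\bc_l)$ and $\omega(\bc_i\bc_j\bc_k\ba_l)$ that are controlled neither by $\gamma_\omega$ and $\Gamma_\omega$ nor by the G- and P-conditions. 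Your appeal to ``the particle-number-conserving setting of this section'' is certainly the intended reading (the section's preamble and the theorem's title make that clear), but it is a genuine strengthening of the written hypothesis, and without it the implication (ii) $\Rightarrow$ (i) fails. State explicitly that you assume $\omega$ annihilates every monomial with unequal numbers of creation and annihilation operators; with that assumption spelled out, and with the routine finite-rank-to-bounded extension of the G-condition noted, the proof is complete.
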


Since the proof is analogous to the fermion case considered in \cite{BKM}, we omit the details here. Note that, unlike the fermion case, the trace class conditions on the 1- and 2-pdm cannot be derived from the polynomials since the boson creation and annihilation operators are unbounded.


\subsection{Systems without Particle Number-Conservation and the Generalized Two-Particle Density Matrix}

We generalize the definition of the representability conditions up to second order to systems --- and, thus, states --- which do not conserve the particle number. These representability conditions arise in the same manner as those for particle conserving states by considering expectation values of polynomials up to second order in the creation and annihilation operators. Due to the absence of particle number-conservation, the expectation values of terms, in which the number of creation operators is not equal to the number of annihilation operators, do, in general, not vanish. A simple consequence of Definition~\ref{def: bos gen 2pdm} is the following proposition.
\begin{proposition}
The representability conditions up to second order are satisfied if the pair $\left( \gamma,\Gamma \right)$ of operators on $\fh \times \left( \fh \otimes \fh \right)$ is admissible and $\hGamma$ is positive semi-definite as an operator on $\fH_{\mathrm{sim}}$. 
\end{proposition}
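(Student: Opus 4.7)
The plan is to argue directly from Definition~\ref{def: bos gen 2pdm} that positivity of $\hGamma$ on the diagonal encodes exactly the representability condition $\omega(PP^*)\geq 0$ for all polynomials $P$ of degree at most $2$ in the boson creation and annihilation operators, which is the natural analogue of condition (i) in Theorem~\ref{thm: bos repr cond 2nd order w part cons} in the non--particle-number-conserving setting.

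First I would show that any polynomial $P \in \cA^+$ of degree at most $2$ in the creation and annihilation operators, with respect to the truncation in a given ONB $\{\varphi_k\}_{k=1}^\infty$ of $\fh$, can be written in the form
\begin{align*}
P \ = \ \Po_2(F) + \Po_1(f) + \nu,
\end{align*}
for suitable $F \in \bigoplus_{n=1}^{4}(\fh\otimes\fh)$ (with the diagonal trace of the third block finite), $f \in \fh \oplus \fh$, and $\nu \in \dC$. This is essentially bookkeeping: any monomial of degree $\leq 2$ in the $\bc_k,\ba_k$ appears in one of the four tensor blocks of $\Po_2$ or in $\Po_1$, and any purely numerical term is absorbed into $\nu$. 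No reordering via CCR is needed because all four orderings $\bc\bc,\bc\ba,\ba\bc,\ba\ba$ are included in the definition of $\Po_2$.

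Next, I specialize Definition~\ref{def: bos gen 2pdm} to $G=F$, $g=f$, $\mu=\nu$, which yields
\begin{align*}
\omega(P P^*) \ = \ \left\langle \begin{pmatrix} F \\ f \\ \nu \end{pmatrix},\ \hGamma_\omega \begin{pmatrix} F \\ f \\ \nu \end{pmatrix} \right\rangle_{\fH_{\mathrm{sim}}}.
\end{align*}
By the assumed positivity $\hGamma \geq 0$ on $\fH_{\mathrm{sim}}$, the right-hand side is non-negative, so $\omega(PP^*)\geq 0$ for every degree-$\leq 2$ polynomial $P$, establishing the representability condition up to second order. The hypothesis of admissibility of $(\gamma,\Gamma)$ is used in parallel to guarantee the structural requirements (selfadjointness, trace class of $\gamma$ and $\Gamma$, symmetry $\Ex\Gamma = \Gamma\Ex = \Gamma$) that are not encoded purely in the positivity of $\hGamma$.

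I do not foresee a serious obstacle: the content of the proposition is really that $\hGamma$ has been designed precisely so that its diagonal quadratic form reproduces the expectation values $\omega(PP^*)$. The only step that requires a moment of care is the identification of an arbitrary degree-$\leq 2$ polynomial with a triple $(F,f,\nu) \in \fH_{\mathrm{sim}}$; after that, the conclusion is a one-line application of $\hGamma \geq 0$.
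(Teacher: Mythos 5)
Your proposal is correct and follows essentially the same route as the paper: the paper likewise observes that every polynomial of degree at most $2$ corresponds to a triple $(F,f,\nu)\in\fH_{\mathrm{sim}}$, that Definition~\ref{def: bos gen 2pdm} with $G=F$, $g=f$, $\mu=\nu$ gives $\omega(\Po\Po^*)=\bigl\langle (F,f,\nu),\hGamma(F,f,\nu)\bigr\rangle$, and that positivity of $\hGamma$ therefore yields exactly the second-order representability conditions $\omega(\Po\Po^*)\geq 0$, with admissibility supplying the remaining structural requirements.
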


Let $\omega$ be a linear functional on the operators on $\cF^+$. Since, on the one hand, any polynomial up to second order in creation and annihilation operators can be written as $\Po = \Po_2 (F) + \Po_1 (f) + \nu$ with $F \in \bigoplus^4 \fh^{\otimes 2}, \ f \in \bigoplus^2 \fh, \ \nu \in \dC$, Definition~\ref{def: bos gen 2pdm} yields
\begin{align*}
\omega ( \Po \Po^* ) = \left< \left( \begin{matrix} F \\ f \\ \nu \end{matrix} \right) , \hGamma \begin{pmatrix} F \\ f \\ \nu \end{pmatrix} \right>.
\end{align*}
On the other hand, every element of $\fH_{\mathrm{sim}}$ can be written as a vector with $F \in \bigoplus^4 \fh^{\otimes 2}, \ f \in \bigoplus^2 \fh, \ \nu \in \dC$. Thus, the representability conditions up to second order are exactly those arising from
\begin{align*}
\omega \big( \Po \Po^* \big) \geq 0
\end{align*}
for any polynomial $\Po$ in creation and annihilation operators of degree $r \leq 2$.

\begin{remark}
Since the generalized 1-pdm appears as a block in the generalized 2-pdm, it inherits the definiteness property from the generalized 2-pdm.
\end{remark}

If one varies only over particle number-conserving states, then $\hGamma$ assumes a block-diagonal form, and the complexity of the representability reduces considerably. In fact, only three independent conditions remain which are reminiscent of the G- and P-condition in quantum chemistry (see Theorem \ref{thm: bos repr cond 2nd order w part cons}).


\section{Variation over Pure Quasifree States and Bogoliubov--Hartree--Fock Energy}\label{sec: Variation}

For bosons, Theorem~I.2 of \cite{BBT} states for the Pauli--Fierz model that the Bogoliubov--Hartree--Fock energy coincides with the infimimum of the energy functional for a variation over pure quasifree states. We prove a more general statement, which holds for bosons, as well as for fermions. The main result of this section is the following
\begin{theorem}\label{thm: Variation}
Assume the Hamiltonian $\Hf$ to be bounded below. Then,
\begin{align*}
E_{\mathrm{BHF}} = \inf \left\{ \omega \big(\Hf\big) \Big| \, \omega \ \text{is pure and quasifree} \right\} =: E_{\mathrm{BHF}}^{\mathrm{pure}}.
\end{align*}
\end{theorem}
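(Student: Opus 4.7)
The inequality $E_{\mathrm{BHF}}\leq E_{\mathrm{BHF}}^{\mathrm{pure}}$ is immediate from $\cZ_{\mathrm{pqf}}^{\pm}\subset\cZ_{\mathrm{qf}}^{\pm}$, so the substance is the reverse bound. The plan is to represent an arbitrary quasifree state as a convex integral of pure quasifree states,
\begin{align*}
\omega \;=\; \int \omega_\xi\,d\mu(\xi),\qquad \omega_\xi \in \cZ_{\mathrm{pqf}}^{\pm}.
\end{align*}
Once such a representation is available, linearity of the state evaluation yields
\begin{align*}
\omega(\Hf) \;=\; \int \omega_\xi(\Hf)\,d\mu(\xi) \;\geq\; \inf_\xi \omega_\xi(\Hf) \;\geq\; E_{\mathrm{BHF}}^{\mathrm{pure}},
\end{align*}
and taking the infimum over $\omega$ gives $E_{\mathrm{BHF}}\geq E_{\mathrm{BHF}}^{\mathrm{pure}}$. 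Semiboundedness of $\Hf$ is used precisely here, to ensure $\omega_\xi(\Hf)\geq\inf\spec(\Hf)>-\infty$ almost surely, so that the right-hand integral is well defined and the ``average dominates essential infimum'' step is legitimate.

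For fermions I would first diagonalize $\tgamma_\omega$ via a Bogoliubov transformation $U$ with unitary implementer $\mathds{U}_U$, arranging $U^{*}\tgamma_\omega U$ to be block-diagonal with $2\times 2$ blocks $\mathrm{diag}(n_j,1-n_j)$, $n_j\in[0,1]$, and $\sum_j n_j = \tr_\fh(\gamma_\omega) < \infty$. Wick's theorem identifies the rotated state $\omega_U$ with the thermal product
\begin{align*}
\bigotimes_j \bigl[(1-n_j)\ket{0}\bra{0} + n_j\ket{1}\bra{1}\bigr],
\end{align*}
which expands as a Bernoulli convex combination $\sum_\sigma P(\sigma)\ket{\sigma}\bra{\sigma}$ over occupation sequences $\sigma\in\{0,1\}^{\NN}$; Borel--Cantelli, using $\sum_j n_j<\infty$, ensures that $\sigma$ has finite support $P$-almost surely, so each $\ket{\sigma}$ is a genuine Slater determinant and hence a pure quasifree state. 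Transporting back by $\mathds{U}_U$ and using that pure quasifreeness is Bogoliubov-invariant (Remark~\ref{rem: ferm pqf Bogoliubov}) produces the desired convex decomposition of $\omega$ itself.

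For bosons the same scheme applies with the Gaussian Glauber--Sudarshan $P$-representation in place of the Bernoulli expansion. Given $\omega\in\cZ_{\mathrm{qf}}^+$, I would first absorb the first moment $b_\omega$ into a Weyl transformation $\mathds{W}_{b_\omega}$ to reduce to the centered case, then diagonalize $\tgamma_\omega$ by a boson Bogoliubov $U$ so that the rotated state factorizes as $\bigotimes_j \rho_j$ with $\rho_j=(1-q_j)\sum_{n\geq 0}q_j^n\ket{n}\bra{n}$, $q_j=\mu_j/(1+\mu_j)$. Each single-mode thermal state admits the coherent-state representation
\begin{align*}
\rho_j \;=\; \frac{1}{\pi\mu_j}\int_{\dC} e^{-|z|^2/\mu_j}\,\ket{z}\bra{z}\,d^2 z,
\end{align*}
and $\sum_j\mu_j = \tr_\fh(\gamma_\omega)<\infty$ guarantees that the product Gaussian measure is carried by $\ell^2$-configurations, so the multi-mode integral makes sense. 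Conjugating back by $\mathds{U}_U$ and $\mathds{W}_{b_\omega}$ turns this into a mixture of Weyl-shifted, Bogoliubov-transformed coherent states, each of which lies in $\cZ_{\mathrm{pqf}}^+$ since $\cZ_{\mathrm{coh}}^+\subset\cZ_{\mathrm{pqf}}^+$ and $\cZ_{\mathrm{pqf}}^+$ is stable under Weyl and Bogoliubov transformations.

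The main obstacle I anticipate is making the infinite-mode decomposition rigorous as a Bochner-type integral in $\cL^1(\cF^{\pm})$, and propagating the finite-energy hypothesis $\omega(\Hf)<\infty$ down to the mixing parameter so that $\omega_\xi(\Hf)<\infty$ almost surely. The clean route is to first truncate to a finite-dimensional subspace $\fh_N\subset\fh$ spanning the $N$ modes of largest occupation, on which the decomposition reduces to a finite product of one-dimensional Bernoulli or Gaussian integrals and the energy inequality is immediate; one then extracts a pure quasifree approximant $\omega_N$ with $\omega_N(\Hf)\leq\omega(\Hf)$ and passes to the limit $N\to\infty$ using weak-$*$ compactness of the set of admissible $\tgamma$'s with uniformly bounded trace together with the lower semicontinuity of $\omega\mapsto\omega(\Hf)$ that comes out of the semibounded-from-below assumption on $\Hf$.
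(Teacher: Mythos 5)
Your strategy coincides with the paper's: both arguments hinge on writing a quasifree state as a convex combination of pure quasifree states --- for fermions a discrete (Bernoulli-type) mixture of Bogoliubov-rotated Slater determinants obtained by expanding $\Gamma(B)$ in an eigenbasis of $B$ (this is exactly the paper's Lemma~\ref{lem: qf as convex combination of pqf}), and for bosons a Gaussian mixture of Weyl-shifted, Bogoliubov-rotated coherent states, which is the paper's insertion of the coherent-state resolution of identity combined with the fact that $\kappa\,\mathds{W}(z^{(n)})\,\vac$ is a multiple of a pure quasifree vector. The one place you diverge is the technical closure of the infinite-mode limit: you propose truncation plus weak-$*$ compactness of the admissible $\tgamma$'s and lower semicontinuity of $\omega \mapsto \omega(\Hf)$, which is the shakiest part of your sketch (it is unclear in which topology that semicontinuity holds, and compactness of the $\tgamma$'s does not by itself control $\omega(\Hf)$). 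The paper avoids this entirely by normalizing $\Hf \geq 0$, writing $\omega(\Hf) = \sum_k \| \kappa^* \, \Hf^{\frac{1}{2}} \Psi_k \|_{\cF}^2$ as a sum of nonnegative terms, and interchanging the sum with the finite-dimensional truncations by monotone convergence and Fubini; that route needs nothing beyond semiboundedness and is the cleaner way to finish your argument.
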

We show the statement in the following two subsections for bosons and fermions separately.

\subsection{Bosons}\label{sec: bos Variation}

For bosons, a more precise statement of Theorem~\ref{thm: Variation} is:
\begin{theorem}\label{thm: BHF-pqf}
Let $\Hf$ be a Hamiltonian on $\cF^+$ that is bounded below. Then,
\begin{align*}
E_{\mathrm{BHF}} = \inf \left\{ \omega \big(\Hf\big) \Big| \, \omega \in \cZ_{\mathrm{pqf}}^+ \right\} =: E_{\mathrm{BHF}}^{\mathrm{pure}}.
\end{align*}
\end{theorem}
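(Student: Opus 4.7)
\medskip

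\noindent\textbf{Proof proposal for Theorem~\ref{thm: BHF-pqf}.}

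The inclusion $\cZ_{\mathrm{pqf}}^+ \subseteq \cZ_{\mathrm{qf}}^+$ yields the trivial inequality $E_{\mathrm{BHF}} \leq E_{\mathrm{BHF}}^{\mathrm{pure}}$. For the converse, the plan is to show that every quasifree $\omega \in \cZ_{\mathrm{qf}}^+$ admits a decomposition as a classical convex mixture (i.e., an integral with a non-negative weight) of pure quasifree states, so that some pure state in the decomposition must achieve $\omega(\Hf)$ or better.

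First, I would use a Weyl transformation $\mathds{W}_{b_\omega}$ to reduce to the centered case ($b_\omega=0$), noting that such a transformation is implemented by a unitary on $\cF^+$ and maps $\cZ_{\mathrm{qf}}^+$ to $\cZ_{\mathrm{cqf}}^+$ and $\cZ_{\mathrm{pqf}}^+$ to $\cZ_{\mathrm{pqf}}^+$. Next, since $\tgamma_\omega \geq 0$ is trace-class on $\fh \oplus \fh$ (with the structural constraint from \eqref{eq-vb-13}), I would invoke the bosonic Williamson-type normal form: there exists a boson Bogoliubov transformation $U$, implemented by $\mathds{U}_U$ thanks to Lemma~\ref{lem: bosBogUnitaryImplementation} (the Shale--Stinespring condition being satisfied because $\alpha_\omega \in \cL^2(\fh)$), such that after the combined substitution $\omega \mapsto \omega':=\omega(\mathds{U}_U \mathds{W}_{b_\omega}\,\cdot\,\mathds{W}_{b_\omega}^*\mathds{U}_U^*)$ the generalized 1-pdm becomes $\tgamma_{\omega'}=\mathrm{diag}(D,\1_\fh+D)$ with $D = \mathrm{diag}(d_k)_{k\in\NN}$, $d_k \geq 0$ and $\sum_k d_k = \mathrm{tr}(\gamma_{\omega'}) < \infty$.

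The central step is that in this diagonalized basis $\omega'$ is the (infinite) tensor product of single-mode thermal states with mean occupation numbers $d_k$, and such a state admits an explicit non-negative Glauber--Sudarshan $P$-representation
\begin{equation*}
\omega'(A) \ =\ \int_{\CC^{\NN}} P(\vec{z})\, \big\langle \Omega,\,\mathds{W}_{\vec z}^* A\, \mathds{W}_{\vec z}\,\Omega\big\rangle_{\cF^+}\, d\vec z,
\qquad
P(\vec z)=\prod_{k:\,d_k>0}\frac{1}{\pi d_k}\,e^{-|z_k|^2/d_k},
\end{equation*}
with the convention that modes with $d_k=0$ contribute a Dirac mass at the vacuum. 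Here $\mathds{W}_{\vec z}\Omega$ is a standard coherent state, hence an element of $\cZ_{\mathrm{coh}}^+\subset \cZ_{\mathrm{pqf}}^+$. Reversing the transformations yields
\begin{equation*}
\omega(\Hf) \ =\ \int_{\CC^{\NN}} P(\vec z)\, \omega_{\vec z}(\Hf)\, d\vec z,
\qquad
\omega_{\vec z}(A):=\big\langle \Omega,\,\mathds{W}_{\vec z}^*\mathds{U}_U^*\mathds{W}_{b_\omega}^*\,A\,\mathds{W}_{b_\omega}\mathds{U}_U\mathds{W}_{\vec z}\,\Omega\big\rangle_{\cF^+},
\end{equation*}
and each $\omega_{\vec z}$ is pure quasifree, being a Bogoliubov--Weyl transform of the vacuum state. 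Since $P\geq 0$, $\int P\,d\vec z=1$, and $\omega_{\vec z}(\Hf)$ is uniformly bounded below by $\inf\sigma(\Hf)>-\infty$, the mean-value principle produces $\vec z_\ast$ with $\omega_{\vec z_\ast}(\Hf)\leq \omega(\Hf)$, whence $E_{\mathrm{BHF}}^{\mathrm{pure}}\leq \omega(\Hf)$; taking the infimum over $\omega\in\cZ_{\mathrm{qf}}^+$ completes the proof.

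The main obstacle I anticipate is making the infinite-dimensional $P$-representation rigorous: one must show that the Gaussian measure defining $P$ is well-defined on a suitable sequence space (ensured by $\sum_k d_k<\infty$), that the resulting integral equals $\omega'$ on all of $\cA^+$ rather than just on Weyl operators (done by differentiating the characteristic functions four times, matching Definition~\ref{def: bos quasifree}), and that $\Hf$ remains integrable against $P$—this last point is where the assumption that $\Hf$ is bounded below is essential, since it makes the integrand bounded from below and allows the mean-value argument to be applied even when $\omega(\Hf)=+\infty$ is excluded from the outset without loss of generality.
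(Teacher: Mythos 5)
Your proposal is correct and follows essentially the same route as the paper: both decompose the quasifree state, after bringing it into Weyl--Bogoliubov normal form (a displaced, rotated product of thermal modes), as a positive Gaussian mixture of coherent states, each of which is pure quasifree, and then conclude from the positivity and normalization of the weight. The paper arrives at exactly your $P$-representation by inserting the coherent-state resolution of the identity and using that $\Gamma(C^{1/2})$ maps coherent states to scalar multiples of coherent states (Lemma~III.7 of \cite{BBT}), and it disposes of the unbounded $\Hf$ via $\tr(\Hf^{1/2}\rho\,\Hf^{1/2})$, monotone convergence and Fubini --- precisely the technical points you correctly flag at the end as the remaining work.
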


In order to prove the theorem, we need some properties of quasifree and pure quasifree states. To this end, we give a characterization of quasifree and pure quasifree states using the Bogoliubov transformation.

\begin{lemma}\label{lem: bos characterize qf}
For any quasifree density matrix, there are a positive semi-definite operator $C \in \cL^1 (\fh)$ with $\sn{C}_{\cB(\fh)} < 1$ and second quantization $\Gamma(C) := \bigoplus_{N=0}^{\infty} C^{\otimes N}$, a boson Bogoliubov transformation with unitary implementation $\mathds{U}$, and $f \in \fh$ such that
\begin{align*}
\rho = \mathds{W}_f \, \mathds{U} \, \frac{\Gamma(C)}{\tr_{\cF^+} (\Gamma(C))} \, \mathds{U}^* \, \mathds{W}_f^*.
\end{align*}
If, furthermore, the density matrix is pure, it is of the form $\mathds{W}_f \, \mathds{U} \ket{\vac} \bra{\vac} \mathds{U}^* \, \mathds{W}_f^*$, where we used the Dirac bra-ket notation.
\end{lemma}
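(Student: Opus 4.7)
The plan is to decompose any quasifree density matrix $\rho$ by three successive unitary conjugations: first, remove the first moment by a Weyl transformation to reach a centered quasifree state; second, diagonalize the generalized $1$-pdm by a Bogoliubov transformation so that the off-diagonal block $\alpha$ vanishes; third, identify the resulting centered, particle number-conserving, quasifree state with the explicit generalized Gibbs form $\Gamma(C)/\tr_{\cF^+}\Gamma(C)$.

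For the first step, let $\omega$ be the quasifree state associated with $\rho$ and let $b := b_\omega \in \fh$ be the first moment from Remark~\ref{rem: bos further gen 1pdm matrix}. Setting $\rho_0 := \mathds{W}_b^*\, \rho\, \mathds{W}_b$, the transformation identity $\mathds{W}_g\, \bc(f)\, \mathds{W}_g^* = \bc(f) + \langle g,f\rangle_\fh$ shows that the corresponding state has vanishing one-point functions, so $\rho_0$ is centered. Since conjugation by $\mathds{W}_b$ is a CCR $*$-automorphism that preserves Wick's theorem, $\rho_0$ is centered quasifree. Equivalently, $\rho = \mathds{W}_b \rho_0 \mathds{W}_b^*$.

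For the second step, consider the generalized $1$-pdm $\tgamma_0$ of $\rho_0$, which is positive semi-definite on $\fh\oplus\fh$ and has the block form of Remark~\ref{rem: bos gen 1pdm matrix}. The structural ingredient is to construct a boson Bogoliubov transformation $U$, with Hilbert--Schmidt off-diagonal block $v$ so that Lemma~\ref{lem: bosBogUnitaryImplementation} yields a unitary implementation $\mathds{U}:=\mathds{U}_U$ on $\cF^+$, such that
\[
U^* \, \tgamma_0 \, U \; = \; \begin{pmatrix} \gamma_\ast & 0 \\ 0 & \1_\fh + \overline{\gamma}_\ast \end{pmatrix}
\]
for some positive semi-definite trace class $\gamma_\ast$ on $\fh$. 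This is a symplectic normal form, in the spirit of Williamson's theorem; finiteness of $\tr_\fh(\gamma_{\rho_0}) = \omega(\hNN) < \infty$ is what secures the Shale--Stinespring bound $\|v\|_{\mathrm{HS}}<\infty$. By Lemma~\ref{lem: bos transformed 1pdm}, $\rho_1 := \mathds{U}^*\rho_0\mathds{U}$ is again centered quasifree, with $\alpha_{\rho_1} = 0$, hence particle number-conserving.

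In the third step, a centered, particle number-conserving, quasifree density matrix with $1$-pdm $\gamma_\ast$ of finite trace is uniquely determined by $\gamma_\ast$ via Wick's theorem, and coincides with $\Gamma(C)/\tr_{\cF^+}\Gamma(C)$ for $C := \gamma_\ast(\1_\fh+\gamma_\ast)^{-1}$: the operator $C$ is positive semi-definite, trace class, satisfies $\|C\|_{\cB(\fh)}<1$, and diagonalizing $\gamma_\ast$ in an eigenbasis with eigenvalues $\lambda_k\geq 0$ factorizes $\Gamma(C)/\tr\Gamma(C)$ into a tensor product of one-mode geometric states, whose two-point functions reproduce $\tr_\fh(\bc_k\ba_l\, \rho_1) = \delta_{kl}\lambda_k$ via the geometric series identity $\sum_{n\geq 0} n\lambda_k^n/(1+\lambda_k)^{n+1} = \lambda_k$. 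Combining the three steps yields
\[
\rho \;=\; \mathds{W}_b\,\mathds{U}\,\frac{\Gamma(C)}{\tr_{\cF^+}\Gamma(C)}\,\mathds{U}^*\,\mathds{W}_b^*,
\]
with $f:=b$. If $\rho$ is pure, so is $\Gamma(C)/\tr_{\cF^+}\Gamma(C)$ by unitarity of $\mathds{W}_b$ and $\mathds{U}$; but a positive trace-class $C$ with $\|C\|<1$ makes this a convex combination of rank-one projections onto $N$-particle sectors that is pure only when $C=0$, in which case $\Gamma(C) = \ket{\vac}\bra{\vac}$, giving the stated form. The main obstacle is the Bogoliubov diagonalization of step two, specifically the verification that the off-diagonal block is Hilbert--Schmidt; this is the point at which the assumption $\omega(\hNN)<\infty$ enters decisively to ensure that the implementer $\mathds{U}$ exists on $\cF^+$.
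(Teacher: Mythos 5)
Your proposal is correct in outline, but it takes a different route from the paper only in the sense that the paper does not prove this lemma at all: it simply defers to Lemma~III.1 of \cite{BBT}, whereas you reconstruct the standard three-step argument (Weyl centering, Bogoliubov block-diagonalization of $\tgamma$, identification of the resulting gauge-invariant quasifree state with $\Gamma(C)/\tr_{\cF^+}(\Gamma(C))$ for $C=\gamma_*(\1+\gamma_*)^{-1}$). Each step checks out: the geometric-series computation reproducing $\tr(\bc_k\ba_l\,\rho_1)=\delta_{kl}\lambda_k$ is right, $\|C\|_{\cB(\fh)}=\lambda_{\max}/(1+\lambda_{\max})<1$ and $C\in\cL^1(\fh)$ follow from $\tr(\gamma_*)<\infty$, and the purity argument correctly forces $C=0$, hence $\Gamma(C)=\ket{\vac}\bra{\vac}$. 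What your version buys over the paper's bare citation is that it makes visible exactly where $\omega\big(\hNN\big)<\infty$ enters (the Shale--Stinespring condition for implementing $U$) and why the pure case collapses to the vacuum. The one place where your sketch leans on an unproven input is the heart of step two: the existence of a boson Bogoliubov transformation with Hilbert--Schmidt off-diagonal block that kills $\alpha$ is itself a nontrivial theorem (the bosonic counterpart of \cite[Theorem~2.3]{BLS}, available in \cite{Na1,So2}); you should cite it explicitly rather than gesture at Williamson's theorem, and you should also note that the uniqueness of a centered quasifree state with prescribed $\tgamma$ (which you invoke to identify $\rho_1$ with the Gibbs state) is the same uniqueness statement the paper records for $\hgamma$. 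With those two citations supplied, your argument is a complete and self-contained replacement for the paper's reference to \cite{BBT}.
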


This Lemma is a consequence of Lemma~III.1 in \cite{BBT}.

\begin{proof}[Proof of Theorem~\ref{thm: BHF-pqf}]
Without loss of generality, we assume the Hamiltonian to be positive semi-definite. If $\Hf$ is bounded below, there is a constant $\mu \geq 0$ such that $\Hf_{0} := \Hf + \mu \1_{\cF} \geq 0$. Considering $\Hf_0$ instead of $\Hf$ just adds the constant $\mu$ to both $E_{\mathrm{BHF}}$ and $E_{\mathrm{BHF}}^{\mathrm{pure}}$.

The inequality
\begin{align}
E_{\mathrm{BHF}} = \inf \left\{ \omega \big(\Hf\big) \Big| \, \omega \in \cZ_{\mathrm{qf}}^+ \right\} \leq \inf \left\{ \omega \big(\Hf\big) \Big| \, \omega \in \cZ_{\mathrm{pqf}}^+ \right\} = E_{\mathrm{BHF}}^{\mathrm{pure}}
\end{align} 
follows from the definition of the BHF energy, since the variation is restricted to the proper subset $\cZ_{\mathrm{pqf}}^+ \subsetneq \cZ_{\mathrm{qf}}^+$. 

It remains to prove that $\omega \big(\Hf\big) \geq E_{\mathrm{BHF}}^{\mathrm{pure}}$ for any quasifree state $\omega \in \cZ_{\mathrm{qf}}^+$. Let $\omega \in \cZ_{\mathrm{qf}}^+$ with $\omega \big(\Hf\big) < \infty$ and denote the corresponding density matrix by $\rho$. Then, 
\begin{align*}
\omega \big(\Hf\big) = \tr_{\cF^+} \left( \rho^{\frac{1}{2}} \, \Hf \, \rho^{\frac{1}{2}} \right) = \tr_{\cF^+}\left( \left( \rho^{\frac{1}{2}} \, \Hf^{\frac{1}{2}} \right) \left( \Hf^{\frac{1}{2}} \, \rho^{\frac{1}{2}} \right) \right)
\end{align*}
since $\Hf \geq 0$. Therefore, $\rho^{\frac{1}{2}} \, \Hf^{\frac{1}{2}}$ is Hilbert--Schmidt and we obtain by the cyclicity of the trace
\begin{align*}
\tr_{\cF^+}\left( \left( \rho^{\frac{1}{2}} \, \Hf^{\frac{1}{2}} \right) \left( \Hf^{\frac{1}{2}} \, \rho^{\frac{1}{2}} \right) \right) = \tr_{\cF^+} \left( \Hf^{\frac{1}{2}} \, \rho \, \Hf^{\frac{1}{2}} \right).
\end{align*}
Since $\Hf$ is selfadjoint, there is an ONB $\left\{ \Psi_k \right\}_{k =1}^{\infty}$ of $\cF^+$, such that $\Psi_k \in \cD (\Hf)$ for any $k \in \NN$. Then,
\begin{align*}
\tr_{\cF^+} \left( \Hf^{\frac{1}{2}} \, \rho \, \Hf^{\frac{1}{2}} \right) = \sum\limits_{k=1}^{\infty} \left< \Hf^{\frac{1}{2}} \Psi_k, \rho \, \Hf^{\frac{1}{2}} \Psi_k \right>_{\cF}.
\end{align*}
By Lemma~\ref{lem: bos characterize qf}, the positive semi-definite operator $\rho$ can be written as $\rho = \kappa \kappa^*$, where
\begin{align*}
\kappa := \mathds{W}_f \, \mathds{U} \, \frac{ \Gamma (C^{\frac{1}{2}}) }{ \left( \tr_{\cF^+} (\Gamma (C)) \right)^{\frac{1}{2}} }
\end{align*}
with some $f \in \fh$, a Bogoliubov transformation with unitary implementation $\mathds{U}$, and some $C \in \cL^1 (\fh), C \geq 0, \sn{C}_{\cB(\fh)} < 1$. Hence,
\begin{align}\label{eq: bos split state}
\omega ( \Hf ) = \sum\limits_{k=1}^{\infty} \sn{ \kappa^* \, \Hf^{\frac{1}{2}} \Psi_k}_{\cF}^2.
\end{align}

We continue by introducing a resolution of the identity with coherent states. To this end, we consider an increasing sequence of $n$-di\-men\-sio\-nal Hilbert spaces $\fh_n \subseteq \fh_{n+1} \subseteq \fh, \ n \in \NN$ with $\ol{\bigcup_{n \in \NN} \fh_n} = \fh$ and $C \, \fh_n \subseteq \fh_n$. For any $n$-dimensional Hilbert space $\fh_n$, there is an isometric isomorphism $I : \, \fh_n \ra \dC^n$. We define the measure $\di \mu_n \big(z^{(n)}\big)$ on $\fh_n$ by $\int_{\fh_n} \di \mu_n \big(z^{(n)}\big) f \big(z^{(n)} \big) := \int_{\dC^n} \frac{\di^n x \di^n y}{\pi^n} f \big( I z^{(n)} \big)$, where $x := \Re (z), \ y := \Im (z)$. For any $n \in \NN$, we have $\fh = \fh_n \oplus \fh_n^{\perp}$, where $\fh_n^{\perp}$ denotes the orthogonal complement of $\fh_n$ in $\fh$. Moreover, $\cF^+ \cong \cF^+ [\fh_n] \otimes \cF^+ [\fh_n^{\perp}]$ and $\vac = \vac_n \otimes \vac_n^{\perp}$ with $\vac_n \in \cF^+ [\fh_n]$ and $\vac_n^{\perp} \in \cF^+ [\fh_n^{\perp}]$. For every $n \in \NN$, the projections $\ket{\mathds{W} \big(z^{(n)}\big) \vac}\bra{\mathds{W} \big(z^{(n)}\big) \vac}$, $z^{(n)} \in \fh_n$, satisfy
\begin{align*}
\1_{\cF^+ [\fh_n]} \otimes \ket{\vac_n^{\perp}} \bra{\vac_n^{\perp}} = \int_{\fh_n} \di \mu_n \big( z^{(n)} \big) \ket{\mathds{W} \big(z^{(n)}\big) \, \vac} \bra{\mathds{W} \big(z^{(n)}\big) \, \vac},
\end{align*}
see, \eg, \cite{Be1,CR1}. Consequently,
\begin{align*}
\left< \Psi,\Psi \right>_{\cF} = \lim_{n \ra \infty} \int_{\fh_n}\di \mu_n \big(z^{(n)}\big) \abs{\left< \Psi , \mathds{W} \big(z^{(n)}\big) \, \vac \right>_{\cF}}^2
\end{align*}
for any $\Psi \in \cF^+$. Thus, each summand of the right hand side of \eqref{eq: bos split state} is rewritten as
\begin{align*}
\sn{ \kappa^* \, \Hf^{\frac{1}{2}} \Psi_k }_{\cF}^2 = \lim_{n \ra \infty} \int_{\fh_n} \di \mu_n \big(z^{(n)}\big) \abs{ \left< \Hf^{\frac{1}{2}} \Psi_k, \kappa \, \mathds{W} \big(z^{(n)}\big) \, \vac \right>_{\cF} }^2.
\end{align*}
The sequence $\left( k \mapsto \int_{\fh_n} \di \mu_n \big(z^{(n)}\big) \abs{ \left< \Hf^{\frac{1}{2}} \Psi_k, \kappa \, \mathds{W} \big( z^{(n)} \big) \, \vac \right>_{\cF} }^2 \right)_{n=1}^{\infty}$ is monotonously increasing. Therefore, the summation and the limit is exchanged by the monotone convergence theorem, where the summation is considered as an integral with the counting measure. Thus, we get
\begin{align*}
\omega \big( \Hf \big) =  \lim_{n \ra \infty} \sum\limits_{k=1}^{\infty} \int_{\fh_n} \di \mu_n \big(z^{(n)}\big) \abs{ \left< \Hf^{\frac{1}{2}} \Psi_k, \kappa \, \mathds{W} \big( z^{(n)} \big) \, \vac \right>_{\cF} }^2.
\end{align*}
Afterwards, Fubini's Theorem yields
\begin{align*}
\omega \big( \Hf \big) &= \lim_{n \ra \infty} \int_{\fh_n} \di \mu_n \big(z^{(n)}\big) \sum\limits_{k=1}^{\infty} \abs{ \left< \Hf^{\frac{1}{2}} \Psi_k, \kappa \, \mathds{W} \big( z^{(n)} \big) \, \vac \right>_{\cF} }^2\\
               &= \lim_{n \ra \infty} \int_{\fh_n} \di \mu_n \big(z^{(n)}\big) \left< \kappa \, \mathds{W} \big( z^{(n)} \big) \, \vac, \Hf \, \kappa \mathds{W} \big( z^{(n)} \big) \, \vac \right>_{\cF}
\end{align*} 
and we conclude from the proof of Lemma III.7 in \cite{BBT} that
\begin{align*}
\kappa \, \mathds{W} \big( z^{(n)} \big) \, \vac = \mathds{W}_f \, \mathds{U} \, \frac{ \Gamma (C^{\frac{1}{2}}) }{ \left( \tr_{\cF^+} (\Gamma (C)) \right)^{\frac{1}{2}} } \, \mathds{W} \big( z^{(n)} \big) \, \vac = \nu_{C} \big(z^{(n)} \big) \, \mathds{W}_g \, \mathds{U} \, \vac
\end{align*} 
for some $g \in \fh$ and $\nu_{C} \big( z^{(n)} \big) \in \dC$ with $ \lim_{n \ra \infty} \int_{\fh_n} \di \mu_n \big( z^{(n)} \big) \abs{ \nu_C \big( z^{(n)} \big) }^2 = 1$. By Lemma~\ref{lem: bos characterize qf}, this vector defines a pure quasifree state and, consequently,
\begin{align*}
\omega ( \Hf ) &= \lim_{n \ra \infty} \int_{\fh_n} \di \mu_n \big(z^{(n)}\big) \left< \kappa \, \mathds{W} \big( z^{(n)} \big) \, \vac, \Hf \, \kappa \, \mathds{W} \big( z^{(n)} \big) \vac \right>_{\cF}\\
               &\geq E_{\mathrm{BHF}}^{\mathrm{pure}} \lim_{n \ra \infty} \int_{\fh_n} \di \mu_n \big( z^{(n)} \big) \abs{ \nu_C \left( z^{(n)} \right) }^2\\
               &= E_{\mathrm{BHF}}^{\mathrm{pure}},
\end{align*}
which completes the proof.
\end{proof}


\subsection{Fermions}\label{sec: ferm Variation}

For fermions, a similar result to Theorem~\ref{thm: BHF-pqf} holds:

\begin{theorem}\label{thm: HF-pqf}
Let $\Hf$ be a Hamiltonian on $\cF^-$ that is bounded below. Then,
\begin{align*}
E_{\mathrm{BHF}} = \inf \left\{ \omega \big(\Hf\big) \Big| \, \omega \in \cZ_{\mathrm{pqf}}^- \right\} =: E_{\mathrm{BHF}}^{\mathrm{pure}}.
\end{align*}
\end{theorem}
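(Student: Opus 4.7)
The plan is to adapt the strategy from the bosonic case (Theorem~\ref{thm: BHF-pqf}), replacing the coherent-state resolution of the identity (which has no direct fermionic counterpart) by the spectral decomposition of a diagonalized quasifree density matrix. As in the bosonic proof, I would first shift $\Hf$ by a constant to reduce to $\Hf \geq 0$, and note that the inequality $E_{\mathrm{BHF}} \leq E_{\mathrm{BHF}}^{\mathrm{pure}}$ is immediate from the inclusion $\cZ_{\mathrm{pqf}}^- \subseteq \cZ_{\mathrm{qf}}^-$. It will then suffice to establish $\omega(\Hf) \geq E_{\mathrm{BHF}}^{\mathrm{pure}}$ for every $\omega \in \cZ_{\mathrm{qf}}^-$ with $\omega(\Hf) < \infty$.

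Given such an $\omega$ with density matrix $\rho$ and generalized 1-pdm $\tgamma$, the next step is to invoke the fermionic structure theorem (the analogue of Lemma~\ref{lem: bos characterize qf}, obtained from the Bogoliubov diagonalization of $\tgamma$): there exist a fermion Bogoliubov transformation $U$ with unitary implementation $\mathds{U}$ on $\cF^-$ and a positive trace-class operator $C$ on $\fh$ with $\sn{C}_{\cB(\fh)} < 1$ such that $\rho = \mathds{U}\, \Gamma(C)/\tr_{\cF^-}(\Gamma(C))\,\mathds{U}^*$. Diagonalizing $C$ in an ONB $\{\varphi_k\}$ of $\fh$ with eigenvalues $c_k \geq 0$ and expanding in the associated occupation-number basis of $\cF^-$ yields the explicit convex decomposition
\[
\mathds{U}^* \rho \, \mathds{U} \;=\; \sum_{S} p_S \, \ket{S}\bra{S}, \qquad p_S \;:=\; \prod_{k \in S} \frac{c_k}{1+c_k} \prod_{k \notin S} \frac{1}{1+c_k}, \qquad \sum_S p_S = 1,
\]
where $S$ ranges over finite subsets of $\NN$ and $\ket{S} := \prod_{k \in S} \fc_k \vac$ is a Slater determinant.

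The crucial observation is then that each $\mathds{U} \ket{S}$ defines a pure quasifree state: the operator $\prod_{k \in S} \fc_k$ is itself the unitary implementation of the fermion Bogoliubov transformation that swaps $\fc_k \leftrightarrow \fa_k$ for every $k \in S$, so $\mathds{U} \ket{S} = \mathds{U}_{U_S} \vac$ for some Bogoliubov transformation $U_S$, and Remark~\ref{rem: ferm pqf Bogoliubov} applies. Combined with $\Hf \geq 0$ and monotone convergence (equivalently, the linearity and positivity of $A \mapsto \tr_{\cF^-}(\Hf^{1/2} A \, \Hf^{1/2})$ on positive trace-class operators), the spectral decomposition gives
\[
\omega(\Hf) \;=\; \sum_S p_S \, \bigl\langle \mathds{U} \ket{S}, \Hf \, \mathds{U} \ket{S} \bigr\rangle_{\cF} \;\geq\; \Bigl( \sum_S p_S \Bigr) E_{\mathrm{BHF}}^{\mathrm{pure}} \;=\; E_{\mathrm{BHF}}^{\mathrm{pure}},
\]
which concludes the proof. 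I expect the main obstacle to lie in rigorously establishing the structure theorem $\rho = \mathds{U}\,\Gamma(C)/\tr_{\cF^-}(\Gamma(C))\,\mathds{U}^*$, that is, in verifying that the Williamson-type diagonalization of $\tgamma$ produces a genuine fermion Bogoliubov transformation (in the sense of Definition~\ref{def: ferm Bogoliubov}) whose off-diagonal block satisfies the Shale--Stinespring condition, the latter being ensured by $\alpha \in \cL^2(\fh)$ and $\tr_\fh(\gamma) < \infty$. Once this structural input is in place, the remainder is a routine spectral-plus-linearity argument together with the definition of $E_{\mathrm{BHF}}^{\mathrm{pure}}$.
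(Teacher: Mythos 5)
Your proposal is correct and follows essentially the same route as the paper: write the quasifree density matrix as a Bogoliubov conjugate of a normalized $\Gamma(C)$-type operator, expand it as a convex combination of Bogoliubov-transformed Slater determinants (each a pure quasifree state), and conclude by positivity of $\Hf$ together with monotone convergence — this is precisely the content of the paper's Lemma~\ref{lem: decomposition qf} and Lemma~\ref{lem: qf as convex combination of pqf}. The one place where the paper does slightly more than your structure theorem lets on is the eigenvalue-$1$ eigenspace of $\gamma$, where $C=\gamma(\1-\gamma)^{-1}$ degenerates: Lemma~\ref{lem: decomposition qf} splits off a finite-dimensional Slater-determinant factor for exactly this reason, which in your formulation corresponds to composing $U$ with a particle--hole transformation on those finitely many modes before asserting $\sn{C}_{\cB(\fh)}<1$.
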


Before we prove this theorem, we need two preparatory lemmas.

\begin{lemma}\label{lem: decomposition qf}
Let $\omega \in \cZ_{\mathrm{qf}}^-$ with density matrix $\rho$. Then, there are a decomposition $\fh = \fh_{S} \oplus^{\perp} \fh_{\Gamma}$ with $n:= \dim (\fh_{S}) < \infty$, a positive semi-definite trace class operator $B \in \cB(\fh_{\Gamma})$, and a fermion Bogoliubov transformation $U$ with unitary implementation $\mathds{U}$, such that
\begin{align}\label{eq: decomposition qf}
\rho = \mathds{U}_{U} \left( \left( \ket{\varphi_1 \wedge \dots \wedge \varphi_n}\bra{\varphi_1 \wedge \dots \wedge \varphi_n} \right) \otimes \frac{\Gamma (B)}{\tr_{\cF^-} \left( \Gamma (B) \right)} \right) \mathds{U}_{U}^*
\end{align}
for any ONB $\left\{ \varphi_k \right\}_{k=1}^n$ of $\fh_{S}$.
\end{lemma}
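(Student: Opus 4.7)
The plan is to diagonalize the generalized one-particle density matrix of $\omega$ by a particle-number-preserving Bogoliubov transformation and then recognize the resulting diagonal quasifree state as precisely the tensor product on the right hand side of \eqref{eq: decomposition qf}.

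First, because $\rho$ commutes with $\hNN$ by \eqref{eq: ferm particle conserving dm}, every two-point function of the form $\omega(\fc(f)\fc(g))$ vanishes, hence $\alpha_\omega=0$ and
\begin{align*}
\tgamma_\omega \ = \ \begin{pmatrix} \gamma_\omega & 0 \\ 0 & \1_{\fh} - \ol{\gamma}_\omega \end{pmatrix}.
\end{align*}
Since $\gamma_\omega \in \cL^1(\fh)$ is selfadjoint, positive semi-definite, and bounded above by $\1_{\fh}$, I diagonalize it by a unitary $u : \fh \to \fh$, obtaining an ONB $\{\tilde\varphi_k\}_{k\geq 1}$ of eigenvectors with eigenvalues $\lambda_k \in [0,1]$. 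Setting $U := \bigl( \begin{smallmatrix} u & 0 \\ 0 & \bar u \end{smallmatrix} \bigr)$ yields a fermion Bogoliubov transformation with $v=0$, which trivially satisfies the Shale--Stinespring condition, so it admits a unitary implementation $\mathds{U}_{U}$ on $\cF^-$.

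Since $\sum_k \lambda_k = \tr(\gamma_\omega) < \infty$, the eigenvalue $1$ occurs with at most finite multiplicity $n$. Let $\fh_S$ be the corresponding eigenspace and $\fh_\Gamma := \fh_S^{\perp}$, so that all eigenvalues on $\fh_\Gamma$ lie in $[0,1)$; the operator $B \in \cB(\fh_\Gamma)$ diagonal in $\{\tilde\varphi_k\}_{k > n}$ with eigenvalues $\beta_k := \lambda_k/(1-\lambda_k)$ is then positive and trace class because $\beta_k \sim \lambda_k$ as $\lambda_k \to 0$. Using $\cF^- \cong \cF^-[\fh_S] \otimes \cF^-[\fh_\Gamma]$, take as candidate ``diagonalized'' density matrix
\begin{align*}
\rho_0 \ := \ \ket{\tilde\varphi_1 \wedge \cdots \wedge \tilde\varphi_n}\bra{\tilde\varphi_1 \wedge \cdots \wedge \tilde\varphi_n} \ \otimes \ \frac{\Gamma(B)}{\tr_{\cF^-}(\Gamma(B))}.
\end{align*}
The core step is to check that $\rho_0$ defines a quasifree state whose generalized 1-pdm is diagonal with $\gamma_{\rho_0}|_{\fh_S} = \1_{\fh_S}$, $\gamma_{\rho_0}|_{\fh_\Gamma} = B(\1+B)^{-1}$ and $\alpha_{\rho_0} = 0$, which coincides with $U^* \tgamma_\omega U$. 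Granted this, Lemma~\ref{lem: ferm transformed gen 1pdm} gives that the state $A \mapsto \omega(\mathds{U}_{U} A \mathds{U}_{U}^*)$ has the same generalized 1-pdm; since a quasifree state is uniquely determined by its generalized 1-pdm, the two states coincide, which is equivalent to $\rho = \mathds{U}_{U} \rho_0 \mathds{U}_{U}^*$ and gives \eqref{eq: decomposition qf}.

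The main obstacle is the explicit verification that $\rho_0$ is quasifree with the asserted generalized 1-pdm. By the tensor-product structure over mutually orthogonal modes, this reduces to two elementary single-mode computations: for $k \leq n$ one has $\tr(\rho_0^{(k)} \fc_k \fa_k) = 1$ together with vanishing anomalous expectations, while for $k > n$ the factor $(1+\beta_k)^{-1}(\ket{\vac}\bra{\vac} + \beta_k \fc_k \ket{\vac}\bra{\vac} \fa_k)$ yields $\tr(\rho_0^{(k)} \fc_k \fa_k) = \beta_k/(1+\beta_k) = \lambda_k$, again with vanishing anomalous part. Wick's theorem then propagates these two-point identities to all higher moments, establishing quasifreeness; the remaining steps of the proof are bookkeeping.
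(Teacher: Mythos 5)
Your overall strategy---diagonalize the generalized 1-pdm by a Bogoliubov transformation, isolate the eigenvalue-$1$ eigenspace $\fh_S$ (finite-dimensional since $\tr(\gamma)<\infty$), set $B=\gamma_\Gamma(\1-\gamma_\Gamma)^{-1}$, verify mode by mode that the candidate $\rho_0$ is quasifree with the correct two-point functions, and conclude from the fact that a quasifree state is determined by its generalized 1-pdm---is exactly the paper's. However, your first step has a genuine gap: you deduce $\alpha_\omega=0$ from the assumption that $\rho$ commutes with $\hNN$, and consequently only ever use a particle-number-preserving Bogoliubov transformation $U=\left(\begin{smallmatrix} u & 0 \\ 0 & \ol{u}\end{smallmatrix}\right)$. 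A general quasifree state $\omega\in\cZ_{\mathrm{qf}}^-$, for instance the pure quasifree state $A\mapsto\left<\mathds{U}_{V}\,\vac, A\,\mathds{U}_{V}\,\vac\right>$ for a Bogoliubov transformation $V$ with $v\neq 0$, is even but has $\omega\big(\fc(f)\,\fc(g)\big)\neq 0$, i.e.\ $\alpha_\omega\neq 0$, and its density matrix does not commute with $\hNN$. These are precisely the states the lemma must cover: it is applied in the proof of Theorem~\ref{thm: HF-pqf} to arbitrary $\omega\in\cZ_{\mathrm{qf}}^-$ in order to compare $E_{\mathrm{BHF}}$ with the infimum over pure quasifree states, and restricting to $\alpha_\omega=0$ would collapse the argument to the particle-number-conserving setting and leave the theorem unproved. (The paper's remark that every $\omega\in\cZ^-$ admits a representing density matrix commuting with $\hNN$ cannot be taken at face value here, since $\left[\rho,\hNN\right]=0$ forces $\tr\big(\rho\,\fc(f)\,\fc(g)\big)=0$.)

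The paper avoids this by invoking the diagonalization result of Bach--Lieb--Solovej (Theorem~2.3 of \cite{BLS}): for any $\omega\in\cZ_{\mathrm{qf}}^-$ there is a fermion Bogoliubov transformation $U$, in general with $v\neq 0$, such that the generalized 1-pdm of $\mathds{U}_{U}^*\,\rho\,\mathds{U}_{U}$ takes the block-diagonal form $\left(\begin{smallmatrix} \gamma_{U} & 0 \\ 0 & \1-\ol{\gamma}_{U}\end{smallmatrix}\right)$ with $0\leq\gamma_{U}\leq\1$ and $\tr(\gamma_{U})<\infty$. If you replace your first paragraph by this citation, the remainder of your argument---finite multiplicity of the eigenvalue $1$, the choice of $B$, the single-mode computation showing $\tr\big(\rho_0^{(k)}\fc_k\fa_k\big)=\lambda_k$ with vanishing anomalous part, Wick propagation to all higher moments, and the identification via Lemma~\ref{lem: ferm transformed gen 1pdm} and uniqueness of quasifree states---is correct and coincides with the paper's proof.
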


\begin{proof}
It is known that there are fermion Bogoliubov transformations $U$ such that the generalized 1-pdm of $\rho_{U} := \mathds{U}_{U}^* \, \rho \, \mathds{U}_{U}$ is of the form
\begin{align}\label{eq: Diag gen1pdm}
\tgamma_{U} = \begin{pmatrix} \gamma_{U} & 0 \\ 0 & \1_{\fh} - \ol{\gamma}_{U} \end{pmatrix}
\end{align}
for some $0 \leq \gamma_{U} \leq \1_{\fh}$ with $\tr_{\fh} (\gamma_{U}) < \infty$, see, \eg, \cite[Theorem~2.3]{BLS}. Let $\fh_{S}$ be the eigenspace of $\gamma_{U}$ associated to the eigenvalue 1 with dimension $n < \infty$ and $\fh_{\Gamma}$ its orthogonal complement. Then, $\gamma_{U} = P_{S} + \gamma_{\Gamma}$, where $P_S$ is the orthogonal projection on $\fh_S$ and $\gamma_{\Gamma}$ the restriction of $\gamma_{U}$ to $\fh_{\Gamma}$. Note that $\gamma_{\Gamma}$ satisfies $\fh_S \subseteq \ker (\gamma_{\Gamma})$, $\gamma_{\Gamma} \, \fh_{\Gamma} \subseteq \fh_{\Gamma}$, and $0 \leq \gamma_{\Gamma} \leq \mu \, \1_{\fh}$ for some $0 < \mu < 1$. Let $\varphi_1,\dots,\varphi_n$ be an ONB of $\fh_S$. Moreover, let
\begin{align*}
\rho' := \ket{\varphi_1 \wedge \dots \wedge \varphi_n} \bra{\varphi_1 \wedge \dots \wedge \varphi_n} \otimes \frac{\Gamma (B)}{\tr_{\cF^-} \left( \Gamma (B) \right)}
\end{align*}
with $B := \left( \gamma_{\Gamma}\right) \left(\1_{\fh_{\Gamma}} - \gamma_{\Gamma}\right)^{-1}$. In order to show that $\rho' = \rho_{U}$, it is sufficient to observe that $\rho'$ defines a quasifree state $\omega'$ and $\tgamma_{\omega'} = \tgamma_{U}$ from \eqref{eq: Diag gen1pdm}, since quasifree states are characterized by their generalized 1-pdm, see \cite{BLS}. Note that we implicitly used the decomposition $\cF^- \cong \cF^- [\fh_{S}] \otimes \cF^- [\fh_{\Gamma}]$.
\end{proof}

\begin{remark}
For any positive semi-definite trace class operator $B \in \cB (\fh_{\Gamma})$, there are an ONB $\left\{ \phi_k \right\}_{k=1}^{\infty}$ of $\fh_{\Gamma}$ and coefficients $b_k \geq 0, \ k \in \NN$, such that $B = \sum_{k=1}^{\infty} b_k \ket{\phi_k} \bra{\phi_k}$ and $\sum_{k=1}^{\infty} b_k < \infty$. Thus, 
\begin{align*}
\tr_{\cF^-} \left( \Gamma (B) \right) = \tr_{\cF^-} \left( \bigotimes_{k=1}^{\infty} \Gamma (b_k) \right) = \prod\limits_{k=1}^{\infty} \tr_{\cF^-} \left( \Gamma (b_k) \right) = \prod\limits_{k=1}^{\infty} \left( 1+b_k \right)
\end{align*}
which converges due to $\sum_{k=1}^{\infty} b_k < \infty$. Here, $\Gamma (b_k)$ should be understood as the second quantized operator $\Gamma (b_k \ket{\phi_k} \bra{\phi_k})$ on $\cF^- [\dC \phi_k]$.
\end{remark}

\begin{lemma}\label{lem: qf as convex combination of pqf}
Let $\omega \in \cZ_{\mathrm{qf}}^-$ with density matrix $\rho$. Then, there is a sequence $\left( \rho_k \right)_{k=1}^{\infty}$ of pure quasifree density matrices and $\left( \lambda_k \right)_{k=1}^{\infty} \in \left[0,\infty\right)^{\NN}$ with $\sum_{k=1}^{\infty} \lambda_k < \infty$ such that
\begin{align*}
\left< \Psi_1, \rho \, \Psi_2 \right>_{\cF} = \lim_{n \ra \infty} \left< \Psi_1, \sum\limits_{k=1}^n \lambda_k \, \rho_k \, \Psi_2 \right>_{\cF}
\end{align*}
for any $\Psi_1,\Psi_2 \in \cF^-$. I.e., every quasifree state is a convex combination of pure quasifree states.
\end{lemma}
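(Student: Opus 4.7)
The plan is to use Lemma~\ref{lem: decomposition qf} to reduce the statement to a concrete convex decomposition of the ``Gibbs-like'' factor $\Gamma(B)/\tr(\Gamma(B))$, which splits nicely because $B$ is diagonalizable and the fermion Fock space over a one-dimensional subspace is just $\dC^2$.

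First, I would apply Lemma~\ref{lem: decomposition qf} to obtain a Bogoliubov unitary $\mathds{U}_U$, a finite-dimensional subspace $\fh_S$ with ONB $\varphi_1,\ldots,\varphi_n$, and a positive trace-class $B$ on $\fh_\Gamma$ such that $\rho = \mathds{U}_U(P_S \otimes \Gamma(B)/\tr\Gamma(B))\mathds{U}_U^*$ with $P_S := |\varphi_1 \wedge \cdots \wedge \varphi_n\rangle\langle\varphi_1\wedge\cdots\wedge\varphi_n|$. Since pure quasifreeness is preserved under Bogoliubov transformations and tensoring with a Slater determinant, it suffices to exhibit $\Gamma(B)/\tr\Gamma(B)$ as a convex combination of pure quasifree density matrices on $\cF^-[\fh_\Gamma]$.

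Next, diagonalize $B = \sum_{k=1}^\infty b_k |\phi_k\rangle\langle\phi_k|$ with $b_k \geq 0$ and $\sum_k b_k < \infty$, and use the factorization $\cF^-[\fh_\Gamma] \cong \bigotimes_{k=1}^\infty \cF^-[\dC\phi_k]$. On each two-dimensional factor $\cF^-[\dC\phi_k]$, the operator $\Gamma(b_k|\phi_k\rangle\langle\phi_k|)$ equals $|\vac\rangle\langle\vac| + b_k|\phi_k\rangle\langle\phi_k|$, so
\begin{align*}
\frac{\Gamma(b_k |\phi_k\rangle\langle\phi_k|)}{1+b_k}
\; = \; \frac{1}{1+b_k}\,|\vac\rangle\langle\vac| + \frac{b_k}{1+b_k}\,|\phi_k\rangle\langle\phi_k|,
\end{align*}
which is already a convex combination of two Slater determinants (including the empty one, the vacuum). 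Taking the tensor product and expanding yields
\begin{align*}
\frac{\Gamma(B)}{\tr_{\cF^-}\Gamma(B)}
\; = \; \sum_{I \subset \NN,\,|I|<\infty} \mu_I \,\Big|\bigwedge_{k\in I}\phi_k\Big\rangle\Big\langle\bigwedge_{k\in I}\phi_k\Big|,\quad
\mu_I := \prod_{k\in I}\frac{b_k}{1+b_k}\prod_{k\notin I}\frac{1}{1+b_k}.
\end{align*}
Each summand is a Slater determinant and therefore a pure quasifree density matrix on $\cF^-[\fh_\Gamma]$, and $\sum_I \mu_I = 1$. Enumerating the countable index set and inserting this expansion back gives the required sequence $(\rho_k)_{k=1}^\infty$ of pure quasifree density matrices (namely the conjugates $\mathds{U}_U(P_S \otimes |\bigwedge_{k\in I}\phi_k\rangle\langle\bigwedge_{k\in I}\phi_k|)\mathds{U}_U^*$) with weights $\lambda_k$ summing to one.

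For the convergence claim, I would argue that the partial sums $\sum_{k=1}^n \lambda_k \rho_k$ converge to $\rho$ in trace norm (not merely weakly): each $\rho_k$ has trace one and the positive weights satisfy $\sum_k \lambda_k = 1 < \infty$, so the series converges absolutely in $\cL^1(\cF^-)$. This in particular implies the stated matrix-element convergence. The main technical point to justify carefully is the factorization $\Gamma(B) = \bigotimes_k \Gamma(b_k|\phi_k\rangle\langle\phi_k|)$ under the isomorphism $\cF^-[\fh_\Gamma] \cong \bigotimes_k \cF^-[\dC\phi_k]$, and the absolute convergence of the infinite product $\prod_k (1+b_k)$; both follow from $\sum b_k < \infty$ (the latter via $\log(1+b_k)\leq b_k$), so no real obstacle arises. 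I do not foresee any serious difficulty beyond bookkeeping with the countable tensor product.
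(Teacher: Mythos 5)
Your proposal is correct and follows essentially the same route as the paper: both rest on Lemma~\ref{lem: decomposition qf}, diagonalize $B$, and arrive at exactly the same convex decomposition of $\Gamma(B)/\tr_{\cF^-}(\Gamma(B))$ into Slater-determinant projections with weights $\mu_I=\prod_{k\in I}\tfrac{b_k}{1+b_k}\prod_{k\notin I}\tfrac{1}{1+b_k}$ (the paper obtains these by inserting a resolution of the identity into $\kappa\kappa^*$ with $\kappa$ built from $\Gamma(B^{1/2})$, you by expanding the tensor product over one-dimensional factors). Your observation that the series in fact converges in trace norm is a harmless strengthening of the weak convergence stated in the lemma.
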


\begin{proof}
From Lemma~\ref{lem: decomposition qf}, we know that every quasifree density matrix is of the form \eqref{eq: decomposition qf} and we use the notation specified there in the following. We complete $\left\{ \varphi_k \right\}_{k=1}^n$ to an ONB $\left\{ \varphi_k \right\}_{k=1}^{\infty}$ of $\fh$, where $\left\{ \varphi_k \right\}_{k=n+1}^{\infty}$ is an ONB of $\fh_{\Gamma}$. Then,
\begin{align}\label{eq: approximation of inner product by pqf}
\left< \Psi, \Phi \right>_{\cF} = \lim_{N \ra \infty} \lim_{M \ra \infty} \sum\limits_{k=0}^{N} \sum\limits_{1 \leq i_1 < \cdots < i_k \leq M} \left< \Psi, \varphi_{i_1} \wedge \dots \wedge \varphi_{i_k} \right>_{\cF} \left< \varphi_{1_1} \wedge \dots \wedge \varphi_{i_k}, \Phi \right>_{\cF}
\end{align}
for any $\Psi, \Phi \in \cF^-$. Choosing the ONB $\left\{ \varphi_k \right\}_{k=n+1}^{\infty}$ of $\fh_{\Gamma}$ such that $B$ is diagonalized and using \eqref{eq: approximation of inner product by pqf}, we obtain in the weak sense 
\begin{align*}
\kappa^2 &:= \left( \ket{\varphi_1 \wedge \dots \wedge \varphi_n} \bra{\varphi_1 \wedge \dots \wedge \varphi_n} \otimes \frac{\Gamma (B^{\frac{1}{2}})}{\left[\tr_{\cF^-} \left( \Gamma (B) \right)\right]^{\frac{1}{2}}} \right)^2 \\
&= \lim_{M,N \ra \infty} \sum\limits_{k=n+1}^{N} \sum\limits_{n+1 \leq i_1 < \cdots < i_k \leq M} \left( \ket{\varphi_1 \wedge \dots \wedge \varphi_n} \bra{\varphi_1 \wedge \dots \wedge \varphi_n} \right) \\
&\relphantom{.............................} \otimes \left( \frac{\Gamma (B^{\frac{1}{2}})}{\left[\tr_{\cF^-} \left( \Gamma (B) \right)\right]^{\frac{1}{2}}} \ket{ \varphi_{i_1} \wedge \dots \wedge \varphi_{i_k} } \bra{\varphi_{i_1}  \wedge \dots \wedge \varphi_{i_k}} \frac{\Gamma (B^{\frac{1}{2}})}{\left[\tr_{\cF^-} \left( \Gamma (B) \right)\right]^{\frac{1}{2}}} \right).
\end{align*}
This can be written as
\begin{align*}
\kappa^2 &= \frac{1}{\tr_{\cF^-} \left( \Gamma (B) \right)} \lim_{M,N \ra \infty} \sum\limits_{k=n+1}^{N} \sum\limits_{n+1 \leq i_1 < \cdots < i_k \leq M} \left( \ket{\varphi_1 \wedge \dots \wedge \varphi_n} \bra{\varphi_1 \wedge \dots \wedge \varphi_n} \right) \\
&\relphantom{...........................................................} \otimes \ket{B^{\frac{1}{2}} \varphi_{i_1} \wedge \dots \wedge B^{\frac{1}{2}} \varphi_{i_k}} \bra{B^{\frac{1}{2}} \varphi_{i_1} \wedge \dots \wedge B^{\frac{1}{2}} \varphi_{i_k}} \\
&= \frac{1}{\tr_{\cF^-} \left( \Gamma (B) \right)} \lim_{M,N \ra \infty} \sum\limits_{k=n+1}^{N} \, \sum\limits_{n+1 \leq i_1 < \cdots < i_k \leq M} \\
&\relphantom{..............} \ket{\varphi_1 \wedge \dots \wedge \varphi_n \wedge B^{\frac{1}{2}} \varphi_{i_1} \wedge \dots \wedge B^{\frac{1}{2}} \varphi_{i_k}} \bra{\varphi_1 \wedge \dots \wedge \varphi_n \wedge B^{\frac{1}{2}} \varphi_{i_1} \wedge \dots \wedge B^{\frac{1}{2}} \varphi_{i_k}}.
\end{align*}
Each operator $\ket{\varphi_1 \wedge \dots \wedge \varphi_n \wedge B^{\frac{1}{2}} \varphi_{i_1} \wedge \dots \wedge B^{\frac{1}{2}} \varphi_{i_k}} \bra{\varphi_1 \wedge \dots \wedge \varphi_n \wedge B^{\frac{1}{2}} \varphi_{i_1} \wedge \dots \wedge B^{\frac{1}{2}} \varphi_{i_k}}$ is either equal to zero or a pure quasifree density matrix (up to a normalization constant). Finally, a pure quasifree density matrix conjugated by a Bogoliubov transformation is a pure quasifree state, too, which completes the proof.
\end{proof}

Now, we are prepared to prove Theorem~\ref{thm: HF-pqf}. Since the proof is, to a large extend, similar to the proof of Theorem~\ref{thm: BHF-pqf}, we only give details where there are differences.

\begin{proof}[Proof of Theorem~\ref{thm: HF-pqf}]
Again, without loss of generality, we assume that the Hamiltonian is positive semi-definite. 

As for bosons, the inequality
\begin{align*}
E_{\mathrm{BHF}} = \inf \left\{ \omega \big(\Hf\big) \Big| \, \omega \in \cZ_{\mathrm{qf}}^- \right\} \leq \inf \left\{ \omega \big(\Hf\big) \Big| \, \omega \in \cZ_{\mathrm{pqf}}^- \right\} = E_{\mathrm{BHF}}^{\mathrm{pure}}
\end{align*} 
is immediate. 

Thus, we show $\omega \big(\Hf\big) \geq E_{\mathrm{BHF}}^{\mathrm{pure}}$ for any $\omega \in \cZ_{\mathrm{qf}}^-$. Let $\omega \in \cZ_{\mathrm{qf}}^-$ with $\omega \big(\Hf\big) < \infty$ and denote the corresponding density matrix by $\rho$. Furthermore, let $\left\{ \Psi_k \right\}_{k =1}^{\infty}$ be an ONB of $\cF^-$, such that $\Psi_k \in \cD \big(\Hf\big)$ for any $k \in \NN$. Analogously to the boson case, we obtain
\begin{align*}
\tr_{\cF^-} \left( \Hf^{\frac{1}{2}}\, \rho \, \Hf^{\frac{1}{2}} \right) = \sum\limits_{k=1}^{\infty} \left< \Hf^{\frac{1}{2}} \, \Psi_k, \rho \, \Hf^{\frac{1}{2}} \, \Psi_k \right>_{\cF}.
\end{align*}
By Lemma~\ref{lem: decomposition qf}, the positive semi-definite operator $\rho$ can be written as $\rho = \kappa \, \kappa^*$, where 
\begin{align*}
\kappa := \mathds{U}_{U} \left[ \ket{\varphi_1 \wedge \dots \wedge \varphi_n} \bra{\varphi_1 \wedge \dots \wedge \varphi_n} \otimes \frac{\Gamma (B^{\frac{1}{2}})}{\left[\tr_{\cF^-} \left( \Gamma (B) \right)\right]^{\frac{1}{2}}} \right]
\end{align*} 
with a decomposition $\fh = \fh_S \oplus^{\perp} \fh_{\Gamma}$, $n := \dim (\fh_S) < \infty$, an ONB $\left\{ \varphi_k \right\}_{k=1}^n$ of $\fh_S$, a unitarily implementable Bogoliubov transformation $U$, and a positive semi-definite trace class operator $B \in \cB (\fh_{\Gamma})$. Hence,
\begin{align*}
\omega \big( \Hf \big) = \sum\limits_{k=1}^{\infty} \sn{ \kappa^* \, \Hf^{\frac{1}{2}} \, \Psi_k}_{\cF}^2.
\end{align*}

Instead of a resolution of the identity by coherent states for bosons, we use the resolution of the identity by Slater determinants, 
\begin{align*}
\1_{\cF^-} = \lim_{N \ra \infty} \lim_{M \ra \infty} \sum\limits_{l=0}^{N} \sum\limits_{1 \leq i_1 < \cdots < i_l \leq M} \ket{\varphi_{i_1} \wedge \dots \wedge \varphi_{i_l}} \bra{\varphi_{1_1} \wedge \dots \wedge \varphi_{i_l}},
\end{align*}
as in the proof of Lemma~\ref{lem: qf as convex combination of pqf}, in particular, Eq.~\eqref{eq: approximation of inner product by pqf}. Then, we obtain
\begin{align*}
\omega \big(\Hf\big) = \sum\limits_{k=1}^{\infty} \lim_{N \ra \infty} \lim_{M \ra \infty} \sum\limits_{l=0}^{N} \sum\limits_{1 \leq i_1 < \cdots < i_l \leq M} \abs{ \left< \Hf^{\frac{1}{2}} \, \Psi_k, \kappa \left( \varphi_{i_1} \wedge \dots \wedge \varphi_{i_l} \right) \right>_{\cF} }^2.
\end{align*}
Because the sequence $\left( k \mapsto \lim_{M \ra \infty} \sum\limits_{l=0}^{N} \sum\limits_{1 \leq i_1 < \cdots < i_l \leq M} \abs{ \left< \Hf^{\frac{1}{2}} \, \Psi_k, \kappa \left( \varphi_{i_1} \wedge \dots \wedge \varphi_{i_l} \right) \right>_{\cF} }^2 \right)_{N=1}^{\infty}$ is monotonously increasing, the monotone convergence theorem allows for a exchange of the $k$-summation and the first limit. Using the monotone convergence theorem a second time to exchange the second limit and the $k$-summation, we obtain
\begin{align*}
\omega \big( \Hf \big) =  \lim_{N \ra \infty} \lim_{M \ra \infty} \sum\limits_{k=1}^{\infty} \sum\limits_{l=0}^{N} \sum\limits_{1 \leq i_1 < \cdots < i_l \leq M} \abs{ \left< \Hf^{\frac{1}{2}} \, \Psi_k, \kappa \left( \varphi_{i_1} \wedge \dots \wedge \varphi_{i_l} \right) \right>_{\cF} }^2.
\end{align*}
Furthermore, we change the order of the summations, since the sum is absolutely convergent, and get
\begin{align*}
\omega \big( \Hf \big) = \lim_{N \ra \infty} \lim_{M \ra \infty} \sum\limits_{l=0}^{N} \sum\limits_{1 \leq i_1 < \cdots < i_l \leq M}  \left< \kappa \left( \varphi_{i_1} \wedge \dots \wedge \varphi_{i_l} \right), \Hf \, \kappa \left( \varphi_{i_1} \wedge \dots \wedge \varphi_{i_l} \right) \right>_{\cF}.
\end{align*} 
Every vector $\kappa \left( \varphi_{i_1} \wedge \dots \wedge \varphi_{i_l} \right)$ defines a pure quasifree state, cf. the proof of Lemma~\ref{lem: qf as convex combination of pqf}. Since 
\begin{align*}
\left< \Psi, \Hf \Psi \right>_{\cF} \geq E_{\mathrm{BHF}}^{\mathrm{pure}}
\end{align*} 
for any pure quasifree state $\Psi \in \cF^-$, we finally have
\begin{align*}
\omega \big( \Hf \big) &\geq E_{\mathrm{BHF}}^{\mathrm{pure}} \lim_{N \ra \infty} \lim_{M \ra \infty} \sum\limits_{l=0}^{N} \sum\limits_{1 \leq i_1 < \cdots < i_l \leq M} \left< \varphi_{i_1} \wedge \dots \wedge \varphi_{i_l}, \rho \left( \varphi_{i_1} \wedge \dots \wedge \varphi_{i_l} \right) \right>_{\cF}\\
               &= E_{\mathrm{BHF}}^{\mathrm{pure}}.
\end{align*}
This proves the assertion.
\end{proof}

Theorem~\ref{thm: Variation} now follows from the Theorems~\ref{thm: BHF-pqf} and \ref{thm: HF-pqf}.


\section[Pure Quasifree States and their Generalized One-Particle Density Matrix]{Pure Quasifree States and their Generalized One-Particle Density Matrix}\label{sec: Relation pqf-1pdm}

For a given generalized fermion 1-pdm $\tgamma$, it is known that there is a pure quasifree state $\omega$ which has $\tgamma$ as its generalized 1-pdm, if and only if the generalized 1-pdm is a projection, \ie, $\tgamma^2 = \tgamma$  (see Sect.~\ref{sec: Fermion-qf-pdm}). For bosons, a similar statement is also known. In this section, we show that an even stronger relation holds:
\begin{theorem}\label{thm: relation}
The following statements are equivalent:
\begin{enumerate}
\item[(i)] $\omega$ is a centered pure quasifree state.
\item[(ii)] The corresponding generalized 1-pdm $\tgamma$ satisfies $\tr \left( \gamma \right) < \infty$ and
\begin{align*}
\tgamma \, \cS \, \tgamma = - \tgamma \qquad \text{for bosons},\\
\tgamma^2 = \tgamma \qquad \text{for fermions}. 
\end{align*}
\end{enumerate}
\end{theorem}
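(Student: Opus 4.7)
The direction (i)$\Rightarrow$(ii) is already recorded as \eqref{eq-vb-20}--\eqref{eq-vb-21} in the introduction, so the substance of the theorem lies in the converse (ii)$\Rightarrow$(i). Note that quasifreeness and pureness of $\omega$ are \emph{asserted} and not assumed --- the only input is a positivity/tracial constraint on the generalized 1-pdm. The plan is to normalize $\tgamma$ by a Bogoliubov transformation and then use a Hilbert--Schmidt / Cauchy--Schwarz identity to pin down the transformed density matrix as a rank-one projection onto a Slater determinant (fermions) or onto the vacuum (bosons).

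More precisely, let $\omega$ be a centered state on $\cF^{\pm}$ with density matrix $\rho$ and generalized 1-pdm $\tgamma$ satisfying the hypotheses of (ii). I would first appeal to the standard diagonalization theorem (\cite[Theorem~2.3]{BLS} for fermions, and an analogous bosonic version which uses $\tr(\gamma)<\infty$ to ensure the Shale--Stinespring condition) to obtain a Bogoliubov transformation $U$ with unitary implementation $\mathds{U}_{U}$ such that $\tgamma_{U} := U^{*}\tgamma\,U$ is block-diagonal in an ONB $\{\varphi_{k}\}_{k=1}^{\infty}$ of $\fh$, i.e.\ $\alpha_{U}=0$ and $\gamma_{U}$ is diagonal. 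By Lemmas~\ref{lem: ferm transformed gen 1pdm} and~\ref{lem: bos transformed 1pdm}, the projector-type hypothesis passes to $\tgamma_{U}$. Inserting the diagonal form: for fermions $\gamma_{U}^{2}=\gamma_{U}$, so $\gamma_{U}$ is an orthogonal projection, of finite rank $n=\tr(\gamma)<\infty$, onto a subspace $\fh_{S}\subset\fh$; for bosons the two diagonal blocks of $\tgamma_{U}\,\cS\,\tgamma_{U}=-\tgamma_{U}$ both reduce to $\gamma_{U}(\gamma_{U}+\1_{\fh})=0$, which together with $\gamma_{U}\ge 0$ forces $\gamma_{U}=0$ and $\tgamma_{U}=\begin{pmatrix}0&0\\0&\1_{\fh}\end{pmatrix}$.

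The second step is the Hilbert--Schmidt identity
\[
\|\, e(f)\,\rho_{U}^{1/2}\|_{\mathrm{HS}}^{2} \ = \ \omega_{U}\bigl(e(f)^{*}e(f)\bigr),\qquad \rho_{U}:=\mathds{U}_{U}^{*}\rho\,\mathds{U}_{U},
\]
applied with $e\in\{\fc,\fa\}$ or $\{\bc,\ba\}$. In the fermion case I would use it twice: $\omega_{U}(\fa_{k}^{*}\fa_{k})=\langle\varphi_{k},\gamma_{U}\varphi_{k}\rangle=0$ on $\fh_{E}:=\fh_{S}^{\perp}$ yields $\fa_{k}\,\rho_{U}^{1/2}=0$ for $\varphi_{k}\in\fh_{E}$, and the CAR identity $\omega_{U}(\fa_{k}\fa_{k}^{*})=1-\langle\varphi_{k},\gamma_{U}\varphi_{k}\rangle=0$ on $\fh_{S}$ yields $\fc_{k}\,\rho_{U}^{1/2}=0$ for $\varphi_{k}\in\fh_{S}$. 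Hence $\mathrm{Ran}(\rho_{U}^{1/2})$ is contained in the one-dimensional subspace spanned by the Slater determinant $\varphi_{i_{1}}\wedge\cdots\wedge\varphi_{i_{n}}$ associated to any ONB of $\fh_{S}$, and $\rho_{U}$ is the rank-one projector onto this vector --- a pure quasifree state. In the bosonic case, $\gamma_{U}=0$ yields $\ba_{k}\,\rho_{U}^{1/2}=0$ for every $k$, and since $\bigcap_{k}\ker\ba_{k}=\dC\,\vac$ inside $\cF^{+}$ one concludes $\rho_{U}=\ket{\vac}\bra{\vac}$. In either case $\rho=\mathds{U}_{U}\,\rho_{U}\,\mathds{U}_{U}^{*}$ is pure quasifree, since $\cZ_{\mathrm{pqf}}^{\pm}$ is preserved by Bogoliubov conjugation.

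I expect the main obstacle to be the bosonic diagonalization step: bosonic Bogoliubov transformations are only $\cS$-pseudo-unitary on $\fh\oplus\fh$, so producing a unitarily implementable $U$ that simultaneously block-diagonalizes $\tgamma$ and respects $\tgamma\geq 0$ is more delicate than its fermionic counterpart. The finiteness of $\tr(\gamma)$ is precisely what rescues the Shale--Stinespring criterion and permits the unitary implementation on Fock space. A minor book-keeping point is that, for bosons, $\bc_{k}$ and $\ba_{k}$ are unbounded, so the Hilbert--Schmidt identity above must be justified on the appropriate form domain; this is guaranteed by $\omega\in\cZ^{+}$, in particular by $\omega(\hNN^{2})<\infty$.
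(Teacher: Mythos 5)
Your proposal is correct, and its overall architecture --- conjugate by a Bogoliubov transformation to normalize $\tgamma$, then use positivity to force the transformed density matrix onto a single vector --- matches the paper's; the difference lies in how the normalization is carried out. The paper does not invoke a general block-diagonalization theorem: it first shows (Proposition~\ref{prop: bos reduction quasiproj}, via a polynomial-approximation/functional-calculus argument) that the hypothesis is equivalent to the identities $\alpha\,\alpha^*=\gamma\pm\gamma^2$ and $\gamma\,\alpha=\alpha\,\ol{\gamma}$, and then uses these to write the Bogoliubov transformation down explicitly, namely $u=(\1+\gamma)^{1/2}$, $v=\alpha(\1+\ol{\gamma})^{-1/2}$ for bosons (Lemma~\ref{lem: bos uniqueness quasiproj to cpqf}) and $u=(\1-\gamma)^{1/2}$, $v=\alpha(\1-\ol{\gamma})^{-1/2}P^{\perp}+P$ for fermions (Lemma~\ref{lem: ferm uniqueness proj to pqf}), the extra $P$ being a particle--hole flip on the eigenvalue-$1$ subspace so that $\tgamma_U=\left(\begin{smallmatrix}0&0\\0&\1\end{smallmatrix}\right)$ in both cases. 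You instead outsource this step to the diagonalization theorems of \cite{BLS} (fermions) and \cite{Na1,So2} (bosons) and only afterwards exploit the hypothesis, which then collapses to the scalar conditions $\gamma_U(\gamma_U+\1)=0$ resp.\ $\gamma_U^2=\gamma_U$; this is shorter and bypasses Proposition~\ref{prop: bos reduction quasiproj} entirely, at the price of leaning on the (nontrivial, but indeed available under $\tgamma\ge 0$, $\tr\gamma<\infty$) bosonic block-diagonalization result that you correctly flag as the delicate point, and of having to dispose of the residual finite-rank projection $\gamma_U=P_S$ in the fermion case via the joint-kernel characterization of the Slater determinant. The final identification step is the same argument in different clothing: your Hilbert--Schmidt identity $\|e(f)\rho_U^{1/2}\|_{\mathrm{HS}}^2=\omega_U(e(f)^*e(f))=0$ is precisely the Cauchy--Schwarz estimate the paper applies to the occupancy-number matrix elements $\mu_{K,L}$ of $\rho_U$. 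I see no gap.
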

Recall $\cS = \1_{\fh} \oplus (-\1_{\fh}) \in \cB (\fh \oplus \fh)$. A proof of Theorem~\ref{thm: relation} in the boson case is given in the following subsection. Two consequences of this theorem are discussed afterwards. In the second subsection, we prove the statement for fermions.

\subsection{Bosons}\label{sec: Boson-qf-pdm}

Before we show Theorem~\ref{thm: relation} for bosons, we give some preparatory lemmas.

\begin{lemma}\label{lem: bos relation quasiproj to qf}
If an operator $\tgamma = \left( \begin{smallmatrix} \gamma & \alpha \\ \alpha^* & \1 + \ol{\gamma} \end{smallmatrix} \right) : \, \fh \oplus \fh \ra \fh \oplus \fh$ satisfies $\tgamma \geq 0, \ \tr (\gamma) < \infty$, and
\begin{align}\label{eq: bos relation quasiproj to qf}
\tgamma \, \cS \, \tgamma = - \tgamma,
\end{align}
there is a centered pure quasifree state $\omega \in \cZ_{\mathrm{pqf}}^+ \cap \cZ_{\mathrm{cen}}^+$ that has $\tgamma$ as its generalized one-particle density matrix.
Furthermore, let $\omega \in \cZ_{\mathrm{pqf}}^+ \cap \cZ_{\mathrm{cen}}^+$ be a centered pure quasifree state. Then, the corresponding generalized 1-pdm $\tgamma$ fulfills \eqref{eq: bos relation quasiproj to qf}.
\end{lemma}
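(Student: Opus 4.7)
The plan is to use the covariance of both $\tgamma$ and the relation $\tgamma\,\cS\,\tgamma = -\tgamma$ under boson Bogoliubov transformations (Lemma~\ref{lem: bos transformed 1pdm}) to reduce the claim to the particularly transparent case in which the state has a Gibbs-type density matrix $\Gamma(C)/\tr_{\cF^+}(\Gamma(C))$, whose generalized 1-pdm is block-diagonal.

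For the easier direction (centered pure quasifree $\Rightarrow$ the algebraic relation), I would invoke Lemma~\ref{lem: bos characterize qf} with $C=0$ and $f=0$ to write the density matrix of $\omega$ as $\mathds{U}_{U}\ket{\vac}\bra{\vac}\mathds{U}_{U}^{*}$ for some boson Bogoliubov transformation $U$. A short direct computation using $\ba(f)\vac = 0$ and the CCR gives the vacuum 1-pdm $\tgamma_{\vac}=0\oplus\1_{\fh}$, which manifestly satisfies $\tgamma_{\vac}\,\cS\,\tgamma_{\vac}=-\tgamma_{\vac}$. By Lemma~\ref{lem: bos transformed 1pdm}, $\tgamma_{\omega}=U^{*}\tgamma_{\vac}U$, and inserting $U\,\cS\,U^{*}=\cS$ yields
\begin{align*}
\tgamma_{\omega}\,\cS\,\tgamma_{\omega}
\,=\, U^{*}\tgamma_{\vac}\,\cS\,\tgamma_{\vac}\,U
\,=\, -U^{*}\tgamma_{\vac}\,U
\,=\, -\tgamma_{\omega},
\end{align*}
exactly as in the final display of the proof of Lemma~\ref{lem: bos transformed 1pdm}.

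For the converse direction, I start from $\tgamma$ satisfying the hypotheses and apply the representation lemma immediately preceding this one to obtain a centered quasifree state $\omega\in\cZ_{\mathrm{cqf}}^{+}$ with $\tgamma_{\omega}=\tgamma$; note that $\alpha\in\cL^{2}(\fh)$ follows automatically from the pointwise identity $\alpha\alpha^{*}=\gamma(\1_{\fh}+\gamma)$ that one reads off the block form of $\tgamma\,\cS\,\tgamma=-\tgamma$ together with $\tr(\gamma)<\infty$. Lemma~\ref{lem: bos characterize qf} (with $f=0$) then provides a boson Bogoliubov transformation $U$ and a positive semi-definite $C\in\cL^{1}(\fh)$ with $\sn{C}_{\cB(\fh)}<1$ such that the density matrix of $\omega$ equals $\mathds{U}_{U}\,\Gamma(C)\,\mathds{U}_{U}^{*}/\tr_{\cF^{+}}(\Gamma(C))$. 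Forming the transformed state $\omega'(A):=\omega\bigl(\mathds{U}_{U}A\,\mathds{U}_{U}^{*}\bigr)$, whose density matrix is $\Gamma(C)/\tr_{\cF^{+}}(\Gamma(C))$, Lemma~\ref{lem: bos transformed 1pdm} implies that $\tgamma_{\omega'}=U^{*}\tgamma\,U$ also obeys $\tgamma_{\omega'}\,\cS\,\tgamma_{\omega'}=-\tgamma_{\omega'}$. Diagonalizing $C\varphi_{k}=c_{k}\varphi_{k}$ with $0\le c_{k}<1$, the state factorizes into a product of single-mode thermal states, each particle-number conserving; therefore $\alpha_{\omega'}=0$, and a one-mode calculation gives $\gamma_{\omega'}=C(\1_{\fh}-C)^{-1}=:D\ge 0$, so $\tgamma_{\omega'}=\left(\begin{smallmatrix}D&0\\0&\1_{\fh}+D\end{smallmatrix}\right)$.

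Plugging this block form into $\tgamma_{\omega'}\,\cS\,\tgamma_{\omega'}=-\tgamma_{\omega'}$ produces the operator identities $D^{2}=-D$ and $(\1_{\fh}+D)^{2}=\1_{\fh}+D$, both of which force $D=0$ by positivity, hence $C=0$ and $\Gamma(C)=\ket{\vac}\bra{\vac}$. Consequently the density matrix of $\omega$ is $\mathds{U}_{U}\ket{\vac}\bra{\vac}\mathds{U}_{U}^{*}$, which is a pure Bogoliubov transform of the vacuum and therefore a centered pure quasifree state. The main obstacle is the explicit computation of $\tgamma_{\omega'}$ for the Gibbs-type density matrix $\Gamma(C)/\tr_{\cF^{+}}(\Gamma(C))$, in particular the factorization over single modes and the vanishing of $\alpha_{\omega'}$; once this is established, the remaining step is the elementary scalar observation that $d\ge 0$ and $d^{2}=-d$ imply $d=0$.
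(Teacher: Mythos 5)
Your argument is correct, but it is worth noting that the paper does not actually prove Lemma~\ref{lem: bos relation quasiproj to qf} at all --- it simply refers to \cite{Na1,So2} --- so any proof you give is necessarily a ``different route.'' Your forward direction (centered pure quasifree $\Rightarrow$ $\tgamma\,\cS\,\tgamma=-\tgamma$) via $\tgamma_{\vac}=0\oplus\1_{\fh}$ and the covariance statement of Lemma~\ref{lem: bos transformed 1pdm} is clean and matches the spirit of the paper's later computations. Your converse is a nice reduction: you first invoke the representation lemma of Section~2 to produce a centered quasifree $\omega$ with $\tgamma_{\omega}=\tgamma$ (correctly extracting $\alpha\in\cL^{2}(\fh)$ from $\alpha\alpha^{*}=\gamma+\gamma^{2}$ and $\tr(\gamma)<\infty$), then use the Gibbs-form characterization of Lemma~\ref{lem: bos characterize qf} to diagonalize, and finally kill $D=C(\1-C)^{-1}$ via $D^{2}=-D$ with $D\geq0$. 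The blocks all check out ($\tgamma'\,\cS\,\tgamma'$ has top-left entry $D^{2}$, which must equal $-D$). Two small points you gloss over but which are routine: that centeredness forces $f=0$ in Lemma~\ref{lem: bos characterize qf} (the Weyl factor shifts the first moment by $-\langle f,\cdot\rangle$), and the single-mode computation $\gamma_{\omega'}=C(\1-C)^{-1}$, $\alpha_{\omega'}=0$ for the particle-number-conserving Gibbs state. Compare your converse with the paper's Lemma~\ref{lem: bos uniqueness quasiproj to cpqf}, which proves the stronger statement that \emph{any} centered state (not assumed quasifree) whose generalized 1-pdm satisfies the relation is pure quasifree; there the authors cannot use the Gibbs-form characterization and instead build an explicit Bogoliubov transformation from $\gamma$ and $\alpha$ followed by an occupation-number argument. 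Your approach buys brevity by exploiting quasifreeness, at the cost of resting on the representation lemma and Lemma~\ref{lem: bos characterize qf}, which the paper itself only quotes from \cite{Na1,So2,BBT}; the existence half of your proof is therefore only as self-contained as those imported results.
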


For a proof see \eg \cite{Na1,So2}.

Eq.~\eqref{eq: bos relation quasiproj to qf} is rewritten in a single equation for operators on $\fh$, \ie, we do not need the matrices $\tgamma$ and $\cS$.

\begin{proposition}\label{prop: bos reduction quasiproj}
Let $\omega \in \cZ^+$ be a state with the generalized 1-pdm $\tgamma = \left( \begin{smallmatrix} \gamma & \alpha \\ \alpha^* & \1_{\fh} + \ol{\gamma} \end{smallmatrix} \right)$. Then the following statements are equivalent:
\begin{enumerate}
\item[(i)] $\tgamma \, \cS \, \tgamma = - \tgamma$.
\item[(ii)] $\gamma^2 + \gamma = \alpha \, \alpha^*$. 
\end{enumerate}
\end{proposition}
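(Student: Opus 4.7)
The strategy is to unpack the matrix identity in (i) entry by entry. A direct block computation with $\cS=\diag(\1_{\fh},-\1_{\fh})$ yields
\begin{align*}
\tgamma\,\cS\,\tgamma \;=\; \begin{pmatrix} \gamma^{2}-\alpha\alpha^{*} & \gamma\alpha-\alpha(\1+\ol\gamma) \\ \alpha^{*}\gamma-(\1+\ol\gamma)\alpha^{*} & \alpha^{*}\alpha-(\1+\ol\gamma)^{2} \end{pmatrix},
\end{align*}
so that, comparing entry by entry with $-\tgamma$, statement (i) is equivalent to the three independent relations (the (2,1)-entry being the adjoint of the (1,2)-entry): (a) $\gamma^{2}+\gamma=\alpha\alpha^{*}$, (b) $\alpha^{*}\alpha=\ol\gamma(\ol\gamma+\1)$, and (c) $\gamma\alpha=\alpha\ol\gamma$. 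The implication (i)$\Rightarrow$(ii) is just relation (a) read off.

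For (ii)$\Rightarrow$(i), I derive (b) and (c) from (a) using the two properties automatic for the generalized 1-pdm of a state $\omega\in\cZ^{+}$: $\alpha$ is symmetric ($\alpha^{T}=\alpha$, cf.\ Remark~\ref{rem: bos gen 1pdm matrix}) and $\tgamma\geq 0$. Relation (b) comes cheaply: complex-conjugating (a) and using $\ol{\alpha^{*}}=\alpha^{T}=\alpha$ (hence $\ol\alpha=\alpha^{*}$), one obtains $\ol\gamma^{2}+\ol\gamma=\ol\alpha\,\ol{\alpha^{*}}=\alpha^{*}\alpha$.

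The main work lies in (c), for which I invoke the polar decomposition $\alpha=V|\alpha|$, where $V$ is the partial isometry with initial space $\ol{\Ran(|\alpha|)}$ and final space $\ol{\Ran(|\alpha^{*}|)}$. Setting $g(x):=\sqrt{x(x+1)}$, relations (a) and (b) rewrite as $|\alpha^{*}|=g(\gamma)$ and $|\alpha|=g(\ol\gamma)$; in particular $\ol{\Ran(|\alpha|)}=\ol{\Ran(\ol\gamma)}$ and $\ol{\Ran(|\alpha^{*}|)}=\ol{\Ran(\gamma)}$, since $g$ vanishes only at $0$. The standard identity $V|\alpha|=|\alpha^{*}|V$ then becomes $V\,g(\ol\gamma)=g(\gamma)\,V$, and functional calculus on the subspaces on which $V$ acts as a unitary upgrades this to $V\ol\gamma=\gamma V$. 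Since $|\alpha|=g(\ol\gamma)$ commutes with $\ol\gamma$,
\begin{align*}
\alpha\,\ol\gamma \;=\; V|\alpha|\,\ol\gamma \;=\; V\ol\gamma\,|\alpha| \;=\; \gamma V|\alpha| \;=\; \gamma\,\alpha,
\end{align*}
which is (c).

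The delicate step is the functional-calculus upgrade from $V\,g(\ol\gamma)=g(\gamma)V$ to $V\ol\gamma=\gamma V$. One must verify that $V$ restricts to a genuine unitary between the closed ranges $\ol{\Ran(\ol\gamma)}$ and $\ol{\Ran(\gamma)}$, on which $\ol\gamma$ and $\gamma$ are strictly positive, so that the continuous inverse of $g$ on the relevant spectra may be applied to the intertwining relation itself and not merely to $g$ of it. This range-alignment is precisely what relations (a) and (b) guarantee, and it is at this point that the positivity of $\tgamma$ (through the polar decomposition) enters essentially.
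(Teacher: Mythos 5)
Your proof is correct, and its skeleton coincides with the paper's: both reduce (i) to the three relations $\gamma^{2}+\gamma=\alpha\alpha^{*}$, $\ol{\gamma}^{2}+\ol{\gamma}=\alpha^{*}\alpha$, $\gamma\alpha=\alpha\ol{\gamma}$ (plus the adjoint of the last), observe that the second follows from the first by conjugation using $\alpha^{T}=\alpha$, and then concentrate on deriving the commutation relation from $\gamma^{2}+\gamma=\alpha\alpha^{*}$. Where you differ is in the mechanism for that last step. The paper writes $\gamma=f(\alpha\alpha^{*})$ and $\ol{\gamma}=f(\alpha^{*}\alpha)$ with $f(y)=\sqrt{y+1/4}-1/2$ (which is exactly your $g^{-1}$), approximates $f$ uniformly by polynomials, and feeds the elementary identity $(\alpha\alpha^{*})^{m}\alpha=\alpha(\alpha^{*}\alpha)^{m}$ through the limit to get $f(\alpha\alpha^{*})\,\alpha=\alpha\, f(\alpha^{*}\alpha)$ directly; no polar decomposition is needed. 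You instead factor $\alpha=V|\alpha|$ and transport the intertwining through the partial isometry. That works, but it makes the argument look more delicate than it is: from $V\,g(\ol{\gamma})=g(\gamma)\,V$ one gets $V\,p\bigl(g(\ol{\gamma})\bigr)=p\bigl(g(\gamma)\bigr)\,V$ for every polynomial $p$ and hence $V\,h\bigl(g(\ol{\gamma})\bigr)=h\bigl(g(\gamma)\bigr)\,V$ for every continuous $h$ on a common compact interval, so taking $h=g^{-1}$ (continuous on all of $[0,\infty)$) gives $V\ol{\gamma}=\gamma V$ with no appeal to $V$ being unitary between the closed ranges and no need for $\ol{\gamma}$, $\gamma$ to be bounded below there --- your closing paragraph's ``range-alignment'' concern is not actually load-bearing (and ``strictly positive'' on $\ol{\Ran(\ol{\gamma})}$ is not quite right in infinite dimensions, though harmless since $g^{-1}$ is continuous at $0$). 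In short: correct, but the paper's one-line Weierstrass argument applied to $\alpha$ itself is the more economical version of what you do through $V$.
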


\begin{proof}
Computing and simplfying the matrix products of {\em{(i)}}, we obtain the four equations
\begin{align}
\gamma^2 + \gamma &= \alpha \, \alpha^*,\label{eq: quasiproj1-1} \\
\ol{\gamma}^2 + \ol{\gamma} &= \alpha^* \, \alpha,\label{eq: quasiproj1-2} \\
\gamma \, \alpha &= \alpha \, \ol{\gamma},\label{eq: quasiproj2-1} \\
\alpha^* \, \gamma &= \ol{\gamma} \, \alpha^*.\label{eq: quasiproj2-2}
\end{align}
Thus, the implication {\emph{(i)}} $\Rightarrow$ {\emph{(ii)}} is immediate. It remains to prove {\emph{(ii)}} $\Rightarrow$ {\emph{(i)}}. Eq.~\eqref{eq: quasiproj2-2} is the adjoint of \eqref{eq: quasiproj2-1}, and \eqref{eq: quasiproj1-2} is equivalent to \eqref{eq: quasiproj1-1}. The system of equations reduces to \eqref{eq: quasiproj1-1} and \eqref{eq: quasiproj2-1}. Furthermore, we show that \eqref{eq: quasiproj2-1} follows from \eqref{eq: quasiproj1-1}. We define $f : \, \dR^+ \ra \dR^+$ by $f(y) := \sqrt{y+1/4}-1/2$ and observe that $f$ is the inverse map of $x \mapsto x + x^2 \ , \ \dR^+ \ra \dR^+$. Then
\begin{align*}
\gamma = f ( \alpha \, \alpha^* ) \ \text{and} \ \ol{\gamma} = f ( \ol{\alpha \, \alpha^*} ) = f ( \alpha^* \alpha ).
\end{align*}
Since $\alpha \, \alpha^*$ is bounded, we approximate the function $f$ by a sequence $\left( p_n \right)_{n=1}^{\infty}$ of polynomials, \ie, $\lim_{n \ra \infty} p_n ( x ) = f ( x )$ uniformly on the compact interval $\left[ - \sn{\alpha \, \alpha^*}_{\mathrm{op}},\sn{\alpha \, \alpha^*}_{\mathrm{op}} \right] \subset \dR$. So, $p_n ( \alpha \, \alpha^* )$ and $p_n ( \alpha^* \, \alpha )$ are well-defined. Using $\left( \alpha \, \alpha^* \right)^m \alpha = \alpha \left( \alpha^* \alpha \right)^m $ for all $m \in \NN$ and limits in operator norm, we obtain
\begin{align*}
\gamma \, \alpha = f \left( \alpha \, \alpha^* \right) \alpha = \lim_{n \ra \infty} p_n \left( \alpha \, \alpha^* \right) \alpha = \lim_{n \ra \infty} \alpha \, p_n \left( \alpha^* \alpha \right) = \alpha \, f \left( \alpha^* \alpha \right)= \alpha \, \ol{\gamma},
\end{align*}
which proves the assertion.
\end{proof}

\begin{remark}
In \cite{BBT}
\begin{align*}
\gamma = \frac{1}{2} \left( \cosh ( 2r ) - \1 \right), \quad
\hat{\alpha} = \frac{1}{2} \sinh ( 2r )
\end{align*}
are used, where $\left< f, \alpha \, g \right>_{\fh} = \left< f, \hat{\alpha} \, \ol{g} \right>_{\fh}$ and $r : \, \fh \ra \fh$ is an antilinear operator. $r$ obeys $\left< f, r \, g \right> = \left< g, r \, f \right>$ for any $f,g \in \fh$ and $r^2$ is trace class. These two equations are, however, implied by \eqref{eq: quasiproj1-1} and, in turn, yield \eqref{eq: quasiproj2-1}.
\end{remark}

Centered pure quasifree states can be characterized by a Bogoliubov transformation (see \cite{Na1} for the proof):

\begin{lemma}\label{lem: cpqf-Bogoliubov}
A centered boson state $\omega \in \cZ_{\mathrm{cen}}^+$ is pure quasifree if and only if there is a boson Bogoliubov transformation $U : \, \fh \oplus \fh \ra \fh \oplus \fh$ with unitary representation $\mathds{U}_{U} : \, \cF^+ \ra \cF^+$, such that for any $A \in \cA^+$
\begin{align*}
\omega ( A ) = \left< \mathds{U}_{U} \, \vac, A \, \mathds{U}_{U} \, \vac \right>.
\end{align*}
\end{lemma}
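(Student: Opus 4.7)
The proof splits naturally into two implications. For the direction ``$\Leftarrow$'', suppose $\omega(A) = \langle \mathds{U}_U\,\vac, A\,\mathds{U}_U\,\vac\rangle_{\cF}$ for some boson Bogoliubov transformation $U$ with unitary implementation $\mathds{U}_U$. Then $\omega$ is visibly a vector state, hence pure. The vacuum state $A \mapsto \langle \vac, A\,\vac\rangle_{\cF}$ is pure quasifree (its two-point correlations trivially satisfy Wick's theorem), and both pureness and quasifreeness are stable under conjugation by $\mathds{U}_U$ by the Remark following Definition~\ref{def: bos quasifree}; thus $\omega$ is pure quasifree. Centeredness follows automatically: by Lemma~\ref{lem: bosBogUnitaryImplementation} (applied to the inverse Bogoliubov transformation) the conjugation sends $\ba(g)$ to a linear combination of the form $\ba(u^* g) - \bc(v^T \bar g)$; applied to the vacuum the annihilation part vanishes and the creation part lies in the one-particle sector, orthogonal to $\vac$, so $\omega(\ba(g)) = 0$ for all $g \in \fh$.

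The substance lies in the direction ``$\Rightarrow$''. Here I would invoke the structural Lemma~\ref{lem: bos characterize qf}, which asserts that any pure quasifree density matrix has the form $\rho = \mathds{W}_f\,\mathds{U}_U\,|\vac\rangle\langle\vac|\,\mathds{U}_U^*\,\mathds{W}_f^*$ for some $f \in \fh$ and boson Bogoliubov transformation $U$ with unitary implementation $\mathds{U}_U$. Hence $\omega(A) = \langle \mathds{W}_f\,\mathds{U}_U\,\vac, A\,\mathds{W}_f\,\mathds{U}_U\,\vac\rangle_{\cF}$, and the only remaining task is to force the Weyl shift $f$ to be zero using the centeredness hypothesis. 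Using the identity $\mathds{W}_f^*\,\ba(g)\,\mathds{W}_f = \ba(g) - \langle g, f\rangle_{\fh}$ from Section~\ref{sec: basics} together with the first-direction computation $\langle \vac, \mathds{U}_U^*\,\ba(g)\,\mathds{U}_U\,\vac\rangle_{\cF} = 0$, one finds $\omega(\ba(g)) = -\langle g, f\rangle_{\fh}$ for every $g \in \fh$. Centeredness forces this to vanish identically, so $f = 0$, and the desired representation $\omega(A) = \langle \mathds{U}_U\,\vac, A\,\mathds{U}_U\,\vac\rangle_{\cF}$ follows.

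The real difficulty of the argument is outsourced to Lemma~\ref{lem: bos characterize qf}, whose proof (cited from \cite{BBT}) requires an explicit Bogoliubov diagonalization of $\tgamma_\omega$ in the spirit of Proposition~\ref{prop: bos reduction quasiproj}, together with a verification of the Shale--Stinespring condition from the trace-class bound $\tr_{\fh}(\gamma_\omega) < \infty$. Granting that structural input, the plan above is essentially bookkeeping; the only care needed is that the Weyl and Bogoliubov conjugation identities be applied at the level of state expectations of unbounded operators, which is licensed by the $\cC^4$-regularity built into the definition of $\cZ^+$.
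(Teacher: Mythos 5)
Your argument is correct, but note that the paper does not actually prove this lemma: it is stated with a bare pointer to \cite{Na1}, so there is no internal proof to compare against. What you have supplied is a derivation from the paper's own earlier material, and it works. The ``$\Leftarrow$'' direction follows from purity of vector states, quasifreeness of the vacuum, and the invariance of $\cZ_{\mathrm{pqf}}^+$ under Bogoliubov transformations (the Remark after Definition~\ref{def: bos quasifree}); the ``$\Rightarrow$'' direction reduces to Lemma~\ref{lem: bos characterize qf} (the representation $\rho = \mathds{W}_f\,\mathds{U}\,\ket{\vac}\bra{\vac}\,\mathds{U}^*\,\mathds{W}_f^*$), after which centeredness kills the Weyl shift via $\omega\big(\ba(g)\big) = \langle \mathds{U}\,\vac, \ba(g)\,\mathds{U}\,\vac\rangle_{\cF} - \langle g, f\rangle_{\fh} = -\langle g,f\rangle_{\fh}$, forcing $f=0$ and $\mathds{W}_0 = \1_{\cF}$. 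This is a cleaner packaging than the paper's citation, though the analytic core is still outsourced --- just to \cite{BBT} via Lemma~\ref{lem: bos characterize qf} instead of to \cite{Na1} --- and you are transparent about that. Two cosmetic points: the conjugated annihilation operator is $\mathds{U}_U^*\,\ba(g)\,\mathds{U}_U = \ba\big(u^T g\big) - \bc\big(v^T\ol{g}\big)$ rather than $\ba(u^*g) - \bc\big(v^T\ol{g}\big)$, which is immaterial here since only the creation/annihilation split is used; and quasifreeness of the vacuum should strictly be checked against the paper's Definition~\ref{def: bos quasifree} (the Gaussian form of $f\mapsto\omega(\mathds{W}_f)$ with $f_\omega = 0$ and $h_\omega = 0$), since the paper reserves the Wick-theorem characterization for the fermion case.
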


The relation between a generalized 1-pdm fulfilling \eqref{eq: bos relation quasiproj to qf} and the corresponding centered pure quasifree state is even closer. 

\begin{lemma}\label{lem: bos uniqueness quasiproj to cpqf}
Let $\omega \in \cZ_{\mathrm{cen}}^+$ and assume that the corresponding generalized 1-pdm $\tgamma$ satisfies
\begin{align}
\tgamma \, \cS \, \tgamma = - \tgamma\label{eq: bos quasiproj}.
\end{align}
Then, $\omega$ is a centered pure quasifree state.
\end{lemma}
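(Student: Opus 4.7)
The strategy is to use a Bogoliubov transformation to reduce $\omega$ to a state with generalized 1-pdm equal to that of the vacuum, and then to invoke that any state with vanishing particle-number expectation must be the vacuum state itself. By Lemma \ref{lem: bos relation quasiproj to qf}, the hypothesis $\tgamma\,\cS\,\tgamma=-\tgamma$ guarantees the existence of some centered pure quasifree state $\omega_0$ whose generalized 1-pdm coincides with $\tgamma$. Lemma \ref{lem: cpqf-Bogoliubov} provides a boson Bogoliubov transformation $V$ with unitary implementation $\mathds{U}_V$ such that $\omega_0(A)=\langle \mathds{U}_V\vac,\,A\,\mathds{U}_V\vac\rangle$ for every $A\in\cA^+$. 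Since the vacuum state has generalized 1-pdm $P_2:=\left(\begin{smallmatrix}0&0\\0&\1_\fh\end{smallmatrix}\right)$, applying Lemma \ref{lem: bos transformed 1pdm} to $\omega_{\mathrm{vac}}$ with Bogoliubov transformation $V^{-1}$ (implemented by $\mathds{U}_V^*$) identifies $\tgamma=\tgamma_{\omega_0}=(V^{-1})^*P_2\,V^{-1}$, equivalently $V^*\,\tgamma\,V=P_2$.

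I would then define the transformed state $\omega_V(A):=\omega(\mathds{U}_V A\,\mathds{U}_V^*)$ for $A\in\cA^+$. Unitary conjugation preserves membership in $\cZ^+$, and by Lemma \ref{lem: bos transformed 1pdm} the generalized 1-pdm of $\omega_V$ equals $V^*\tgamma V=P_2$. In particular $\gamma_{\omega_V}=0$, so $\omega_V(\hNN)=\tr_\fh(\gamma_{\omega_V})=0$. Letting $\rho_V\in\cL^1(\cF^+)$ be the density matrix that represents $\omega_V$, the identity $\|\rho_V^{1/2}\hNN^{1/2}\|_{\mathrm{HS}}^2=\tr(\rho_V\hNN)=0$ forces $\hNN^{1/2}\rho_V^{1/2}=0$, and hence $\Ran(\rho_V)\subseteq\ker\hNN=\CC\,\vac$. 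Together with $\tr\rho_V=1$ this pins down $\rho_V=|\vac\rangle\langle\vac|$, so $\omega_V(A)=\langle\vac,A\vac\rangle$ for all $A\in\cA^+$.

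Substituting $A=\mathds{U}_V^*B\,\mathds{U}_V$ in this identity gives $\omega(B)=\langle\mathds{U}_V\vac,\,B\,\mathds{U}_V\vac\rangle=\omega_0(B)$ for every $B\in\cA^+$, so $\omega=\omega_0$ and $\omega$ is a centered pure quasifree state. The main delicate point will be the Bogoliubov bookkeeping that delivers $V^*\tgamma V=P_2$: one must verify that $\mathds{U}_{V^{-1}}=\mathds{U}_V^*$, which follows from the unitarity of $\mathds{U}_V$ together with the Shale--Stinespring condition that makes $\mathds{U}_V$ well-defined in the first place, so that the transformation rule of Lemma \ref{lem: bos transformed 1pdm} can be applied to the vacuum in the form used above. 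Once that is in place, the extraction of the density matrix for $\omega_V$, the reduction via $\omega_V(\hNN)=0$ to $|\vac\rangle\langle\vac|$, and the final substitution to identify $\omega$ with $\omega_0$ on $\cA^+$ are routine consequences of the definitions collected in Section \ref{sec: basics}.
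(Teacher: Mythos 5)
Your proof is correct, but it takes a genuinely different route from the paper's. The paper constructs the diagonalizing Bogoliubov transformation explicitly, setting $u=(\1+\gamma)^{1/2}$ and $v=\alpha(\1+\ol{\gamma})^{-1/2}$ and verifying the relations \eqref{eq: def bosBogoliubov} by hand from the identities $\gamma^2+\gamma=\alpha\,\alpha^*$ and $\gamma\,\alpha=\alpha\,\ol{\gamma}$ (Proposition~\ref{prop: bos reduction quasiproj}), together with the Shale--Stinespring bound $\tr(v^*v)=\tr(\gamma)<\infty$; you instead obtain the transformation indirectly, by first invoking the existence half of Lemma~\ref{lem: bos relation quasiproj to qf} to produce a centered pure quasifree $\omega_0$ with the same $\tgamma$ and then reading off its Bogoliubov transformation from Lemma~\ref{lem: cpqf-Bogoliubov}. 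Both are legitimate; the paper's version yields the explicit formulas $vv^*=\gamma$, $vu^T=\alpha$ and is self-contained at this step, while yours is shorter but leans on the externally cited existence result. For the vacuum-identification step the paper expands $\rho_U$ in the occupancy-number basis and kills the coefficients $\mu_{K,L}$ by Cauchy--Schwarz against $\omega_U(\bc_n\ba_n)=0$, whereas you argue directly that $\tr(\rho_V\hNN)=\tr(\gamma_{\omega_V})=0$ forces $\Ran(\rho_V)\subseteq\ker\hNN=\dC\,\vac$ and hence $\rho_V=\ket{\vac}\bra{\vac}$ --- a cleaner argument that reaches the same conclusion. Two points you correctly flag but should actually carry out: that $\mathds{U}_V^*$ implements $V^{-1}$ (so that Lemma~\ref{lem: bos transformed 1pdm} applied to the vacuum gives $\tgamma=(V^{-1})^*P_2\,V^{-1}$, equivalently $V^*\tgamma V=P_2$ since $(V^{-1})^*=(V^*)^{-1}$), and that conjugation by $\mathds{U}_V$ maps $\cA^+$ into itself so the final substitution $\omega(B)=\omega_V(\mathds{U}_V^*B\,\mathds{U}_V)$ is licit; both are routine and the paper glosses over the analogous points in its own proof. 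A small bonus of your route is that it exhibits $\omega=\omega_0$ explicitly, i.e., the uniqueness of the state with the given generalized 1-pdm among all of $\cZ_{\mathrm{cen}}^+$, not merely its pureness and quasifreeness.
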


\begin{proof} 
As stated in Remark~\ref{rem: bos gen 1pdm matrix}, the generalized 1-pdm is of the form $\tgamma = \left( \begin{smallmatrix} \gamma & \alpha \\ \alpha^* & \1 + \ol{\gamma} \end{smallmatrix} \right)$, where $\gamma : \, \fh \ra \fh$ is the 1-pdm and $\alpha^* : \, \fh \ra \fh$ is defined in \eqref{eq: bos alpha}. For any $k \in \NN$, let $\phi_k$ denote an eigenfunction corresponding to the eigenvalue $\lambda_k$ of $\gamma$, \ie, $\gamma \phi_k = \lambda_k \phi_k$. Since the 1-pdm is selfadjoint, we choose a set of eigenfunctions $\left\{ \phi_k \right\}_{k=1}^{\infty}$ which forms an ONB of $\fh$. We define the operators $u : \, \fh \ra \fh$ and $v : \, \fh \ra \fh$ by 
\begin{align*}
u \phi_k := \left( \1 + \gamma \right)^{\frac{1}{2}} \phi_k := \left( 1 + \lambda_k \right)^{\frac{1}{2}} \phi_k \qquad \text{and} \qquad v \phi_k := \alpha \left( \1 + \ol{\gamma} \right)^{-\frac{1}{2}} \phi_k := \left( 1 + \lambda_k \right)^{-\frac{1}{2}} \alpha \phi_k,
\end{align*}
where we abbreviate $\1 \equiv \1_{\fh}$. We show that
\begin{align*}
U := \begin{pmatrix} u & v \\ \ol{v} & \ol{u} \end{pmatrix} = \begin{pmatrix} \left( \1 + \gamma \right)^{\frac{1}{2}} & \alpha \left( \1 + \ol{\gamma} \right)^{-\frac{1}{2}} \\ \alpha^* \left( \1 + \gamma \right)^{-\frac{1}{2}} & \left( \1 + \ol{\gamma} \right)^{\frac{1}{2}} \end{pmatrix}
\end{align*}
defines a boson Bogoliubov transformation. To this end, we prove the conditions on $u$ and $v$ specified in Definition~\ref{def: bosBogoliubov}. We know from the proof of Proposition~\ref{prop: bos reduction quasiproj} that \eqref{eq: bos quasiproj} is equivalent to \eqref{eq: quasiproj1-1}--\eqref{eq: quasiproj2-2}. Using $\alpha \, \ol{\gamma} = \gamma \, \alpha$ and $\alpha \, \alpha^* = \gamma + \gamma^2$ we calculate
\begin{align*}
u u^* - v v^* = \1 + \gamma - \left( \1 + \gamma \right)^{-1} \alpha \, \alpha^* = \1,
\end{align*}
which is the left equation of \eqref{eq: def bosBogoliubov 1}. The right equation of \eqref{eq: def bosBogoliubov 1} is derived from
\begin{align*}
u^* u - v^T \ol{v} = \left( \1 + \gamma \right)^{\frac{1}{2}} \left( \1 + \gamma \right)^{\frac{1}{2}} - \left( \1 + \gamma \right)^{-\frac{1}{2}} \alpha^T \, \ol{\alpha} \left( \1 + \gamma \right)^{-\frac{1}{2}} 
\end{align*}
and using $\alpha^T = \alpha$ and $\alpha \, \alpha^* = \gamma^2 + \gamma$.
Furthermore,
\begin{align*}
u^* v - v^T \ol{u} = \left( \1 + \gamma \right)^{\frac{1}{2}} \alpha \left( \1 + \ol{\gamma} \right)^{-\frac{1}{2}} - \left( \1 + \gamma \right)^{-\frac{1}{2}} \alpha \left( \1 + \ol{\gamma} \right)^{\frac{1}{2}} = 0
\end{align*}
because $\alpha^T = \alpha$ and $\alpha \, \ol{\gamma} = \gamma \, \alpha$. Thus, we get the left equation of \eqref{eq: def bosBogoliubov 2}. Analogously, we obtain
\begin{align*}
u v^T - v u^T = \left( \1 + \gamma \right)^{\frac{1}{2}} \left( \1 + \gamma \right)^{-\frac{1}{2}} \alpha^T - \alpha \left( \1 + \ol{\gamma} \right)^{-\frac{1}{2}} \left( \1 + \ol{\gamma} \right)^{\frac{1}{2}} = 0.
\end{align*}
Hence, $U$ is a boson Bogoliubov transformation. Since
\begin{align*}
\tr \left( v^* v \right) = \tr \left( \left( \1 + \ol{\gamma} \right)^{-\frac{1}{2}} \alpha^* \alpha \left( \1 + \ol{\gamma} \right)^{-\frac{1}{2}} \right) = \tr \left( \ol{\gamma} \right) = \tr \left( \gamma \right) = \omega \big( \hNN \big) < \infty,
\end{align*}
there is a unitary implementation $\mathds{U} : \, \cF^+ \ra \cF^+$ of the Bogoliubov transformation $U$ by Lemma~\ref{lem: bosBogUnitaryImplementation}.

We define a state $\omega_{U} \in \cZ^+$ by $\omega_{U} (A) := \omega (\mathds{U} \, A \, \mathds{U}^*)$ for any $A \in \cA^+$. We show that the generalized 1-pdm of $\omega_{U}$ is 
\begin{align}
\tgamma_{U} = \begin{pmatrix} 0 & 0 \\ 0 & \1 \end{pmatrix}.\label{eq: bos diagonal gen 1pdm}
\end{align}
Since $\tgamma_{U} = U^* \, \tgamma \, U$ by Lemma~\ref{lem: bos transformed 1pdm}, \eqref{eq: bos diagonal gen 1pdm} is equivalent to
\begin{align*}
\begin{pmatrix} \gamma & \alpha \\ \alpha^* & \1 + \ol{\gamma} \end{pmatrix} = U \begin{pmatrix} 0 & 0 \\ 0 & \1 \end{pmatrix} U^* = \begin{pmatrix} v v^* & v u^T \\ \ol{u} v^* & \ol{u} u^T \end{pmatrix}.
\end{align*}
Thus, we only check that $\gamma = v v^*$ and $\alpha = v u^T$. On the one hand,
\begin{align*}
v v^* = \alpha \left( \1 + \ol{\gamma} \right)^{-1} \alpha^* = \left( \1 + \gamma \right)^{-1} \alpha \, \alpha^* = \gamma
\end{align*}
with $\alpha \, \ol{\gamma} = \gamma \, \alpha$ and $\alpha \, \alpha^* = \gamma + \gamma^2$. On the other hand,
\begin{align*}
v u^T = \alpha \left( \1 + \ol{\gamma} \right)^{-\frac{1}{2}} \left( \1 + \ol{\gamma} \right)^{\frac{1}{2}} = \alpha.
\end{align*}
Therefore, the Bogoliubov transformation $U$ yields $\tgamma_{U} = \left( \begin{smallmatrix} 0 & 0 \\ 0 & \1 \end{smallmatrix} \right)$. In particular, we have
\begin{align}\label{eq: bos vanishing 1pdm}
\gamma_{U} = 0
\end{align}
and $\tgamma_{U} \, \cS \, \tgamma_{U} = - \tgamma_{U}$.

Next, we show that the only state having $\tgamma_{U}$ as its generalized 1-pdm is the vacuum state. We choose $\left\{ \varphi_n \right\}_{n=1}^{\infty}$ to be a fixed, but arbitrary ONB of $\fh$. We consider the set
\begin{align*}
\cK := \left\{ K : \, \NN \ra \NN \cup \left\{ 0 \right\}  \Big| \, \exists n_0 \in \NN, \ \forall n \geq n_0 : \, K_n = 0 \right\}
\end{align*}
and define for any $K \in \cK$
\begin{align*}
\bc_K := \prod\limits_{ \substack{n=1\\K_n \neq 0} }^{\infty} \frac{\left(\bc_n\right)^{K_n}}{\sqrt{K_n!}}
\end{align*}
and $\Psi_K \in \cF^+$ by $\Psi_K := \bc_K \vac$. Note that  $\Psi_{\left( 0,0,\dots \right)} = \vac$. For any $K,L \in \cK$ with $K \neq L$, $\left< \Psi_K , \Psi_L \right> = 0$, $\left< \Psi_K , \Psi_K \right> = 1$, and $\left\{ \Psi_K \right\}_{K \in \cK} \subset \cF^+$ forms an ONB of the boson Fock space, the so-called occupancy number basis. 

We denote by $\rho_{U}$ the density matrix corresponding to $\omega_{U}$. With the occupancy number representation and the usual Dirac bra-ket notation, we rewrite the density matrix $\rho_{U}$ of the state $\omega_{U}$ as
\begin{align*}
\rho_{U} = \sum\limits_{K,L \in \cK} \mu_{K,L} \ket{ \Psi_K } \bra{ \Psi_L },
\end{align*}
where $\mu_{K,L} := \omega_{U} ( \ket{\Psi_L} \bra{\Psi_K} ) \in \dC$. For any $n \in \NN$ and any $K \in \cK$ with $K_n \geq 1$, we denote the vector, in which one of the particles in the state given by $\varphi_n$ is removed, by $ K - E^{(n)}$, where $E^{(n)} \in \cK$ with $E_n^{(n)} = 1$ and $E_m^{(n)} = 0$ for all $m \in \NN, \ m \neq n$. Then, for all $n \in \NN$ and $K,L \in \cK$ with $K_n \geq 1$, the Cauchy--Schwarz inequality yields
\begin{align*}
\abs{\mu_{K,L}}^2 &= \abs{ \omega_{U} \big( \ket{\Psi_L} \bra{\Psi_{K - E^{(n)} }} \frac{1}{\sqrt{K_n}} \ba_n \big)}^2\notag \\ 
&\leq \frac{1}{K_n} \omega_{U} \big( \ket{\Psi_L} \left< \Psi_{K - E^{(n)} } , \Psi_{K - E^{(n)} } \right> \bra{\Psi_L} \big) \, \omega_{U} ( \bc_n \ba_n ),
\end{align*}
which vanishes since, by \eqref{eq: bos vanishing 1pdm}, $\omega_{U} ( \bc_n \ba_n ) = \left< \varphi_n, \gamma_{U} \, \varphi_n \right> = 0$ for all $n \in \NN$. Consequently, $\mu_{K,L} = 0$ if any $K,L \in \cK$ is different from $\left( 0,0,\dots \right)$, and $\mu_{K,K} = 1$ for $K = \left( 0,0,\dots \right)$. As asserted, we have
\begin{align*}
\omega_{U} ( A ) = \left< \vac , A \, \vac \right>
\end{align*}
for any $A \in \cA^+$. Hence, we obtain
\begin{align*}
\omega ( A ) = \omega_{U} ( \mathds{U}^* A \, \mathds{U} ) = \left< \mathds{U} \, \vac , A \, \mathds{U} \, \vac \right>
\end{align*}
for the original state $\omega$. So, $\omega$ is a pure quasifree state according to Lemma~\ref{lem: cpqf-Bogoliubov}.
\end{proof}


We conclude from Lemmas~\ref{lem: bos relation quasiproj to qf} and \ref{lem: bos uniqueness quasiproj to cpqf}:

\begin{theorem}
Let $\omega \in \cZ_{\mathrm{cen}}^+$ be a centered state and $\cS = \1_{\fh} \oplus (-\1_{\fh}) \in \cB \left( \fh \oplus \fh \right)$. Then, the following statements are equivalent:
\begin{enumerate}
\item[(i)] $\omega$ is a pure quasifree state.
\item[(ii)] The generalized one-particle density matrix $\tgamma$ of the state $\omega$ satisfies $\tr \left( \gamma \right) < \infty$ and
\begin{align*}
&\tgamma \, \cS \, \tgamma = - \tgamma. 
\end{align*}
\end{enumerate}
\end{theorem}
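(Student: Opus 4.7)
The plan is to obtain the theorem as an immediate corollary of the two preceding lemmas, Lemma~\ref{lem: bos relation quasiproj to qf} and Lemma~\ref{lem: bos uniqueness quasiproj to cpqf}, together with the observation that the trace condition on $\gamma$ in (ii) is actually automatic under the standing hypothesis $\omega \in \cZ_{\mathrm{cen}}^+$. So the proof is essentially bookkeeping: identify which half of which lemma supplies which implication, and verify that all hypotheses are in place.

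For (i) $\Rightarrow$ (ii): starting from a centered pure quasifree $\omega$, the second assertion of Lemma~\ref{lem: bos relation quasiproj to qf} directly delivers $\tgamma \, \cS \, \tgamma = -\tgamma$. The trace-class property $\tr(\gamma) < \infty$ need not be imposed as an extra hypothesis, but follows from the definition of $\cZ^+$ itself: membership there requires $\omega(\hNN^2) < \infty$, whence by Cauchy--Schwarz $\omega(\hNN) < \infty$, and the identity $\tr_{\fh}(\gamma) = \omega(\hNN)$ (derived from the definition of the boson 1-pdm) closes the point.

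For (ii) $\Rightarrow$ (i): this is exactly the content of Lemma~\ref{lem: bos uniqueness quasiproj to cpqf}, applied to the given $\omega \in \cZ_{\mathrm{cen}}^+$. That lemma takes a centered state whose generalized 1-pdm satisfies the quasi-projection identity $\tgamma \, \cS \, \tgamma = -\tgamma$ and constructs an explicit Bogoliubov transformation $U$ with blocks $u = (\1+\gamma)^{1/2}$ and $v = \alpha(\1+\ol{\gamma})^{-1/2}$ that conjugates $\omega$ to the vacuum, certifying $\omega$ as centered pure quasifree via Lemma~\ref{lem: cpqf-Bogoliubov}.

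Since the work is entirely delegated, no real obstacle arises at this stage. The genuine difficulties were already absorbed into the two preceding lemmas: on the one hand, the functional-calculus argument in Proposition~\ref{prop: bos reduction quasiproj} showing that the single scalar identity $\gamma^2 + \gamma = \alpha\alpha^*$ forces the intertwining relation $\gamma\alpha = \alpha\ol{\gamma}$, and on the other hand, the occupancy-number-basis argument in the proof of Lemma~\ref{lem: bos uniqueness quasiproj to cpqf} showing that a centered state with vanishing 1-pdm must be the vacuum.
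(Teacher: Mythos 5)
Your proposal is correct and follows exactly the paper's own proof: the implication (i)~$\Rightarrow$~(ii) from the second assertion of Lemma~\ref{lem: bos relation quasiproj to qf}, and (ii)~$\Rightarrow$~(i) from Lemma~\ref{lem: bos uniqueness quasiproj to cpqf}. Your additional observation that $\tr(\gamma)<\infty$ is automatic for $\omega\in\cZ^+$ via $\tr_{\fh}(\gamma)=\omega\big(\hNN\big)\leq\sqrt{\omega\big(\hNN^2\big)}<\infty$ is accurate and consistent with the paper's setup.
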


\begin{proof}
The implication {\em{(i)}} $\Rightarrow$ {\em{(ii)}} is given by the second assertion of Lemma~\ref{lem: bos relation quasiproj to qf} and the reverse by Lemma~\ref{lem: bos uniqueness quasiproj to cpqf}.
\end{proof}

A consequence of Lemma~\ref{lem: bos uniqueness quasiproj to cpqf} is the following corollary. 
\begin{corollary}
Let $\omega \in \cZ^+$ be a state and 
\begin{align*}
\hgamma = \begin{pmatrix} \gamma & \alpha & b \\ \alpha^* & \1 + \ol{\gamma} & \ol{b} \\ b^* & \ol{b}^* & 1  \end{pmatrix}
\end{align*}
the corresponding further generalized 1-pdm with $\gamma : \fh \ra \fh$ and $\alpha^* : \fh \ra \fh$ as defined in Eqs.~\eqref{eq: bos 1pdm} and \eqref{eq: bos alpha}, respectively. As in \eqref{eq: bos first moment}, the first moment $b \in \fh$ of the state $\omega$ is given by $\left< f , b \right> := \omega \big( \ba ( f ) \big)$ for any $f \in \fh$. Furthermore, we define the selfadjoint operator $\cQ_f  : \, \fh \oplus \fh \oplus \dC \ra \fh \oplus \fh \oplus \dC$ by
\begin{align*}
\cQ_f := \begin{pmatrix} \1_{\fh} & 0 & -f \\ 0 & -\1_{\fh} & \ol{f} \\ -f^* & \ol{f}^* & -1 \end{pmatrix}
\end{align*}
for any $f \in \fh$. If
\begin{align}
\hgamma \, \cQ_b \, \hgamma = - \hgamma, \label{eq: bos generalized quasiproj}
\end{align}
then $\omega$ is a pure quasifree state.
\end{corollary}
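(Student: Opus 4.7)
The plan is to reduce to the centered case by a Weyl transformation and then invoke Lemma~\ref{lem: bos uniqueness quasiproj to cpqf}. Define the centered state $\omega_0 \in \cZ_{\mathrm{cen}}^+$ by $\omega_0(A) := \omega(\mathds{W}_b^* A \, \mathds{W}_b)$; using $\mathds{W}_b^* \ba(f)\mathds{W}_b = \ba(f) - \langle f,b\rangle_{\fh}$ one checks directly that $\omega_0(\ba(f)) = 0$ for every $f \in \fh$. Since $\omega(A) = \omega_0(\mathds{W}_b A \mathds{W}_b^*)$ realises $\omega$ as a Weyl transform of $\omega_0$ and $\cZ_{\mathrm{pqf}}^+$ is invariant under Weyl transformations, it is enough to show that $\omega_0 \in \cZ_{\mathrm{pqf}}^+$. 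For this I would verify the quasi-projection identity $\tgamma_0 \, \cS \, \tgamma_0 = -\tgamma_0$ on the generalized 1-pdm $\tgamma_0$ of $\omega_0$ and then apply Lemma~\ref{lem: bos uniqueness quasiproj to cpqf}.

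The main preparatory step is to track the relation between the further generalized 1-pdms $\hgamma$ and $\hgamma_0$. The shift identity
\begin{align*}
\mathds{W}_b^*\,[\bc(f_1)+\ba(\ol f_2)+\mu]\,\mathds{W}_b &= \bc(f_1)+\ba(\ol f_2)+\big(\mu - \langle b, f_1\rangle_\fh - \langle \ol f_2, b\rangle_\fh\big)
\end{align*}
expresses the Weyl conjugation as the invertible linear map
\begin{align*}
T := \begin{pmatrix} \1_\fh & 0 & 0 \\ 0 & \1_\fh & 0 \\ -b^* & -\ol b^* & 1 \end{pmatrix}, \qquad T^{-1} = \begin{pmatrix} \1_\fh & 0 & 0 \\ 0 & \1_\fh & 0 \\ b^* & \ol b^* & 1 \end{pmatrix}
\end{align*}
on $\fH_{\mathrm{gen}} = \fh \oplus \fh \oplus \dC$, and Definition~\ref{def: bos further gen 1-pdm} then gives $\hgamma_0 = T^* \, \hgamma \, T$, equivalently $\hgamma = T^{-*} \hgamma_0 T^{-1}$. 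Since $\omega_0$ is centered its first moment vanishes, so $\hgamma_0$ decomposes as $\hgamma_0 = \tgamma_0 \oplus 1$ in the splitting $(\fh\oplus\fh)\oplus\dC$.

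The crux is the matrix identity
\begin{align*}
T^{-1}\,\cQ_b\,T^{-*} &= \cS \oplus (-1) = \diag(\1_\fh, -\1_\fh, -1),
\end{align*}
which follows by a short explicit block computation: the cancellations $b - b = 0$ clear the off-diagonal vector entries, and the scalar entry collapses via $\|\ol b\|^2 = \|b\|^2$. Substituting $\hgamma = T^{-*}\hgamma_0 T^{-1}$ into the hypothesis $\hgamma\,\cQ_b\,\hgamma = -\hgamma$ and sandwiching by $T^*$ on the left and $T$ on the right converts it into $\hgamma_0\,(\cS\oplus(-1))\,\hgamma_0 = -\hgamma_0$, which by the block form $\hgamma_0 = \tgamma_0 \oplus 1$ is precisely $\tgamma_0 \cS \tgamma_0 = -\tgamma_0$ (the $(3,3)$ entry being the trivial identity $-1 = -1$). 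Since $\gamma_0 = \gamma - |b\rangle\langle b|$ is trace class whenever $\gamma$ is, Lemma~\ref{lem: bos uniqueness quasiproj to cpqf} applies to $\omega_0$ and produces $\omega_0 \in \cZ_{\mathrm{pqf}}^+$, whence $\omega \in \cZ_{\mathrm{pqf}}^+$. The only real obstacle is purely notational: keeping the complex-conjugation conventions on $\fh\oplus\fh$ straight long enough to set up $T$ and verify the transformation law $\hgamma_0 = T^*\hgamma T$ correctly; once this is in place the remaining matrix identities are mechanical.
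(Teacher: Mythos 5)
Your proof is correct and follows essentially the same route as the paper's: reduce to the centered state $\omega_0$ via the Weyl transformation, conjugate the identity $\hgamma\,\cQ_b\,\hgamma=-\hgamma$ by a lower-triangular matrix encoding the shift so that it becomes $\tgamma_0\,\cS\,\tgamma_0=-\tgamma_0$, and invoke Lemma~\ref{lem: bos uniqueness quasiproj to cpqf}. Your matrix $T$ and the paper's $\cR_b$ differ only by the harmless sign factor $\mathrm{diag}(-\1_\fh,-\1_\fh,1)$, so your identity $T^{-1}\cQ_b T^{-*}=\cS\oplus(-1)$ is literally the paper's factorization $\cQ_b=\cR_b\,\widehat{\cS}\,\cR_b^*$ in disguise, and deriving $\hgamma_0=T^*\hgamma T$ directly from the definition replaces (equivalently) the paper's entrywise computation of $\gamma_0$ and $\alpha_0$.
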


\begin{proof}
If $\omega$ is centered, we have $b = 0$ and \eqref{eq: bos generalized quasiproj} reduces to $\hgamma \, \cQ_0 \, \hgamma = - \hgamma$ which is equivalent to \eqref{eq: bos quasiproj}. So, Lemma~\ref{lem: bos uniqueness quasiproj to cpqf} directly yields the assertion. Now, we do not assume the state to be centered. Then, for the Weyl transformation $\mathds{W}_b : \, \cF^+ \ra \cF^+$, we define the state $\omega_0 \in \cZ^+$ by $\omega_0 ( A ) := \omega \big( \mathds{W}_b^* A \, \mathds{W}_b \big)$ for any $A \in \cA^+$. First, we show that $b_0 = 0$, $\gamma_0 = \gamma - \ket{b} \bra{b}$, and $\alpha_0^* = \alpha^* - \big|\ol{b}\big>\big<{b}\big|$ for this state $\omega_0$. For any $f \in \fh$, we have
\begin{align*}
\left< b_0 , f \right> :=\omega_0 \big( \bc ( f ) \big) = \omega \big( \bc ( f ) - \left< b,f \right> \1_{\cF} \big) = \left< b,f \right> - \left< b,f \right> = 0.
\end{align*}
Thus, $b_0 = 0$ and $\omega_0$ is a centered state. Furthermore, for any $f,g \in \fh$,
\begin{align*}
\left< f, \gamma_0 g \right> &:= \omega_0 \big( \bc ( g ) \, \ba ( f ) \big) \notag \\
                             &= \omega \big( \big[ \bc (g) - \left< b,g \right> \1_{\cF} \big] \big[ \ba ( f ) - \left< f,b \right> \1_{\cF} \big] \big)\notag \\
                             &= \left< f , \gamma g \right> - \left< f,b \right> \left< b,g \right>.
\end{align*}
An analogous calculation yields
\begin{align*}
\left< \ol{f} , \alpha_0^* g \right> &:= \omega_0 \big( \bc ( g ) \, \bc ( f ) \big) \notag \\
                                     &= \omega \big( \big[ \bc ( g ) - \left< b,g \right> \1_{\cF} \big] \big[ \bc ( f ) - \left< b,f \right> \1_{\cF} \big] \big) \notag \\
                                     &= \omega \big( \bc ( g ) \, \bc ( f ) \big) - \left< b,g \right> \left< b,f \right>.
\end{align*}
Next, we consider \eqref{eq: bos generalized quasiproj}. For every $f \in \fh$ we decompose the operator $\cQ_f$  as
\begin{align*}
\cQ_f = \cR_f \, \widehat{S} \, \cR_f^*,
\end{align*}
where the operators $\cR_f, \widehat{\cS} : \, \fh \oplus \fh \oplus \dC \ra \fh \oplus \fh \oplus \dC$ are given by
\begin{align*}
\cR_f := \begin{pmatrix} -\1_{\fh} & 0 & 0 \\ 0 & -\1_{\fh} & 0 \\ f^* & \ol{f}^* & 1 \end{pmatrix} \quad \text{and} \quad \widehat{\cS} := \begin{pmatrix} \1_{\fh} & 0 & 0 \\ 0 & -\1_{\fh} & 0 \\ 0 & 0 & -1 \end{pmatrix}.
\end{align*}
Since $\cR_b$ is invertible, \eqref{eq: bos generalized quasiproj} is equivalent to
\begin{align}
\left( \cR_b^* \, \hgamma \, \cR_b \right) \widehat{\cS} \left( \cR_b^* \, \hgamma \, \cR_b \right) = - \cR_b^* \, \hgamma \, \cR_b. \tag{\ref{eq: bos generalized quasiproj}'}\label{eq: version bos gen quasiproj}
\end{align}
A straightforward computation yields
\begin{align*}
\cR_b^* \, \hgamma \, \cR_b = \begin{pmatrix} \gamma - \ket{b}\bra{b} & \alpha - \ket{b}\bra{\ol{b}} & 0 \\ \alpha^* - \ket{\ol{b}}\bra{b} & \1_{\fh} + \ol{\gamma} - \ket{\ol{b}}\bra{\ol{b}} & 0 \\ 0 & 0 & 1 \end{pmatrix}.
\end{align*}
Summarising the results, we obtain
\begin{align*}
\cR_b^* \, \hgamma \, \cR_b = \begin{pmatrix} \gamma_0 & \alpha_0 & 0 \\ \alpha_0^* & \1_{\fh} + \ol{\gamma}_0 & 0 \\ 0 & 0 & 1 \end{pmatrix},
\end{align*}
which is the further generalized 1-pdm $\hgamma_0$ of the state $\omega_0$. Thus, \eqref{eq: bos generalized quasiproj} implies
\begin{align*}
\hgamma_0 \, \widehat{\cS} \, \hgamma_0 = - \hgamma_0.
\end{align*}
Since the upper left $2 \times 2$-matrix of $\hgamma_0$ (which is an operator on $\fh \oplus \fh$) is the generalized 1-pdm $\tgamma_0$ and $\widehat{\cS}$ is diagonal, we find
\begin{align*}
\tgamma_0 \, \cS \, \tgamma_0 = - \tgamma_0.
\end{align*}
Hence, the generalized 1-pdm $\tgamma_0$ fulfills \eqref{eq: bos quasiproj} and for the state $\omega_0$ the requirements of Theorem~\ref{lem: bos uniqueness quasiproj to cpqf} are satisfied. Therefore, $\omega_0$ is a pure quasifree state.
\end{proof}

An important set of states which are related to the vacuum state via a Weyl transformation is the set of coherent states. Recall that a state $\omega \in \cZ^+$ is called coherent if there is an $f \in \fh$ and a Weyl transformation $\mathds{W}_f : \, \cF^+ \ra \cF^+$, such that, for any $A \in \cA^+$,
\begin{align*}
\omega \left( A \right) = \left< \mathds{W}_f^* \, \vac, A \, \mathds{W}_f^* \, \vac \right>.
\end{align*}

\begin{corollary}
 Eq.~\eqref{eq: bos generalized quasiproj} is satisfied for every coherent state.
\end{corollary}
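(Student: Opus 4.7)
The plan is to observe that a coherent state is, by construction, the image of the vacuum state under a Weyl transformation, and that the vacuum is the simplest possible pure quasifree state. Once we identify the first moment $b$ of a coherent state with the parameter $f$ of its defining Weyl operator, the corollary reduces directly to the equivalence already established in the proof of the preceding corollary.

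First I would compute the first moment. Given $\omega(A)=\langle \vac, \mathds{W}_f A\,\mathds{W}_f^*\,\vac\rangle_{\cF}$, the Weyl transformation rule $\mathds{W}_f\,\bc(g)\,\mathds{W}_f^*=\bc(g)+\langle f,g\rangle_{\fh}\,\1_{\cF}$ yields $\omega\bigl(\bc(g)\bigr)=\langle f,g\rangle_{\fh}$ for every $g\in\fh$, so by \eqref{eq: bos first moment} the first moment is $b=f$. Then the centered state $\omega_0(A):=\omega(\mathds{W}_b^{*}\,A\,\mathds{W}_b)$ obeys
\begin{align*}
\omega_0(A)=\langle\vac,\mathds{W}_f\mathds{W}_f^{*}A\,\mathds{W}_f\mathds{W}_f^{*}\vac\rangle_{\cF}=\langle\vac,A\,\vac\rangle_{\cF},
\end{align*}
i.e.\ $\omega_0$ is the vacuum state.

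Next, the vacuum is trivially a centered pure quasifree state (it is the state implemented by the identity Bogoliubov transformation in Lemma~\ref{lem: cpqf-Bogoliubov}), so its generalized one-particle density matrix is the diagonal matrix $\tgamma_{\vac}=\bigl(\begin{smallmatrix}0 & 0\\ 0 & \1_{\fh}\end{smallmatrix}\bigr)$, which visibly satisfies $\tgamma_{\vac}\,\cS\,\tgamma_{\vac}=-\tgamma_{\vac}$ and has $\tr(\gamma_{\vac})=0<\infty$. Therefore $\tgamma_{0}=\tgamma_{\vac}$ fulfills the projection-like identity of Lemma~\ref{lem: bos relation quasiproj to qf}.

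Finally I invoke the equivalence already exhibited inside the proof of the previous corollary: using the factorization $\cQ_b=\cR_b\,\widehat{\cS}\,\cR_b^{*}$ with $\cR_b$ invertible and the explicit identity
\begin{align*}
\cR_b^{*}\,\hgamma\,\cR_b=\begin{pmatrix}\gamma_0 & \alpha_0 & 0\\ \alpha_0^{*} & \1_{\fh}+\ol{\gamma}_0 & 0\\ 0 & 0 & 1\end{pmatrix},
\end{align*}
the relation $\hgamma\,\cQ_b\,\hgamma=-\hgamma$ is equivalent to $\tgamma_{0}\,\cS\,\tgamma_{0}=-\tgamma_{0}$, which we just verified. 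This establishes \eqref{eq: bos generalized quasiproj}. There is essentially no obstacle beyond bookkeeping here; the only point that requires care is checking that the Weyl-transformation identification really sends coherent states to the vacuum so that one can read off $\tgamma_0=\tgamma_{\vac}$, after which the preceding corollary's algebraic reduction does all the remaining work.
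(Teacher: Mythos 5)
Your proof is correct and follows essentially the same route as the paper's: both arguments reduce the claim, via the factorization $\cQ_b = \cR_b\,\widehat{\cS}\,\cR_b^*$ with $\cR_b$ invertible, to checking that the conjugated matrix $\cR_b^*\,\hgamma\,\cR_b$ equals $\diag(0,\1_{\fh},1)$ for a coherent state, which trivially satisfies the primed relation. The only cosmetic difference is that you identify the Weyl-centered state $\omega_0$ with the vacuum and read off $\tgamma_0=\tgamma_{\vac}$ directly, whereas the paper computes $b=\phi$, $\gamma=\ket{b}\bra{b}$ and $\alpha^*=\ket{\ol{b}}\bra{b}$ explicitly before conjugating; your use of the preceding corollary's reduction in the reverse direction is legitimate, since the scalar block of $\widehat{\cS}$ makes that step an equivalence.
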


\begin{proof}
For every coherent state $\omega$, we can find $\phi \in \fh$, such that
\begin{align*}
\omega \left( A \right) = \left< \vac , \mathds{W}_{\phi} \, A \, \mathds{W}_{\phi}^* \, \vac \right>,
\end{align*}
where $\mathds{W}_{\phi} : \, \cF^+ \ra \cF^+$ is a Weyl transformation.

We have $b = \phi$, because, for any $f \in \fh$,
\begin{align*}
\left< b , f \right> = \left< \vac , \mathds{W}_{\phi} \, \bc ( f ) \, \mathds{W}_{\phi}^* \, \vac \right> = \left< \vac , \big[ \bc ( f ) + \left< \phi , f \right> \1 \big] \vac \right>,
\end{align*}
where we use $\left< \vac , \bc ( f ) \vac \right> = \left< \ba ( f ) \vac , \vac \right> = 0$. For the 1-pdm $\gamma$ we find
\begin{align*}
\left< f , \gamma \, g \right> = \left< \vac , \big[ \bc ( g ) + \left< \phi , g \right> \big] \big[ \ba ( f ) + \left< f , \phi \right> \1 \big] \vac \right> = \left< f , \phi \right> \left< \phi , g \right> = \left< f, b \right> \left< b , g \right>
\end{align*}
for every $f,g \in \fh$ and, thus, $\gamma = \ket{b} \bra{b}$. Furthermore, $\alpha^* = \big| \ol{b} \big>  \big< b \big|$ by
\begin{align*}
\left< \ol{f} , \alpha^* g \right> = \left< \vac , \big[ \bc ( g ) + \left< \phi , g \right> \1 \big] \big[ \bc ( f ) + \left< \phi , f \right> \1 \big] \vac \right> = \left< \phi , g \right> \left< \phi , f \right> = \left< \ol{f} , \ol{b} \right> \left< b , g \right>.
\end{align*}
Finally, we obtain
\begin{align*}
\cR_b^* \, \hgamma \, \cR_b = \begin{pmatrix} 0 & 0 & 0 \\ 0 & \1_{\fh} & 0 \\ 0 & 0 & 1 \end{pmatrix},
\end{align*}
which obviously fulfills \eqref{eq: version bos gen quasiproj}. 
\end{proof}


\subsection{Fermions}\label{sec: Fermion-qf-pdm}

The statements of Sect.~\ref{sec: Boson-qf-pdm} can also be transferred to fermion systems. The fermion analogue of Lemma~\ref{lem: bos relation quasiproj to qf} is the following lemma.

\begin{lemma}\label{lem: ferm relation proj to qf}
If an operator $\tgamma = \left( \begin{smallmatrix} \gamma & \alpha \\ \alpha^* & \1 - \ol{\gamma} \end{smallmatrix} \right) : \, \fh \oplus \fh \ra \fh \oplus \fh$ satisfies $0 \leq \tgamma \leq \1_{\fh}, \ \tr (\gamma) < \infty$, and
\begin{align}\label{eq: ferm proj}
\tgamma^2 = \tgamma,
\end{align}
then there is a unique pure quasifree state $\omega \in \cZ^-$ that has $\tgamma$ as its generalized one-particle density matrix.
Furthermore, let $\omega \in \cZ^-$ be a pure quasifree state. Then, the corresponding generalized 1-pdm $\tgamma$ fulfills \eqref{eq: ferm proj}.
\end{lemma}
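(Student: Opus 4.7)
The plan is to adapt the strategy of Lemma~\ref{lem: bos uniqueness quasiproj to cpqf} (the boson counterpart) to the fermion setting. The projection identity $\tgamma^2 = \tgamma$ plays the role that $\tgamma\,\cS\,\tgamma = -\tgamma$ did in the boson case, and the characterization of pure quasifree states as $\omega(\cdot) = \langle \mathds{U}_U \vac,\,\cdot\,\mathds{U}_U \vac\rangle$ for some fermion Bogoliubov transformation $U$ is already furnished by Remark~\ref{rem: ferm pqf Bogoliubov}. The ``pure quasifree $\Rightarrow$ projection'' direction is short: one reads off from $\fa(f)\vac = 0$ and the CAR that the generalized 1-pdm of the vacuum state equals $P := \bigl(\begin{smallmatrix} 0 & 0 \\ 0 & \1_\fh \end{smallmatrix}\bigr)$, which is itself a projection; Lemma~\ref{lem: ferm transformed gen 1pdm} then gives $\tgamma_\omega = U^* P U$, and $U^*U = \1$ together with $P^2 = P$ implies $\tgamma_\omega^2 = \tgamma_\omega$.

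For the ``projection $\Rightarrow$ pure quasifree'' direction, I would first unpack $\tgamma^2 = \tgamma$ into the block identities $\alpha\alpha^* = \gamma - \gamma^2$, $\alpha^*\alpha = \bar\gamma - \bar\gamma^2$, $\gamma\alpha = \alpha\bar\gamma$, and $\alpha^*\gamma = \bar\gamma\alpha^*$. Then set $u := (\1_\fh - \gamma)^{1/2}$ and $v := \alpha(\1_\fh - \bar\gamma)^{-1/2}$, extended by zero on $\ker(\1_\fh - \bar\gamma)$; this extension is legitimate since $\alpha^*\alpha = \bar\gamma(\1_\fh - \bar\gamma)$ forces $\alpha$ to vanish on that kernel, which is finite-dimensional by $\gamma \leq \1_\fh$ and $\tr\gamma < \infty$. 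A straightforward calculation exploiting the intertwining $\gamma\alpha = \alpha\bar\gamma$ (and its transpose) verifies the four identities of Definition~\ref{def: ferm Bogoliubov}, so that $U := \bigl(\begin{smallmatrix} u & v \\ \bar v & \bar u \end{smallmatrix}\bigr)$ is a fermion Bogoliubov transformation. Moreover, $\tr(v^*v) = \tr\bigl((\1_\fh - \bar\gamma)^{-1/2}\alpha^*\alpha(\1_\fh - \bar\gamma)^{-1/2}\bigr) = \tr(\bar\gamma) = \tr(\gamma) < \infty$, so $v$ is trace class, in particular Hilbert--Schmidt, and the Shale--Stinespring condition delivers a unitary implementation $\mathds{U}_U$ of $U$ on $\cF^-$.

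Defining $\omega_U(A) := \omega(\mathds{U}_U A\, \mathds{U}_U^*)$ and applying Lemma~\ref{lem: ferm transformed gen 1pdm} together with the block computation $U^* \tgamma U = P$ that the choice of $u,v$ was engineered to produce, one obtains $\tgamma_{\omega_U} = P$. In particular $\gamma_{\omega_U} = 0$, so $\omega_U(\fc_k\fa_k) = 0$ for every vector $\varphi_k$ of a fixed ONB of $\fh$; since $\fc_k\fa_k = \fa_k^*\fa_k \geq 0$, the density matrix of $\omega_U$ is supported on $\bigcap_k \ker(\fa_k) = \dC\vac$, whence $\omega_U = \langle\vac,\cdot\,\vac\rangle$ and $\omega(A) = \langle \mathds{U}_U\vac, A\, \mathds{U}_U\vac\rangle$, which is pure quasifree by Remark~\ref{rem: ferm pqf Bogoliubov}. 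Uniqueness is then automatic from the already-recorded fact that a fermion quasifree state is uniquely determined by its generalized 1-pdm. The main delicate point will be the handling of $\ker(\1_\fh - \bar\gamma)$ when defining $v$ --- resolved by the finite-dimensional argument above --- after which the remainder is a direct algebraic verification closely parallel to the boson proof.
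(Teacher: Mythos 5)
Your overall plan is reasonable and the ``pure quasifree $\Rightarrow$ projection'' direction is fine, but the constructive direction contains a genuine gap: the definition of $v$. You set $v := \alpha\left( \1_{\fh} - \ol{\gamma} \right)^{-\frac{1}{2}}$ \emph{extended by zero} on $\ker \left( \1_{\fh} - \ol{\gamma} \right)$. Let $P$ denote the orthogonal projection onto the eigenvalue-$1$ eigenspace of $\gamma$ and $P^{\perp} := \1_{\fh} - P$ (you correctly observe that $\alpha P = 0$). With your choice one computes, using $\alpha \, \alpha^* = \gamma - \gamma^2$ and the intertwining $\gamma \, \alpha = \alpha \, \ol{\gamma}$,
\begin{align*}
u u^* + v v^* \;=\; \left( \1_{\fh} - \gamma \right) + \gamma \, P^{\perp} \;=\; \1_{\fh} - \gamma \, P \;=\; \1_{\fh} - P,
\end{align*}
which equals $\1_{\fh}$ only if $P = 0$. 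So your $U$ fails to be a fermion Bogoliubov transformation whenever $\gamma$ has eigenvalue $1$ --- and this is exactly the generic situation: for a Slater determinant $\varphi_1 \wedge \dots \wedge \varphi_N$ one has $\alpha = 0$ and $\gamma = P$ of rank $N$, so your $U = \left( \begin{smallmatrix} P^{\perp} & 0 \\ 0 & P^{\perp} \end{smallmatrix} \right)$ is not even injective. The repair is to extend $v$ by the \emph{identity} rather than by zero on $\Ran P$, i.e.\ $v := \alpha \left( \1_{\fh} - \ol{\gamma} \right)^{-\frac{1}{2}} P^{\perp} + P$; this builds the particle--hole transformation into the fully occupied modes, giving $v v^* = \gamma \, P^{\perp} + P = \gamma$ and hence $u u^* + v v^* = \1_{\fh}$. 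This is precisely the choice made in the paper's proof of the companion Lemma~\ref{lem: ferm uniqueness proj to pqf}. The finite-dimensionality of $\Ran P$ that you invoke is true but irrelevant to the defect.

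Two further remarks. First, the paper proves the present lemma by simply citing Theorems~2.3 and 2.6 of \cite{BLS}; your constructive route, once repaired, essentially reproduces the paper's proof of the \emph{next} lemma (Lemma~\ref{lem: ferm uniqueness proj to pqf}), which is the genuinely new statement there (a state whose generalized 1-pdm is a projection is pure quasifree, without assuming quasifreeness a priori). Second, for the existence assertion of the present lemma your argument starts from $\omega_{U} (A) := \omega \big( \mathds{U}_{U} A \, \mathds{U}_{U}^* \big)$ with a state $\omega$ that is not yet known to exist. You should instead \emph{define} $\omega (A) := \left< \mathds{U}_{U} \vac, A \, \mathds{U}_{U} \vac \right>$ and verify $\tgamma_{\omega} = U P U^* = \tgamma$ from the block identity your $u$ and $v$ are engineered to satisfy; this is a harmless reorganization, but it is needed for the logic of the existence claim, after which uniqueness indeed follows from the fact that a quasifree state is determined by its generalized 1-pdm.
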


This lemma is a consequence of Theorems~2.3 and 2.6 of \cite{BLS}.

\begin{proof}
From \cite[Theorem~2.3]{BLS} we conclude that, for every generalized 1-pdm $\tgamma$ , there is a unique quasifree state $\omega \in \cZ^-$ having $\tgamma$ as its generalized 1-pdm. On the one hand, \cite[Theorem~2.6]{BLS} implies that this quasifree state is pure since the corresponding generalized 1-pdm is a projection. This proves the first assertion of the lemma.

On the other hand, \cite[Theorem~2.6]{BLS} also states that the generalized 1-pdm of a pure quasifree state is a projection which is the second assertion and completes the proof.
\end{proof}

There is even a one-to-one relation between pure quasifree states and generalized 1-pdms fulfilling \eqref{eq: ferm proj}. 

\begin{lemma}\label{lem: ferm uniqueness proj to pqf}
Let $\omega \in \cZ^-$. If the generalized 1-pdm $\tgamma$ corresponding to the state $\omega$ satisfies
\begin{align}
\tgamma^2 = \tgamma\label{eq: ferm proj to pqf},
\end{align}
then $\omega$ is a pure quasifree state.
\end{lemma}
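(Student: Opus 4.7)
The plan is to mirror the strategy used in the boson case (Lemma~\ref{lem: bos uniqueness quasiproj to cpqf}): construct a unitarily implementable fermion Bogoliubov transformation $U$ such that the transformed state $\omega_U(\cdot) := \omega\bigl(\mathds{U}_U \cdot \mathds{U}_U^*\bigr)$ has the trivial generalized one-particle density matrix $\tgamma_U = \left(\begin{smallmatrix} 0 & 0 \\ 0 & \1 \end{smallmatrix}\right)$, then identify $\omega_U$ with the vacuum state, and finally conclude via Remark~\ref{rem: ferm pqf Bogoliubov} that $\omega$ itself is pure quasifree.

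For the construction of $U$ I would work in two steps. First, I invoke the standard diagonalization result (Theorem~2.3 of \cite{BLS}, already used in the proof of Lemma~\ref{lem: decomposition qf}) to obtain a Bogoliubov transformation $U_1$ after which $\tgamma_{U_1} = \left(\begin{smallmatrix} \gamma' & 0 \\ 0 & \1 - \ol{\gamma'} \end{smallmatrix}\right)$ with $0 \leq \gamma' \leq \1_{\fh}$. By Lemma~\ref{lem: ferm transformed gen 1pdm} the projection property $\tgamma^2 = \tgamma$ is preserved, so $\gamma'^2 = \gamma'$; combined with $\tr_{\fh}(\gamma') = \tr_{\fh}(\gamma) < \infty$ this forces $\gamma'$ to be the orthogonal projection $P_S$ onto a finite-dimensional subspace $\fh_S \subset \fh$. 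After adapting the ONB so that $\fh_S$ is spanned by basis vectors (and hence $\ol{P_S} = P_S$), I apply a particle--hole Bogoliubov transformation $U_2$ with block entries $u_2 := P_{\fh_S^\perp}$ and $v_2 := P_S$: the four relations of Definition~\ref{def: ferm Bogoliubov} are readily verified, and the finite rank of $v_2$ gives the Shale--Stinespring condition for free. Composing $U := U_1 U_2$ then produces $\tgamma_U = \left(\begin{smallmatrix} 0 & 0 \\ 0 & \1 \end{smallmatrix}\right)$, as a short block-matrix computation confirms.

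It remains to show that any state $\omega' \in \cZ^-$ with vanishing 1-pdm $\gamma_{\omega'} = 0$ is the vacuum state. Writing the expectation value in terms of the density matrix $\rho'$ of $\omega'$, the identity $\omega'(\fc_n \fa_n) = \|\fa_n \rho'^{1/2}\|_{\mathrm{HS}}^2 = 0$ (which uses $\fc_n = \fa_n^*$ and cyclicity of the trace) yields $\fa_n \rho'^{1/2} = 0$ for every basis vector. The range of $\rho'^{1/2}$ therefore lies in $\bigcap_n \ker(\fa_n) = \CC\,\vac$, so $\rho'^{1/2}$ is a self-adjoint rank-one operator with range $\CC\,\vac$; the unit-trace condition pins it down to $\rho' = |\vac\rangle\langle\vac|$. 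The identity $\omega(A) = \omega'(\mathds{U}_U^* A \,\mathds{U}_U) = \langle \mathds{U}_U\vac, A\,\mathds{U}_U\vac\rangle$ together with Remark~\ref{rem: ferm pqf Bogoliubov} then completes the proof. The most delicate point I expect is the particle--hole step: one must check that the ONB can be arranged so that $\fh_S$ is stable under complex conjugation, and that $U_2$ truly is a fermion Bogoliubov transformation in the sense of Definition~\ref{def: ferm Bogoliubov}; the remainder is essentially an algebraic consequence of the positivity of $\omega$ and the CAR.
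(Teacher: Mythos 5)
Your proof is correct, and it reaches the paper's intermediate target --- a unitarily implementable Bogoliubov transformation $U$ with $\tgamma_U = \left(\begin{smallmatrix} 0 & 0 \\ 0 & \1 \end{smallmatrix}\right)$, followed by identification of $\omega_U$ with the vacuum --- by a genuinely different route. The paper does not invoke the diagonalization theorem at this point: it builds $U$ explicitly from the data of $\tgamma$ itself, taking $u = (\1-\gamma)^{1/2}$ and $v = \alpha(\1-\ol{\gamma})^{-1/2}P^{\perp} + P$ (with $P$ the eigenprojection of $\gamma$ for the eigenvalue $1$), and verifies the relations of Definition~\ref{def: ferm Bogoliubov} directly from the identities $\alpha\ol{\gamma}=\gamma\alpha$, $\alpha\alpha^*=\gamma-\gamma^2$, $\alpha^*\alpha=\ol{\gamma}-\ol{\gamma}^2$ implied by $\tgamma^2=\tgamma$; the Shale--Stinespring condition then comes out as $\tr(v^*v)=\tr(\gamma)<\infty$. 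Your two-step construction (diagonalize via \cite[Theorem~2.3]{BLS}, observe that the projection property forces $\gamma'$ to be a finite-rank projection $P_S$, then apply the particle--hole transformation $u_2=P_{\fh_S^{\perp}}$, $v_2=P_S$) is shorter and offloads the hard operator identities onto the cited theorem, at the price of depending on that external result and on the basis adaptation needed to ensure $\ol{P_S}=P_S$; the paper's construction is self-contained and keeps the eigenvalue-$1$ subtlety visible in the formula for $v$. Your vacuum-identification step is also different and arguably cleaner: from $\omega_U(\fc_n\fa_n)=\|\fa_n\rho_U^{1/2}\|_{\mathrm{HS}}^2=0$ you get $\mathrm{Ran}(\rho_U^{1/2})\subseteq\bigcap_n\ker(\fa_n)=\dC\,\vac$, whereas the paper expands $\rho_U$ in the occupancy number basis and kills all coefficients $\mu_{K,L}$ with $K\neq(0,0,\dots)$ by Cauchy--Schwarz; both are valid. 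One small inaccuracy: the trace of the one-particle block is \emph{not} invariant under Bogoliubov transformations, so your claim $\tr_{\fh}(\gamma')=\tr_{\fh}(\gamma)$ is false in general (a particle--hole transformation already violates it); what you actually need, and what the cited diagonalization theorem guarantees, is merely $\tr_{\fh}(\gamma')<\infty$, so the argument survives with the justification corrected.
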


\begin{proof}
Let $\1 \equiv \1_{\fh}$ and $\tgamma = \left( \begin{smallmatrix} \gamma & \alpha \\ \alpha^* & \1 - \ol{\gamma} \end{smallmatrix} \right)$ be the generalized 1-pdm of $\omega$. Eq.~\eqref{eq: ferm proj to pqf} implies $\alpha \, \ol{\gamma} = \gamma \, \alpha$, $\alpha^* \gamma = \ol{\gamma} \, \alpha^*$, $\alpha \, \alpha^* = \gamma - \gamma^2$, and $\alpha^* \alpha = \ol{\gamma} - \ol{\gamma}^2$. We denote by $\left\{ \lambda_i \right\}_{i=1}^{\infty}$ the eigenvalues of the 1-pdm $\gamma$ (counting also degeneracies) and choose the corresponding eigenfunctions $\phi_i \in \fh$, $i \in \NN$, in such a way that $\left\{ \phi_i \right\}_{i=1}^{\infty}$ is an ONB of the one-particle Hilbert space $\fh$. Furthermore, let $P : \, \fh \ra \fh$ be the orthogonal projection on the eigenspace of the eigenvalue 1 of $\gamma$ and $P^{\perp} := \1_{\cF} - P$ the projection orthogonal to $P$. Note that both projections commute with the 1-pdm and that $P$ is also the projection on the eigenspace of the eigenvalue 1 of $\ol{\gamma}$. Furthermore, from $\alpha \, \ol{\gamma} = \gamma \, \alpha$ we obtain $\alpha \, P \, \fh \subseteq P \, \fh$ and $\alpha \, P^{\perp} \fh \subseteq P^{\perp} \fh$. So, $P$ and $P^{\perp}$ commute with $\gamma,\ \ol{\gamma}, \ \alpha$, and $\alpha^*$.

We define $\left( \1 - \gamma \right)^{\frac{1}{2}}$ and $\left( \1 - \ol{\gamma} \right)^{-\frac{1}{2}} P^{\perp}$ by
\begin{align*}
\left( \1 - \gamma \right)^{\frac{1}{2}} \phi_i = \left( \1 - \lambda_i \right)^{ \frac{1}{2}} \phi_i \ \text{and} \left( \1 - \gamma \right)^{-\frac{1}{2}} P^{\perp} \phi_i = \left( 1 - \lambda_i \right)^{-\frac{1}{2}} P^{\perp} \phi_i
\end{align*}
and consider a Bogoliubov transformation $U : \, \fh \oplus \fh \ra \fh \oplus \fh, \ U = \left( \begin{smallmatrix} u & v \\ \ol{v} & \ol{u} \end{smallmatrix} \right)$, given by
\begin{align*}
u := \left( \1 - \gamma \right)^{\frac{1}{2}} \qquad \text{and} \qquad v := \alpha \left( \1 - \ol{\gamma} \right)^{-\frac{1}{2}} P^{\perp} + P.
\end{align*}
First we show that $U$ is indeed a Bogoliubov transformation, \ie, that (\ref{eq: def ferm Bogoliubov}a,b) hold. The operators $u$ and $v$ satisfy
\begin{align*}
u u^* + v v^* &= \left( \1 - \gamma \right) + \left[ \alpha \left( \1 - \ol{\gamma} \right)^{-\frac{1}{2}} P^{\perp} + P \right] \left[ \left( \1 - \ol{\gamma} \right)^{-\frac{1}{2}} P^{\perp} \alpha^* + P \right] \notag \\
&= \1 - \gamma + \alpha \left( \1 - \ol{\gamma} \right)^{-1} P^{\perp} \alpha^* + P + \alpha \left( \1 - \ol{\gamma} \right)^{-\frac{1}{2}} P^{\perp} P + P \left( \1 - \ol{\gamma} \right)^{-\frac{1}{2}} P^{\perp} \alpha^*.
\end{align*}
With $\alpha^* \gamma = \ol{\gamma} \, \alpha^*$ and $\alpha \, \alpha^* = \gamma - \gamma^2$, we have
\begin{align*}
u u^* + v v^* = \1 - \gamma + \alpha \, \alpha^* \left( \1 - \gamma \right)^{-1} P^{\perp} + \gamma P = \1.
\end{align*}
$u^* u + v^T \ol{v} = \1$ can be shown analogously. Moreover, using $\left( \1 - \gamma \right)^{\frac{1}{2}} P = 0$ and $\left( \1 - \ol{\gamma} \right)^{\frac{1}{2}} P = 0$,
\begin{align*}
u^* v + v^T \ol{u} &= \left( \1 - \gamma \right)^{\frac{1}{2}} \left[ \alpha \left( \1 - \ol{\gamma} \right)^{-\frac{1}{2}} P^{\perp} + P \right] + \left[ \left( \1 - \ol{\gamma} \right)^{-\frac{1}{2}} P^{\perp} \alpha^T + P \right] \left( \1 - \ol{\gamma} \right)^{\frac{1}{2}} \notag \\
&= \alpha \, P^{\perp} + \left( \1 - \gamma \right)^{\frac{1}{2}} P - P^{\perp} \alpha  + P \left( \1 - \ol{\gamma} \right)^{\frac{1}{2}} \notag \\
&=0.
\end{align*}
$u v^T + v u^T = 0$ is obtained similarly. Since, furthermore, the (operator valued) entries on the diagonal of the matrix $U$, and, due to $\alpha^* = - \ol{\alpha}$, those on the off-diagonal as well, are complex conjugate to each other, $U$ is a fermion Bogoliubov transformation according to Definition~\ref{def: ferm Bogoliubov}. 

The Bogoliubov transformation $U$ has a unitary representation $\mathds{U}$ because
\begin{align*}
\tr \left( v^* v \right) &= \tr \left( \left[ \left( \1 - \ol{\gamma} \right)^{-\frac{1}{2}} P^{\perp} \alpha^* + P \right] \left[ \alpha \left( \1 - \ol{\gamma} \right)^{-\frac{1}{2}} P^{\perp} + P \right] \right) \notag \\
                         &= \tr \left( \left( \1 - \ol{\gamma} \right)^{-\frac{1}{2}} \ol{\gamma} \left( \1 - \ol{\gamma} \right) \left( \1 - \ol{\gamma} \right)^{-\frac{1}{2}} P^{\perp} + P \right)
\end{align*}
due to $\alpha^* \alpha = \ol{\gamma} - \ol{\gamma}^2$ and, thus,
\begin{align*}
\tr \left( v^* v \right) &= \tr \left( \ol{\gamma} P^{\perp} + P \right) = \tr \left( \gamma \right) = \omega ( \hNN ) < \infty.
\end{align*}

We define a state $\omega_{U} \in \cZ^-$ by $\omega_{U} ( A ) := \omega \big( \mathds{U} \, A \, \mathds{U}^* \big)$ for any $A \in \cA^-$ and denote its density matrix by $\rho_{U}$. We show that the corresponding generalized 1-pdm $\tgamma_{U}$ is given by
\begin{align*}
\tgamma_{U} = \begin{pmatrix} 0 & 0 \\ 0 & \1 \end{pmatrix}.
\end{align*}
Consequently,
\begin{align}
\gamma_{U} = 0 \label{eq: ferm vanishing 1pdm}
\end{align}
and the transformed generalized 1-pdm $\tgamma_{U}$ is a projection. 

By Lemma~\ref{lem: ferm transformed gen 1pdm}, the Bogoliubov transformation $U$ yields $\tgamma_{U} = U^* \, \tgamma \, U$. So, $U$ satisfies
\begin{align*}
\tgamma = U \begin{pmatrix} 0 & 0 \\ 0 & \1 \end{pmatrix} U^* = \begin{pmatrix} v v^* & v u^T \\ \ol{u} v^* & \ol{u} u^T \end{pmatrix},
\end{align*}
that is $\gamma = v v^*$ and $\alpha = v u^T$. This is indeed the case since
\begin{align*}
v v^* &= P + \alpha \left( \1 - \ol{\gamma} \right)^{-1} P^{\perp} \alpha^* = \gamma P + \alpha \, \alpha^* \left( \1 - \gamma \right)^{-1} P^{\perp} = \gamma P + \gamma P^{\perp} = \gamma, \\
v u^T &= \alpha \left( \1 - \ol{\gamma} \right)^{-\frac{1}{2}} P^{\perp} \left( \1 - \ol{\gamma} \right)^{\frac{1}{2}} + \alpha \left( \1 - \ol{\gamma} \right)^{-\frac{1}{2}} P^{\perp} P = \alpha P^{\perp} = \alpha.
\end{align*} 
Here, we used $\alpha P = 0$ which follows from $P \alpha^* \alpha P = P \left( \ol{\gamma} - \ol{\gamma}^2 \right) P = 0$.

Let $\left\{ \varphi_n \right\}_{n=1}^{\infty}$ denote an arbitrary ONB of $\fh$. We define
\begin{align*}
\cK := \left\{ k \in \NN \ra \left\{ 0,1 \right\} \Big| \, \exists n_0 \in \NN \ \forall n \geq n_0 : \, k_n = 0 \right\}.
\end{align*}
The elements $K \in \cK$ are the occupancy number representations of the fermion Fock space. If we define
\begin{align*}
\fc_K := \prod\limits_{\substack{n=1\\K_n \neq 0}}^{\infty} \fc_n
\end{align*}
for any $K \in \cK$, the functions $\Psi_{\left(0,0,\dots\right)} = \vac$ and $\Psi_K \in \cF^-$, given by $\Psi_K := \fc_K \vac$ for $K \in \cK, \ K \neq \left( 0,0,\dots \right)$, form an ONB of the fermion Fock space. Furthermore, for any $n \in \NN$ and any $K \in \cK$ with $K_n = 1$, we write $K \setminus \left\{ n \right\}$ for the set where the particle in the state given by $\varphi_n$ is removed, but the others are left unchanged. Now, we write the density matrix corresponding to $\omega_{U}$ as
\begin{align*}
\rho_{U} = \sum\limits_{K,L \in \cK} \mu_{K,L} \ket{\Psi_K} \bra{\Psi_L},
\end{align*} 
where the coefficients are given by $ \mu_{K,L} := \omega_{U} \big( \ket{\Psi_L} \bra{\Psi_K} \big) \in \dC$ for any sets $K,L \in \cK$. Applying the Cauchy--Schwarz inequality, we obtain for every $n \in \NN$ and every pair $K,L \in \cK$ with $K_n = 1$
\begin{align*}
\abs{ \mu_{K,L} }^2 &= \abs{\omega_{U} \big( \fc_L \ket{\vac} \bra{\vac} \fa_{K \setminus \left\{ n \right\}} \fa_n \big)}^2 \leq \omega_{U} \big( \fc_L \ket{\vac} \bra{\vac} \fa_{K \setminus \left\{ n \right\}} \fc_{K \setminus \left\{ n \right\}} \ket{\vac} \bra{\vac} \fa_L \big) \, \omega_{U} \big( \fc_n \fa_n \big).
\end{align*}
By \eqref{eq: ferm vanishing 1pdm}, $\omega_{U} \big( \fc_n \fa_n \big) = \left< \varphi_n, \gamma_{U} \, \varphi_n \right> = 0$ for every $n \in \NN$ and $\mu_{K,L} = 0$ if one of the sets $K,L \in \cK$ is not $\left( 0,0,\dots \right)$. Hence, for any $A \in \cA^-$
\begin{align*}
\omega_{U} ( A ) = \left< \vac, A\, \vac \right>.
\end{align*}
Since $U$ is a Bogoliubov transformation with unitary implementation $\mathds{U}$ and invertible, we obtain for any $A \in \cA^-$
\begin{align*}
\omega ( A ) &= \omega_{U} \big( \mathds{U}^* A \, \mathds{U} \big) = \left< \mathds{U} \, \vac, A \, \mathds{U} \, \vac \right>.
\end{align*}
Therefore, the state $\omega$ is pure and quasifree by Remark~\ref{rem: ferm pqf Bogoliubov} which yields the assertion.
\end{proof}

From the last two lemmas we conclude:

\begin{theorem}
Let $\omega \in \cZ^-$ be a state. Then the following statements are equivalent:
\begin{enumerate}
\item[(i)] $\omega$ is a pure quasifree state.
\item[(ii)] The generalized 1-pdm $\tgamma$ of the state $\omega$ satisfies
\begin{align*}
\tgamma^2 = \tgamma. 
\end{align*}
\end{enumerate}
\end{theorem}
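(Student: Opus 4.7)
The plan is straightforward: this theorem is precisely the packaging of the two preparatory lemmas that have just been established, so I would prove it as an immediate corollary rather than develop new machinery.

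For the implication \emph{(i)} $\Rightarrow$ \emph{(ii)}, I would invoke the second assertion of Lemma~\ref{lem: ferm relation proj to qf}, which states directly that the generalized one-particle density matrix $\tgamma$ of any pure quasifree state $\omega \in \cZ^-$ is a projection, i.e., $\tgamma^2 = \tgamma$. Nothing further needs to be checked here, since every state in $\cZ^-$ has $\omega(\hNN) < \infty$, so the trace-class condition on $\gamma$ mentioned in Lemma~\ref{lem: ferm relation proj to qf} is automatic.

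For the converse \emph{(ii)} $\Rightarrow$ \emph{(i)}, I would apply Lemma~\ref{lem: ferm uniqueness proj to pqf} directly: the hypothesis $\tgamma^2 = \tgamma$ together with $\omega \in \cZ^-$ already matches the assumptions of that lemma, whose conclusion is that $\omega$ itself (and not merely some state sharing the same generalized 1-pdm) is pure quasifree. This is the nontrivial content, but it has been carried out in the preceding proof by constructing an explicit Bogoliubov transformation $U$ with $\tgamma_U = 0 \oplus \1_\fh$ and then showing, via the occupancy-number basis and the vanishing of $\gamma_U$, that $\omega_U$ must be the vacuum state.

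Since both directions are covered by the two lemmas already proved in Section~\ref{sec: Fermion-qf-pdm}, I do not expect any obstacle; the proof amounts essentially to one line in each direction. The only thing worth remarking in the write-up is that the uniqueness part of Lemma~\ref{lem: ferm relation proj to qf} guarantees that the pure quasifree state produced in \emph{(ii)} $\Rightarrow$ \emph{(i)} coincides with the original $\omega$, which rules out the \emph{a priori} possibility of a non-quasifree state sharing a projection-type generalized 1-pdm with a quasifree one.
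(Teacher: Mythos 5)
Your proposal is correct and follows exactly the paper's own argument: the implication \emph{(i)} $\Rightarrow$ \emph{(ii)} is read off from the second assertion of Lemma~\ref{lem: ferm relation proj to qf}, and \emph{(ii)} $\Rightarrow$ \emph{(i)} is precisely Lemma~\ref{lem: ferm uniqueness proj to pqf}. The additional remarks on the automatic finiteness of $\tr(\gamma)$ and on uniqueness are harmless and consistent with the text.
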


\begin{proof}
The implication {\em{(i)}} $\Rightarrow$ {\em{(ii)}} is given by the second assertion of Lemma~\ref{lem: ferm relation proj to qf} and the reverse by Lemma~\ref{lem: ferm uniqueness proj to pqf}.
\end{proof}






\appendix
\section{Generalized 2-pdm as $7 \times 7$-Matrix}\label{sec: appendix}

In this appendix, we give a more explicit, basis dependent form of the generalized 2-particle density matrix $\hGamma$. We assume $\left\{ \phi_k \right\}_{k=1}^{\infty}$ to be a fixed, but arbitrary ONB of $\fh$. Recall that $\hGamma$ is defined as a $7 \times 7$-matrix on $\fh_{\mathrm{sim}}$ in Definition~\ref{def: bos gen 2pdm}. In order to simplify notation, it is convenient to define some operators and functionals. 

\begin{definition}
Let $f_1, f_2, g_1, g_2, f, g \in \fh$, and $\mu \in \dC$. We define $\cD (B) := \big\{ F \in \fh \otimes \fh \big| \, \sum_{k=1}^{\infty} \left< \phi_k \otimes \phi_k, F \right>_{\fh \otimes \fh} < \infty \big\} \subseteq \fh \otimes \fh$ and the following linear maps:
\begin{align*}
&\Lambda_1 : \, \fh \otimes \fh \to \fh \otimes \fh \ , \ \left< g_1 \otimes g_2, \Lambda_1 \left( f_1 \otimes f_2 \right) \right> := \omega \left( \bc ( f_1 ) \bc ( f_2 ) \bc ( \ol{g}_2 ) \ba ( g_1 ) \right),\\
&\Lambda_2^*: \, \fh \otimes \fh \to \fh \otimes \fh \ , \ \left< g_1 \otimes g_2, \Lambda_2^* \left( f_1 \otimes f_2 \right) \right> := \omega \left( \bc ( f_1 ) \bc ( f_2 ) \bc ( \ol{g}_2 ) \bc ( \ol{g}_1 ) \right),\\
&\Delta : \, \fh \otimes \fh \to \fh \otimes \fh \ , \ \left< g_1 \otimes g_2, \Delta \left( f_1 \otimes f_2 \right) \right> := \omega \left( \bc ( f_1 ) \bc ( \ol{g}_1 ) \ba ( g_2 ) \ba ( \ol{f}_2 ) \right),\\
&A_1: \, \fh \otimes \fh \to \fh \ , \ \left< g, A_1 \left( f_1 \otimes f_2 \right) \right> := \omega \left( \bc ( f_1 ) \bc ( f_2 ) \ba ( g ) \right),\\
&A_2^* : \, \fh \otimes \fh \to \fh \ , \ \left< g, A_2^* \left( f_1 \otimes f_2 \right) \right> := \omega \left( \bc ( f_1 ) \bc ( f_2 ) \bc ( \ol{g} ) \right),\\
&Q_1 : \, \fh \otimes \fh \to \fh \ , \ \left< g, Q_1 \left( f_1 \otimes f_2 \right) \right> := \omega \left( \bc ( f_1 ) \ba ( \ol{f}_2 ) \ba ( g ) \right),\\
&Q_2 : \, \fh \otimes \fh \to \fh \ , \ \left< g, Q_2 \left( f_1 \otimes f_2 \right) \right> := \omega \left( \bc ( f_1 ) \ba ( \ol{f}_2 ) \bc ( \ol{g} ) \right),\\
&B : \, \cD (B) \to \fh \otimes \fh \ ,\ B:= \sum\limits_{i,k=1}^{\infty} \left| \phi_i \otimes \phi_i \right> \left< \phi_k \otimes \phi_k \right|,\\
&\beta_2: \, \dC \otimes \fh \to \fh \ , \ \beta_2 := \sum\limits_{i=1}^{\infty} \left| \phi_i \right> \left< 1 \otimes \phi_i \right|,\\
&\beta_1 :\, \cD (B) \to \dC \ , \ \beta_1 := \sum\limits_{i=1}^{\infty} \left< \phi_i \otimes \phi_i \right|.
\end{align*}
\end{definition}

Furthermore, recall from Remark~\ref{rem: bos further gen 1pdm matrix} that $b$ is given by $\left< b , f \right>_{\fh} := \omega \left( \bc ( f ) \right)$.

As we already pointed out in Proposition~\ref{prop: gen 2-pdm is selfadjoint and nonnegative}, the generalized 2-pdm is selfadjoint. Since therefore $\hGamma_{ij} = \ol{\hGamma}_{ji}$, it suffices to state the entries $\hGamma_{ij}$ for $1 \leq i \leq j \leq 7$. Using the notation specified before, we have:
\allowdisplaybreaks
\begin{align*}
&\hGamma_{11} = \Gamma : \, \fh \otimes \fh \ra \fh \otimes \fh,\\
&\hGamma_{12} = \Lambda_1^* : \, \fh \otimes \fh \ra \fh \otimes \fh,\\
&\hGamma_{13} = \Lambda_1^* \Ex + \left( \alpha \otimes \1 \right) B : \, \fh \otimes \fh \ra \fh \otimes \fh,\\
&\hGamma_{14} = \Lambda_2 : \, \fh \otimes \fh \ra \fh \otimes \fh,\\
&\hGamma_{15} = A_1^* : \, \fh \ra \fh \otimes \fh,\\
&\hGamma_{16} = A_2 : \, \fh \ra \fh \otimes \fh,\\
&\hGamma_{17} = \left( \alpha \otimes \1 \right) \beta_1^* : \, \dC \ra \fh \otimes \fh,\\
&\hGamma_{22} = \Ex \Delta + \gamma \otimes \1 : \, \fh \otimes \fh \ra \fh \otimes \fh,\\
&\hGamma_{23} = \Delta^* + \left(\gamma \otimes \1 \right) \left(  B + \Ex \right) : \, \fh \otimes \fh \ra \fh \otimes \fh,\\
&\hGamma_{24} = \Ex \ol{\Lambda}_1 + \left( \alpha \otimes \1 \right) \left(1+\Ex \right) : \, \fh \otimes \fh \ra \fh \otimes \fh,\\
&\hGamma_{25} = Q_1^* : \, \fh \ra \fh \otimes \fh,\\
&\hGamma_{26} = Q_2^* : \, \fh \ra \fh \otimes \fh,\\
&\hGamma_{27} = \left( \1 \otimes \gamma \right) \beta_1^* : \, \dC \ra \fh \otimes \fh,\\
&\hGamma_{33} = \Delta \Ex + \left( \1 \otimes \gamma \right) B + B \left( \1 \otimes \gamma \right) + B + \1 \otimes \gamma : \, \fh \otimes \fh \ra \fh \otimes \fh,\\
&\hGamma_{34} = \ol{\Lambda}_1 + \left( \1 \otimes \alpha \right) \left( 1 + \Ex \right) + B \left( \1 \otimes \alpha \right): \, \fh \otimes \fh \ra \fh \otimes \fh,\\
&\hGamma_{35} = \Ex Q_1^* + \beta_1^* b^* : \, \fh \ra \fh \otimes \fh,\\
&\hGamma_{36} = \Ex Q_2^* + \beta_1^* \ol{b}^* : \, \fh \ra \fh \otimes \fh,\\
&\hGamma_{37} = \left( \1 \otimes \1 + \1 \otimes \gamma \right) \beta_1^* : \, \dC \ra \fh \otimes \fh,\\
&\hGamma_{44} = \Gamma^T + \left( \1 \otimes \1 + \ol{\gamma} \otimes \1 + \1 \otimes \ol{\gamma} \right) \left( 1 + \Ex \right) : \, \fh \otimes \fh \ra \fh \otimes \fh,\\
&\hGamma_{45} = \ol{A}_2 : \, \fh \ra \fh \otimes \fh,\\
&\hGamma_{46} = \ol{A}_1^* + \left( 1+\Ex \right) \left( \1 \otimes b \right) \beta_2^* : \, \fh \ra \fh \otimes \fh,\\
&\hGamma_{47} = \left( \alpha^* \otimes \1 \right) \beta_1^* : \, \dC \ra \fh \otimes \fh,\\
&\hGamma_{55} = \gamma : \, \fh \ra \fh,\\
&\hGamma_{56} = \alpha : \, \fh \ra \fh,\\
&\hGamma_{57} = b : \, \dC \ra \fh,\\
&\hGamma_{66} = \1 + \ol{\gamma} : \, \fh \ra \fh,\\
&\hGamma_{67} = \ol{b} : \, \dC \ra \fh,\\
&\hGamma_{77} = 1 : \, \dC \ra \dC.
\end{align*}

\begin{remark}
Both generalizations of the 1-pdm, given in Definitions~\ref{def: bos gen 1-pdm} and \ref{def: bos further gen 1-pdm}, are contained in the generalized 2-pdm, namely
\begin{align*}
\tgamma = \begin{pmatrix} \hGamma_{55} & \hGamma_{65} \\ \hGamma_{56} & \hGamma_{66} \end{pmatrix} \quad \text{and} \quad \hgamma = \begin{pmatrix} \hGamma_{55} & \hGamma_{65} & \hGamma_{75} \\ \hGamma_{56} & \hGamma_{66} & \hGamma_{76} \\ \hGamma_{57} & \hGamma_{67} & \hGamma_{77} \end{pmatrix}.
\end{align*}
\end{remark}

\bibliographystyle{plain}
\bibliography{Lit_ArtikelJMP}

\end{document}